\definecolor{darkgreen}{rgb}{0.0, 0.6, 0.}
\newtheorem{observation}{Observation}
\newtheorem{theorem}{Theorem}
\newtheorem{proposition}{Proposition}
\newtheorem{lemma}{Lemma}
\newtheorem{definition}{Definition}
\newcommand{\revisionone}[1]{#1}
\definecolor{DeepPurple}{RGB}{61, 0, 249}
\newcommand{\revisiontwo}[1]{{#1}}
\newcommand{\qt}[1]{\left(#1\right)}\newcommand{\set}[1]{\left\{#1\right\}}
\newcommand{\ex}[1]{\exp\qty(#1)}
\newcommand{\wh}[1]{\widehat{#1}}\newcommand{\tl}[1]{\widetilde{#1}}
\newcommand{\E}{\mathbb{E}}\newcommand{\R}{{\mathbb R}}\newcommand{\T}{{\mathbb T}}\newcommand{\Z}{{\mathbb Z}}
\newcommand{\Dd}{\mathcal{D}}\newcommand{\Hh}{\mathcal{H}}\newcommand{\Kk}{\mathcal{K}}\newcommand{\Ll}{\mathcal{L}}\newcommand{\Nn}{\mathcal{N}}\newcommand{\Tt}{{\mathcal T}}\newcommand{\Ww}{{\mathcal W}}\newcommand{\Uu}{{\mathcal U}}
\newcommand{\rmm}[1]{\mathrm{#1}}
\newcommand{\qb}[1]{\left[#1\right]}
\newcommand{\x}{{\bm{x}}}
\newcommand{\y}{{\bm{y}}}
\renewcommand{\d}{\,\mathrm{d}}\newcommand{\dx}{\,\mathrm{d} x}
\newcommand\restr[2]{{\left.\kern-\nulldelimiterspace #1\vphantom{\big|} \right|_{#2}}}
\renewcommand{\b}[1]{{\bm{#1}}}
\newcommand{\xxi}{{\bm{\xi}}}
\newcommand{\br}[1]{\left\langle#1\right\rangle}
\renewcommand{\b}[1]{{\bm{#1}}}
\newcommand{\om}{\bm{\omega}}
\newcommand{\blind}{1}
\newenvironment{aligneq}{\begin{equation}\begin{aligned}}{\end{aligned}\end{equation}}
\begin{document}

\bibliographystyle{apalike}

\def\spacingset#1{\renewcommand{\baselinestretch}%
    {#1}\small\normalsize} \spacingset{1}


\if1\blind
    {
        \title{\bf A parameterization of anisotropic Gaussian fields with penalized complexity priors}
        \author[1]{L. Llamazares-Elias\thanks{Corresponding author: L.S.Llamazares-Elias@sms.ed.ac.uk}}
        \affil[1]{School of Mathematics \& Maxwell Institute for Mathematical Sciences, University of Edinburgh, United Kingdom}
        \author[2]{J. Latz}
        \affil[2]{Department of Mathematics, University of Manchester, United Kingdom}
        \author[1]{F. Lindgren}
        \maketitle
    } \fi

\if0\blind
    {
        \bigskip
        \bigskip
        \bigskip
        \begin{center}
            {\LARGE\bf Anisotropic Gaussian random fields with identifiable parameters and penalized-complexity priors}
        \end{center}
        \medskip
    } \fi

\bigskip

\begin{abstract}
    Gaussian random fields (GFs) are fundamental tools in spatial modeling and can be represented flexibly and efficiently as solutions to stochastic partial differential equations (SPDEs). The SPDEs depend on specific parameters, which enforce various field behaviors and can be estimated using Bayesian inference. However, even under in-fill asymptotics, the likelihood only provides limited insights into the covariance structure. In response, it is essential to leverage priors to achieve appropriate, meaningful covariance structures in the posterior. This study introduces a smooth, invertible parameterization of the correlation length and diffusion matrix of an anisotropic GF and constructs penalized complexity (PC) priors for the model when the parameters are constant in space. The formulated prior is weakly informative, effectively penalizing complexity by pushing the correlation range toward infinity and the anisotropy to zero.
\end{abstract}

\noindent%
{\it Keywords:} Anisotropy, Bayesian, Penalized Complexity, Prior, Spatial Modelling, Stochastic Partial Differential Equations.
\vfill

\newpage
\spacingset{1.9} 
\section{Introduction}

Gaussian random fields (GFs) are widely used to model spatial phenomena \citep{banerjee2003hierarchical,bhatt2015effect,wang2021spatial,rue2005gaussian} while accounting for the uncertainty that may arise due to measurement error, model misspecification, or incomplete information. The prevalence of GFs is due to the fact that they are well understood theoretically, verify desirable properties, and are easily characterized -- they are entirely specified by their mean and covariance \citep{cramer1967stationary,adler2007random,rasmussen2003gaussian}.

A convenient way of representing certain GFs is as solutions to stochastic partial differential equations (SPDEs). This representation allows for a physical interpretation to be assigned to the parameters of the equation. Furthermore, it allows for computationally efficient inference, prediction, and uncertainty quantification using a finite element (FEM) approximation of the field \citep{Lindgren2011AnEL,simpson2012think}.

In the literature, a common choice is to model using isotropic fields. That is, the correlation of the field at two locations depends only on the distance between said locations. While this may be an appropriate assumption in some cases, in others, it is inadvisable. \revisionone{This limitation can be overcome by introducing additional parameters to model the anisotropy present in the field \citep{higdon1999non, paciorek2006spatial,fuglstad2015exploring}}. In the following, we consider the semi-parametric estimation of the random field and its anisotropy parameters. The existing work leaves us with two significant challenges:

\begin{enumerate}
    \item \revisionone{The anisotropy parameterizations in \cite{higdon1999non}, \cite{paciorek2006spatial}, and \cite{fuglstad2015exploring}} are non-identifiable as they have multiple parameter combinations for each anisotropy matrix, leading to a multi-modal likelihood, making them unsuitable as general parameterizations. \revisionone{Furthermore, as a consequence of the lack of identifiability, under the aforementioned parameterizations, it is not possible to recover the parameters in a continuous fashion. This is especially problematic when the model parameters are extended to be spatially varying.} 
    \item Given that not all parameters can be recovered under in-fill asymptotics \citep{zhang2004inconsistent},  the \emph{choice of prior distribution} on the parameters of the model may signific\-antly impact the posterior distribution. As a result, suitable priors need to be defined.
\end{enumerate}
To address these issues, we make the following contributions:
\begin{enumerate}
    \item \textbf{An Identifiable Parameterization}: We present a  smooth and invertible parameterization of the anisotropy that preserves parameter interpretability.\label{contribution 1}
    \item \textbf{Prior definition}: We construct penalized complexity (PC) priors \citep{Simpson2014PenalisingMC,fuglstad2019constructing}  for the parameters in the model. An additional benefit of this construction is that it avoids overfitting by favoring simpler base models.\label{contribution 2}
    \item \textbf{Validation and prediction}: We conduct a simulation study that shows that PC priors outperform ``non-informative'' priors. We then use the derived model to study precipitation in Norway and show that the anisotropic model outperforms the isotropic one in the presence of limited information.
\end{enumerate}

The outline of the work is as follows. In \Cref{model section}, we introduce and motivate our anisotropic model. In \Cref{parameterization section}, we address Contribution \ref{contribution 1} and also construct a transformation that renders the parameters a Gaussian vector, providing the convenience of working with Gaussian random variables. In \Cref{priors section}, we present Contribution \ref{contribution 2}. Next, in \Cref{simulation section}, we conduct a simulation study of the designed PC priors and compare the results with those of other possible priors on the parameters. In \Cref{precipitation section}, we study the performance of the model and priors on a real data set. Finally, in \Cref{Discussion section}, we synthesize the obtained results and discuss future avenues of research.

\section{\revisionone{Modeling anisotropy with SPDEs} }\label{model section}
\revisionone{
One of the fundamental goals of spatial statistics is inferring some quantity $u(s)$ defined over a spatially continuous domain $\Dd$. Since $u$ is a function, this amounts to describing a probability distribution over the space of functions to which we allow the realizations of $u$ to belong. A common approach is to assume that $u$ is a GF, that is, a random function $u:\Dd\to\R$ such that for any finite set of points $\bm{x}_1,\ldots,\bm{x}_n\in\Dd$, the vector $\b{u}=(u(\bm{x}_1),\ldots,u(\bm{x}_n))$ is a multivariate Gaussian random variable $\b{u} \sim \mathcal{N}(\b{0},\bm{K})$, where $\bm{K}$ is the covariance matrix of $\b{u}$. One of the fundamental challenges of such models is computational. The size of the covariance matrix scales with the number of observations, leading to a computational cost of $\mathcal{O}(n^3)$ for inference. This is prohibitive for large $n$.
}

\revisionone{
To overcome this, one can assume that the discretized field $\b{u}$ is a Gaussian Markov random field (GMRF). That is, that the precision matrix $\b{Q}= \b{K} ^{-1}$ is sparse. In two dimensions, this allows for inference with a cost of $\mathcal{O}(n^{3/2})$ \citep{rue2005gaussian}. This provides a significant speed-up as compared to the $\mathcal{O}(n^3)$ cost of working with a dense covariance matrix. The question is, then, how can one specify a continuous object, the GF $u$, such that the discretized field $\b{u}$ is a GMRF? As shown by \citet{rozanov1977markov}, a stationary GF $u$ verifies the Markov property if and only if the inverse of its spectrum is a positive symmetric polynomial. Equivalently, if and only if $u$ solves a SPDE of the form
\begin{align}\label{SPDE approach}
  \Ll^{\nicefrac{1}{2}} u = \dot{\Ww},
\end{align}
where $\Ww$ is Gaussian white noise and $\Ll $ is a partial differential operator whose symbol $\wh{\Ll }(\b{\xxi })$ is a positive, symmetric polynomial in $\xxi $.} \revisiontwo{
That is,}
\begin{align*}
  \int_{\R^d} (\Ll f)(\bm{x}) \overline{g(\bm{x})}= \int_{\R^d} \wh{\Ll }(\b{\xxi })\wh{f}(\b{\xxi })\overline{\wh{g}(\b{\xxi })}\d \b{\xxi }, \quad \forall f,g \in C^\infty_c(\R^d), 
\end{align*}
\revisiontwo{ where $C_c^\infty(\R^d)$ is the space of smooth functions with compact support in $\R^d$.}
\revisionone{
The above fact motivates the SPDE approach \citep{Lindgren2011AnEL,lindgren2022spde}. The continuous GF $u$ is modeled as the solution to an SPDE of the form \eqref{SPDE approach}. In this way, we can do the modeling using continuous GFs that are \emph{independent of any spatial discretization} and \emph{enforcing physical properties} such as reaction, advection, diffusion, smoothing, etc., through the operator $\Ll $ and at the same time, reaping the \emph{computational benefits} of working with GMRFs.}
\subsection{Model formulation}
In this section, we address the first main contribution. How can we introduce anisotropy into our model and parameterize it such that each parameter has an interpretable effect? We work in $2$ dimensions and within the framework of the SPDE approach where, most commonly, $u$ is chosen to be the solution of 
\begin{equation}\label{SPDE2011}
 (\kappa^2-\Delta)^\frac{\alpha}{2}u=\dot{\Ww}.
\end{equation}
Here, $\dot{\Ww}$ is Gaussian white noise on $L^2(\R^2)$, and is defined such that given $f,g \in L^2(\R^2)$ the random measure $ \revisiontwo{\dot{\Ww}}(f)$ verifies
\begin{align*}
  \E[\revisiontwo{\dot{\Ww}}(f)\revisiontwo{\dot{\Ww}}(g)]=\br{f,g}_{L^2(\R^2)}.
\end{align*}
The resulting field $u$ has a mean of zero and is chosen to be isotropic, ensuring uniqueness \cite[Section 2.3]{lindgren2022spde}. We recall that, by definition, the field $u$ is isotropic if there exists some function $r$ such that, for all $\bm{x,y}\in\R^2$,
\begin{align}\label{stationary}
 K(\bm{x},\bm{y}):={\mathrm{Cov}[u(\bm{x}),u(\bm{y})]}=r(\norm{\bm{y}-\bm{x}}).
\end{align}
The isotropy of $u$ may be an appropriate assumption in some cases. However, it is inadvisable if the correlation of the field is not equal in all spatial directions. To model this anisotropy, we will consider the model given by
\begin{equation}\label{SPDE2D}
 (\kappa^{2}-\nabla\cdot \bm{H}\nabla)\frac{u}{\sigma_u }=\sqrt{4 \pi } \kappa\dot{\Ww},
\end{equation}
where the parameters are $\kappa, \sigma_u \in (0,\infty)$, which are positive and bounded away from zero, and a symmetric positive definite matrix $ \bm{H} \in \R^{2\times 2}$ with determinant $1$ and  eigenvalues bounded away from zero. The parameters control the length scale, marginal variance, and anisotropy, respectively. \revisionone{\Cref{SPDE2D} is derived by the geometric deformation of a stationary isotropic field. This provides a clear physical interpretation of the anisotropy matrix $\b{H}$ and the parameter $\kappa $ as detailed in Section \ref{app:model derivation}}.

The formulation in \eqref{SPDE2D} preserves the advantages of the SPDE approach. Namely, representing $u$ as the solution of an SPDE gives it a physical interpretation. The term $\kappa^2 u$ represents reaction whereas $-\nabla \revisiontwo{\cdot}\bm{H}\nabla u$ represents diffusion \citep{evans2010partial,mech}.

Furthermore, using a finite element method, $u$ can be projected onto the finite-dimensional Hilbert space $\Hh_n$ spanned by the basis functions $\set{\psi_1, \ldots, \psi_n}$ linked to a mesh $M_n$ of the domain. This gives a sequence of Gaussian Markov random fields $u_n$ with sparse precision matrices, which converges in distribution to $u$ as the mesh becomes finer and finer. This sparsity allows for a significant speed-up in computations (Kriging, posterior simulation, likelihood evaluation, etc.) \citep{Lindgren2011AnEL, simpson2012think}.

\section{Parameterization}\label{parameterization section}
In this section, we parameterize $\bm{H}$ so that the parameters convey intrinsic geometric meaning about the field $u$. We recall that $\b{H}$ is symmetric positive definite and has determinant $1$.   Suppose for example that $\kappa$ is fixed to $1$ so that $\bm{\Psi}= \sqrt{\bm{H}}$. Write $$\{(\bm{v},\lambda^2),(\bm{v}_\perp,\lambda^{-2})\}$$ for the eigensystem of $\bm{H}$, where $\bm{v}:=(v_1,v_2)\in\mathbb{R}^2$ and $\bm{v}_\perp=(-v_2,v_1)$ and we can suppose $\lambda\geq 1$ by reordering if $\lambda< 1$. Then, as shown in \Cref{app:model derivation},  $u$ corresponds to deforming and rescaling the stationary field $\tl{u}$ through
\begin{equation*}
    u(\bm{x})= \sigma_u \cdot   \tl{u}(\bm{\Psi}\bm{x}),\quad \bm{\Psi}\bm{x}= \lambda\br{ \bm{x},\bm{v}}\bm{v}+\lambda^{-1} \br{\bm{x},\bm{v}_\perp}\bm{v}_\perp.
\end{equation*}
The rescaling by $\sigma _u$  corresponds to changing the variance of the field. The deformation corresponds to stretching the initial domain by a factor of $\lambda$ in the direction of  $\bm{v}$ and contracting, also by a factor of $\lambda$, in the orthogonal direction $\bm{v}_\perp$. The above shows that the eigensystem of $\bm{H}$ carries fundamental geometric information, motivating a parameterization of $\bm{H}$ in terms of its eigensystem.

Equation \eqref{SPDE2D} was also considered by \cite{fuglstad2015exploring}. Here, the authors defined $\bm{v}(\alpha):=(\cos(\alpha),\sin(\alpha))$ and parameterized $\bm{H}$ as
\begin{align}\label{fuglstadtparam}
    \bm{H}_{\bm{v}(\alpha)}=\gamma \bm{I}+\beta \bm{v}(\alpha)\bm{v}(\alpha)^T.
\end{align}
However, the map $\alpha \mapsto \bm{H}_{\bm{v}(\alpha)}$ is not injective as $\bm{H}_{\bm{v}(\alpha)}=\bm{H}_{-\bm{v}(\alpha)}$. Because of this, it is impossible to recover the sign of $\bm{v}$, and thus, this parameterization is not \revisiontwo{injective, typically} leading to \revisiontwo{an at least} bimodal likelihood. \revisionone{Similarly, in the case of \cite{higdon1999non}, with their notation, using $(\phi_x,\phi_y)$ gives the same covariance as $(-\phi _x,-\phi _y)$. Likewise, in the case and with the notation of \cite{paciorek2006spatial}, using $(\gamma_1,\gamma _2)$ gives the same covariance as using $(-\gamma_1,-\gamma_2)$.}

The crucial step to obtain an \revisiontwo{invertible} parameterization is to consider the ``half-angle'' version $\tl{\bm{v}}$ of $\bm{v}$ as an eigenvector of $\bm{H}$.
\begin{theorem}[parameterization]\label{parameterizATION THEOREM}
    Given $\bm{v}=(v_1,v_2) \in \R^2$ define
    \begin{align}\label{half}
        \tl{\bm{v}}:= \norm{\bm{v}} \exp({i \alpha /2 }), \quad  \text{ where }  \alpha := \arg(\bm{v}) \in [0,2 \pi).
    \end{align}
    Write $\tl{\bm{v}}=(\tl{v}_1,\tl{v}_2)$ and $\tl{\bm{v}}_\perp=(-\tl{v}_2,\tl{v}_1)$.
    Then, the following defines a smooth, invertible parameterization on the space of symmetric positive definite matrices of determinant $1$.
    \begin{align}\label{cool formula}
        \bm{H}_{\bm{v}} & =\frac{\exp({\norm{\bm{v}} })}{\norm{\bm{v}}^2 } \tl{\bm{v}}\tl{\bm{v}}^T+\frac{\exp({-\norm{\bm{v}} })}{\norm{\bm{v}}^2 } \tl{\bm{v}}_\perp \tl{\bm{v}}_\perp^T
        =\cosh\left( \lvert \bm{v} \rvert \right) \bm{I} + \frac{\sinh\left( \lvert \bm{v} \rvert \right)}{\lvert \bm{v} \rvert}
        \begin{bmatrix}
            v_1 & v_2  \\
            v_2 & -v_1
        \end{bmatrix}.
    \end{align}
\end{theorem}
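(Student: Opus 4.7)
The plan is to establish in sequence: (i) the second equality in \eqref{cool formula}, (ii) that $\bm{H}_{\bm{v}}$ is SPD with determinant $1$, (iii) smoothness of $\bm{v}\mapsto\bm{H}_{\bm{v}}$, and (iv) bijectivity via an explicit inverse, with smoothness of the inverse as the one delicate point.

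For (i), I would rewrite $e^{\pm\norm{\bm{v}}}=\cosh(\norm{\bm{v}})\pm\sinh(\norm{\bm{v}})$ in the rank-one spectral sum. Since $\tl{\bm{v}}$ and $\tl{\bm{v}}_\perp$ are orthogonal with common norm $\norm{\bm{v}}$, we have $\tl{\bm{v}}\tl{\bm{v}}^T+\tl{\bm{v}}_\perp\tl{\bm{v}}_\perp^T=\norm{\bm{v}}^2\bm{I}$, giving the $\cosh(\norm{\bm{v}})\bm{I}$ term. The difference $\tl{\bm{v}}\tl{\bm{v}}^T-\tl{\bm{v}}_\perp\tl{\bm{v}}_\perp^T$ has diagonal entries $\pm(\tl{v}_1^2-\tl{v}_2^2)$ and off-diagonal $2\tl{v}_1\tl{v}_2$, and the double-angle formulas convert these to $\norm{\bm{v}}^2\cos\alpha=\norm{\bm{v}}v_1$ and $\norm{\bm{v}}^2\sin\alpha=\norm{\bm{v}}v_2$. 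This is precisely why halving the angle in \eqref{half} produces a matrix that is linear in $v_1,v_2$.

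Parts (ii) and (iii) are then immediate. The spectral form shows $\bm{H}_{\bm{v}}$ is symmetric with eigenvalues $e^{\pm\norm{\bm{v}}}>0$ whose product is $1$; and the second form makes smoothness transparent, since $\cosh(\norm{\bm{v}})$ and $\sinh(\norm{\bm{v}})/\norm{\bm{v}}$ are even entire functions of $\norm{\bm{v}}^2$, and the matrix $\begin{bmatrix}v_1 & v_2\\ v_2 & -v_1\end{bmatrix}$ is linear in $\bm{v}$, so the apparent singularity of the spectral expression at $\bm{v}=\bm{0}$ is removable. For bijectivity, I would construct the inverse explicitly: given SPD $\bm{H}$ with $\det\bm{H}=1$, let $\lambda^2\geq 1$ be its larger eigenvalue with unit eigenvector $\hat{\bm{w}}$, and set $\norm{\bm{v}}:=2\log\lambda$, $\alpha:=2\arg(\hat{\bm{w}})$, and $\bm{v}:=\norm{\bm{v}}(\cos\alpha,\sin\alpha)$. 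Replacing $\hat{\bm{w}}$ by $-\hat{\bm{w}}$ shifts $\arg(\hat{\bm{w}})$ by $\pi$ and hence $\alpha$ by $2\pi$, leaving $\bm{v}$ unchanged---this is exactly the doubled-angle payoff and the remedy for the non-identifiability of the parameterization \eqref{fuglstadtparam}. At $\bm{H}=\bm{I}$ the eigenvector is ambiguous but $\lambda=1$ forces $\bm{v}=\bm{0}$, the unique preimage.

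The main obstacle will be smoothness of the inverse at $\bm{H}=\bm{I}$, where the eigenvalues coalesce and classical spectral perturbation theory breaks down. I would dispatch it via the inverse function theorem: expanding the second expression in \eqref{cool formula} about the origin gives $\bm{H}_{\bm{v}}=\bm{I}+\begin{bmatrix}v_1 & v_2\\ v_2 & -v_1\end{bmatrix}+O(\norm{\bm{v}}^2)$, whose linear part is an isomorphism from $\R^2$ onto the trace-free symmetric matrices, i.e.\ the tangent space at $\bm{I}$ of the $2$-dimensional manifold of unit-determinant SPD matrices. Away from the identity the eigenvalues separate and standard smoothness of the eigendecomposition propagates to the inverse, completing the argument.
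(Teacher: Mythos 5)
Your proposal is correct, and its core follows the same route as the paper's proof: the second expression in \eqref{cool formula} is obtained from the half-angle/double-angle identities applied to $\tl{\bm{v}}\tl{\bm{v}}^T\pm\tl{\bm{v}}_\perp\tl{\bm{v}}_\perp^T$, positive definiteness and unit determinant are read off the spectral form, and bijectivity comes from recovering $\bm{v}$ out of the eigendecomposition, with the angle doubling absorbing the sign ambiguity of the eigenvector (the paper phrases this as separate surjectivity and injectivity arguments, observing that $\tl{\bm{v}}$ and $-\tl{\bm{v}}$ never both lie in the range of the half-angle map; your "well-defined explicit inverse" packaging is equivalent). Where you genuinely go beyond the paper is smoothness: the paper's proof does not address it at all, even though the theorem asserts it. Your two additions are exactly the right ones --- the forward map is smooth because $\cosh(\norm{\bm{v}})$ and $\sinh(\norm{\bm{v}})/\norm{\bm{v}}$ are entire functions of $\norm{\bm{v}}^2$ so the apparent singularity of the rank-one expression at $\bm{v}=\bm{0}$ is removable, and the inverse is smooth at $\bm{H}=\bm{I}$ (where eigenvalue perturbation theory fails) by the inverse function theorem, since the differential $\bm{v}\mapsto\begin{bmatrix}v_1 & v_2\\ v_2 & -v_1\end{bmatrix}$ is an isomorphism onto the trace-free symmetric matrices, which is the tangent space of the unit-determinant SPD manifold at the identity. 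A small remark: smoothness of the inverse can also be seen globally from the explicit formula $\mathrm{tr}(\bm{H}_{\bm{v}})=2\cosh(\norm{\bm{v}})$ and $\bigl(\tfrac{1}{2}(H_{11}-H_{22}),H_{12}\bigr)=\tfrac{\sinh(\norm{\bm{v}})}{\norm{\bm{v}}}\bm{v}$, which expresses $\bm{v}$ as an analytic function of the entries of $\bm{H}$; but your IFT argument is perfectly adequate.
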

The link with the parameterization by \citet{fuglstad2015exploring} is the following
\begin{proposition}\label{relationship prop}
    Let $\bm{H}_{\bm{v}(\alpha)},\bm{H}_{\bm{v}}$ be as in \eqref{fuglstadtparam}, \eqref{cool formula}, then if we set
    \begin{align*}\label{relationship}
        \bm{v}(\alpha)=\pm{\tl{\bm{v}}}/{\norm{\bm{v}}}, \quad \gamma = \exp({-\norm{\bm{v}} }),\quad \beta =(1-\gamma^2)/{\gamma}.
    \end{align*}
    We obtain $\bm{H}_{\bm{v}(\alpha)}=\bm{H}_{\bm{v}}$.
\end{proposition}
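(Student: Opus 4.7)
The plan is to verify the identity by direct computation, reducing both sides to the same expression via double-angle formulas. Introduce the shorthand $\theta := \lvert\bm{v}\rvert$ and $\phi := \arg(\bm{v})/2$, so that by definition of the half-angle vector in \eqref{half}, $\tilde{\bm{v}}/\lvert\bm{v}\rvert = (\cos\phi,\sin\phi)$, and $(v_1,v_2) = \theta(\cos 2\phi,\sin 2\phi)$. The claim is symmetric in the sign of $\bm{v}(\alpha)$ because $\bm{v}(\alpha)\bm{v}(\alpha)^T$ is invariant under $\bm{v}(\alpha)\mapsto -\bm{v}(\alpha)$, which explains the $\pm$ in the statement; therefore, it suffices to treat the $+$ case.

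First I would expand the right-hand side $\bm{H}_{\bm{v}}$ of \eqref{cool formula} in terms of $\theta$ and $\phi$: the factor $1/\lvert\bm{v}\rvert$ cancels the $\theta$ appearing in the $(v_1,v_2)$-matrix, yielding
\begin{equation*}
\bm{H}_{\bm{v}} = \cosh\theta\,\bm{I} + \sinh\theta\begin{bmatrix}\cos 2\phi & \sin 2\phi \\ \sin 2\phi & -\cos 2\phi\end{bmatrix}.
\end{equation*}
Next I would substitute the prescribed values $\gamma = e^{-\theta}$ and $\beta = (1-\gamma^2)/\gamma = 2\sinh\theta$ into \eqref{fuglstadtparam} with $\bm{v}(\alpha) = (\cos\phi,\sin\phi)$, and apply the double-angle identities $\cos^2\phi = (1+\cos 2\phi)/2$, $\sin^2\phi = (1-\cos 2\phi)/2$, and $2\cos\phi\sin\phi = \sin 2\phi$ to the outer-product matrix. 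This rewrites $\beta\,\bm{v}(\alpha)\bm{v}(\alpha)^T$ as $\sinh\theta\,\bm{I} + \sinh\theta$ times the same traceless matrix that appeared on the right-hand side above.

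Finally, the scalar coefficient of $\bm{I}$ in $\bm{H}_{\bm{v}(\alpha)}$ becomes $e^{-\theta} + \sinh\theta = \cosh\theta$, which matches the coefficient in $\bm{H}_{\bm{v}}$; the off-diagonal part already agrees, so the two matrices coincide. I expect no real obstacle here — the only care needed is in bookkeeping the half/double-angle substitution and noticing the cancellation that turns the sign-ambiguous $\bm{v}(\alpha)$ into the half-angle direction $\tilde{\bm{v}}/\lvert\bm{v}\rvert$. The identification $\beta=(1-\gamma^2)/\gamma=2\sinh\theta$ is the algebraic hinge that makes the coefficients of the rank-one and identity parts align.
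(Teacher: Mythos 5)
Your computation is correct, and it verifies the same identity the paper does, but via a different intermediate normal form. The paper works with the \emph{first} expression in \eqref{cool formula}: it first recovers $\beta=(1-\gamma^2)/\gamma$ from the determinant-one condition, then uses the resolution of the identity $\mathbf{I}=\bm{e}\bm{e}^T+\bm{e}_\perp\bm{e}_\perp^T$ to rewrite \eqref{fuglstadtparam} as $\gamma^{-1}\bm{v}(\alpha)\bm{v}(\alpha)^T+\gamma\,\bm{v}(\alpha)_\perp\bm{v}(\alpha)_\perp^T$, which is matched term-by-term with $\frac{e^{\norm{\bm{v}}}}{\norm{\bm{v}}^2}\tl{\bm{v}}\tl{\bm{v}}^T+\frac{e^{-\norm{\bm{v}}}}{\norm{\bm{v}}^2}\tl{\bm{v}}_\perp\tl{\bm{v}}_\perp^T$; this makes the eigenstructure (and hence the meaning of $\gamma$ and $\pm\tl{\bm{v}}/\norm{\bm{v}}$ as eigenvalue and eigenvectors) immediately visible. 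You instead target the \emph{second} expression in \eqref{cool formula}, decomposing $\bm{v}(\alpha)\bm{v}(\alpha)^T$ into $\tfrac12\mathbf{I}$ plus a traceless part via double-angle formulas and checking that $e^{-\theta}+\sinh\theta=\cosh\theta$. The two routes are essentially inverse manipulations of one another, and yours is equally valid; note only that the second equality in \eqref{cool formula}, which you take as your target, is itself established in the paper's proof of \Cref{parameterizATION THEOREM} by the very same half-angle identities, so in effect you are re-deriving that step rather than reusing the rank-one form directly. Your observation that the $\pm$ ambiguity is absorbed by the outer product is correct and is left implicit in the paper.
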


The idea behind the usage of the half angle version $\tl{\bm{v}}$ of $\bm{v}$ is that it avoids any issues of identifiability in the sign as $\tl{\bm{v}}$ and $-\tl{\bm{v}}$ do not simultaneously belong to the parameter space. Parameterization using a Cholesky decomposition of $\bm{H}$ is also possible and is more readily generalized to higher dimensions. However, it is not as easily interpretable in terms of the intrinsic properties of $u$.

In \Cref{half-angle figure}, we show the half-angle vector field together with the parameterized diffusion matrices $\bm{H}_{\bm{v}}$. Here, each $\bm{H}_{\bm{v}}$ is represented by the ellipse centred at $\bm{v}$  whose main axis is $\exp(\norm{\bm{v}})\tl{\bm{v}}/\norm{\bm{v}}$ and whose secondary axis is $\exp(-\norm{\bm{v}})\tl{\bm{v}}_\perp/\norm{\bm{v}}$. That is, the axes of the ellipses correspond to the eigenvectors of  $\bm{H}_{\bm{v}}$ scaled by their respective eigenvalues. \Cref{half-angle figure} shows visually how the anisotropy increases with $\norm{\bm{v}}$ and is directed towards $\tl{\bm{v}}$. It can also be seen how the parameterization is injective (no two ellipses are the same) and smooth (the ellipses vary smoothly with $\bm{v}$).
\begin{figure}[H]
    \centering
    \includegraphics{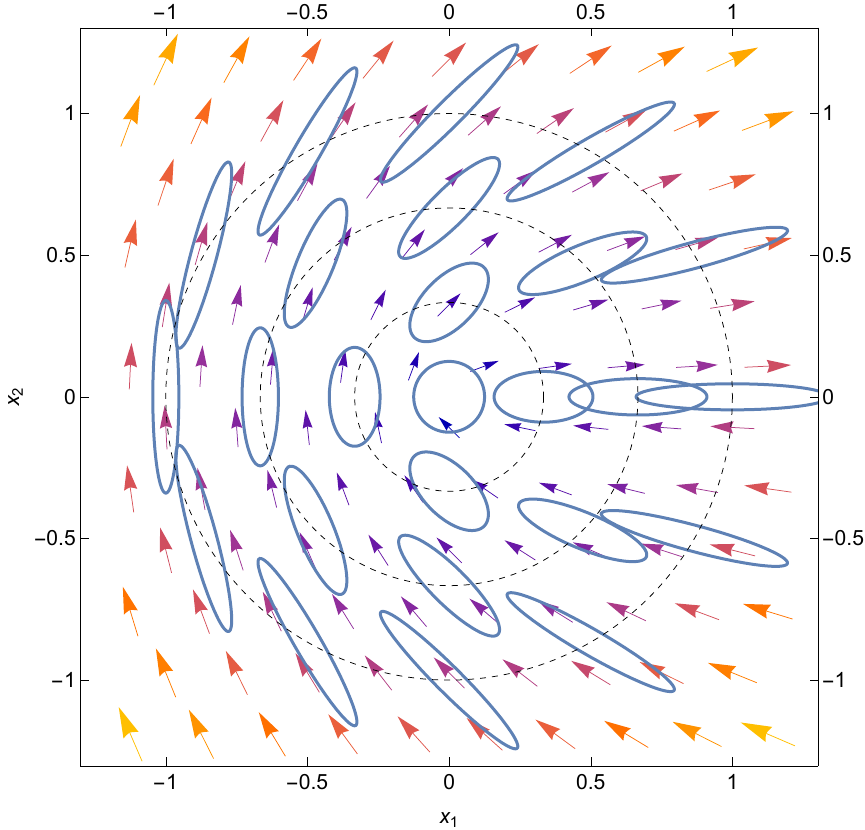}
    \caption{Figure of the half angle field $\tl{\bm{v}}$ and matrix $\bm{H}_{\bm{v}}$ in \eqref{half},\eqref{cool formula}. Each $\bm{H}_{\bm{v}}$ is represented by the ellipse centered at $\bm{v}$  with axes the eigenvectors of $\bm{H}_{\bm{v}}$ scaled by their respective eigenvalues.}
    \label{half-angle figure}
\end{figure}

In \Cref{fig:covariance_and_simulation_figure} (a), we show the plot of the covariance $K(\bm{x},\bm{0})=\mathbb{E}[u(\bm{x})u(\bm{0})]$ for $\kappa=1$ constant and as $\bm{v}$ starts at the $X$-axis and is rotated by $90^\circ$  in each plot. As we can see, the vector field is most correlated in the direction of $\tl{\bm{v}}$, which is rotated by $45^\circ$ in each image, half the speed of rotation of $\bm{v}$, and is least correlated in the direction of $\tl{\bm{v}}_\perp$. In Figure \Cref{fig:covariance_and_simulation_figure} (b), we show simulations of the field where again we leave $\kappa=1$ constant and rotate $\bm{v}$ by $90^\circ$ in each plot. The figure shows that the field diffuses the most in the direction of $\tl{\bm{v}}$ and the least in the direction of $\tl{\bm{v}}_\perp$. The realizations of $u$ are obtained using a FEM to solve the SPDE as detailed in \Cref{simulation section}.

\begin{figure}[H]
    \centering
    \begin{subfigure}[b]{0.45\linewidth}
        \centering
        \includegraphics[width=\linewidth]{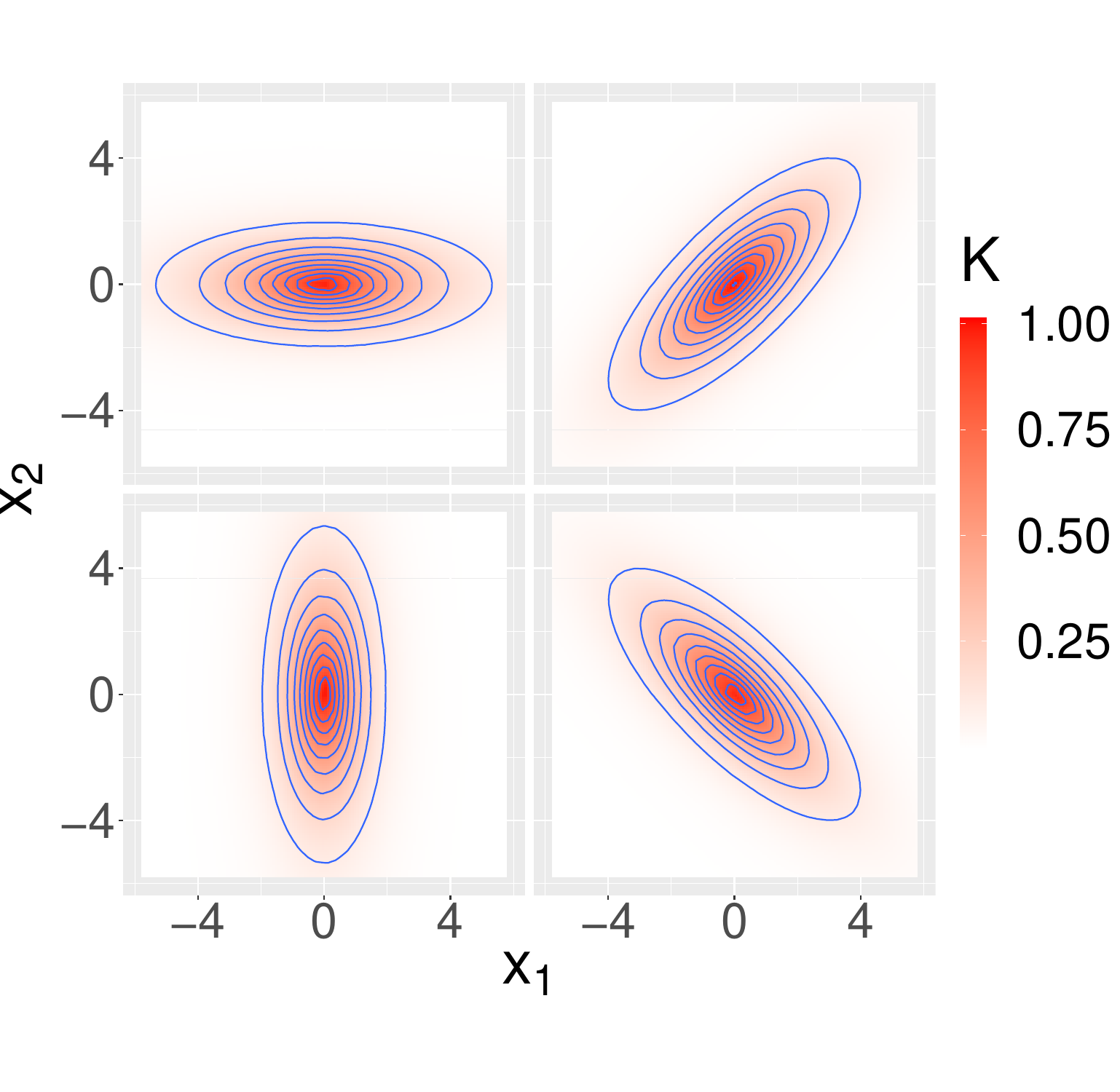}
        \caption{Covariance $K(\bm{x},\bm{0})$ as $\bm{v}$ varies}
        \label{fig:covariance_figure}
    \end{subfigure}
    \hfill
    \begin{subfigure}[b]{0.45\linewidth}
        \centering
        \includegraphics[width=\linewidth]{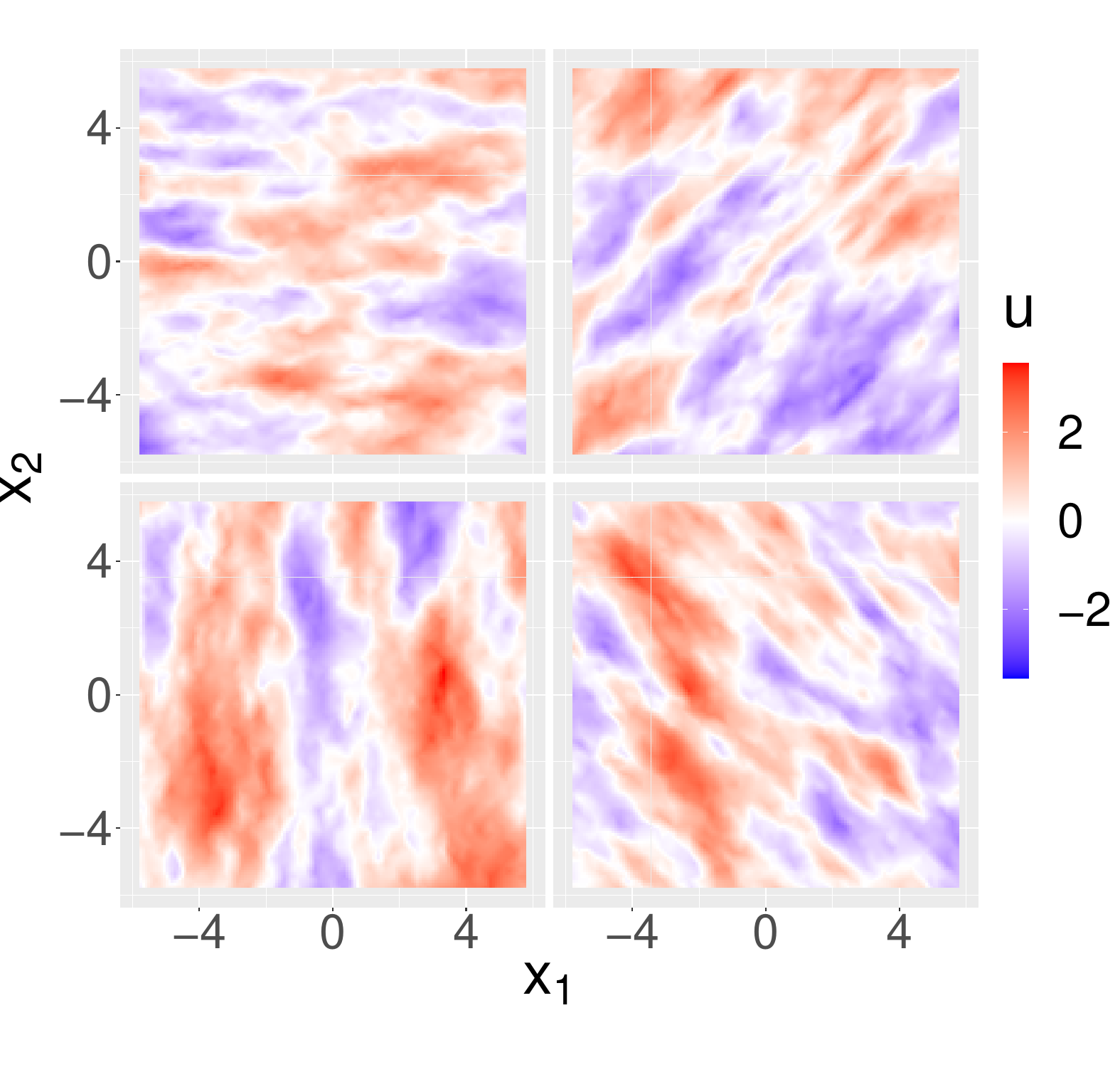}
        \caption{Realizations of $u$ as $\bm{v}$ varies}
        \label{fig:simulation_figure}
    \end{subfigure}
    \caption{We fix $\kappa =\sigma_u = 1$ and plot the covariance $K(\cdot,\bm{0})$  and a realization of the solution $u$ to \eqref{SPDE2D} as $\bm{v}$ varies from left to right and top to bottom through $(1,0),(0,1),(-1,0),(0,-1)$.}
    \label{fig:covariance_and_simulation_figure}
\end{figure}
Another method that could be used to simulate the field $u$  and obtain the variance plots is the \emph{spectral method}. The spectral method uses the \emph{spectral density} of the field (the Fourier transform of the covariance function $r$ if it exists)
\begin{equation}\label{sddef}
    S(\bm{\xi}):= \int_{\R^2 } \exp(-2\pi i\bm{\xi}\cdot\bm{x})r(\bm{x})\dx, \quad \bm{\xi}\in \R^2,
\end{equation}
to sample from the field. The field $u$ is then sampled using the stochastic integral.
The spectrum for stationary fields $u$ solving \eqref{SPDE2D} is given by (see \Cref{Sobolev distance section} and \citet{lindgren2012stationary})
\begin{equation}\label{spectral density}
    S(\bm{\xi})=\frac{4\pi \kappa^2\sigma_u^2}{(\kappa^2+4\pi^2\bm{\xi}^T\bm{H}\bm{\xi})^2}.
\end{equation}
Then, the covariance can be calculated as the Fourier transform of the spectrum, whereas the field $u$ can be sampled using the stochastic integral
\begin{equation}\label{spectral sim}
    u(\bm{x})=\int_{\R^2}\ex{2\pi i\bm{\xi}\cdot\bm{x}}\d Z(\bm{\xi}),
\end{equation}
where $Z(\bm{\xi})$ is called the \emph{spectral process}. Using the fast Fourier transform, high-resolution samples of $u$ can be obtained. See \cite{lindgren2024diffusion} for the details.

Thus far, we have worked with constant $\bm{v}$. The same parameterization goes through when $\bm{H}(\bm{x})$ is allowed to be spatially varying. In this case, $\bm{v}(\bm{x})$ is a spatial vector field. In \Cref{fig:anisotropic_figures}, we take $\kappa= \sigma_u=1$ and show the covariance $K(\bm{x},(2,2))$ and field $u$  when $\bm{v}(\bm{x})$ is chosen to be the “twice-angle” field of the rotational field $\bm{\tl{v}}(\bm{x})=(-x_2,x_1)$ (left of each subfigure), and from when $\bm{v}(\bm{x})=(x_1,x_2)$ (right of each subfigure). The figures show how the information of the field diffuses infinitesimally in the direction of $\tl{\bm{v}}$. The covariance and samples are obtained using the finite element method. 
\begin{figure}[H]
    \centering
    \begin{subfigure}[b]{0.45\linewidth}
        \centering
        \includegraphics[width=\linewidth]{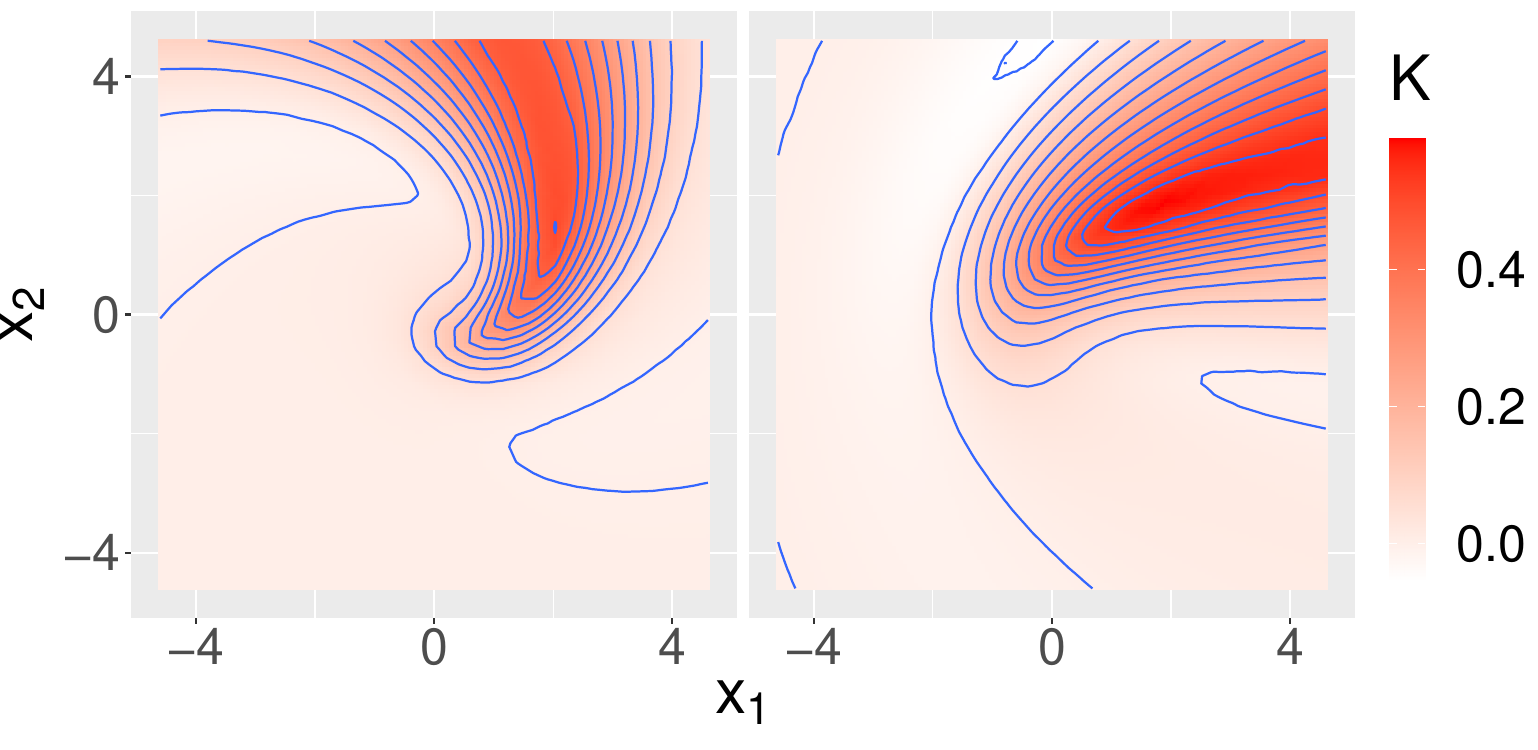}
        \caption{Covariances $K(\bm{x},(2,2))$}
        \label{fig:anisotropic_figure}
    \end{subfigure}
    \hfill
    \begin{subfigure}[b]{0.45\linewidth}
        \centering
        \includegraphics[width=\linewidth]{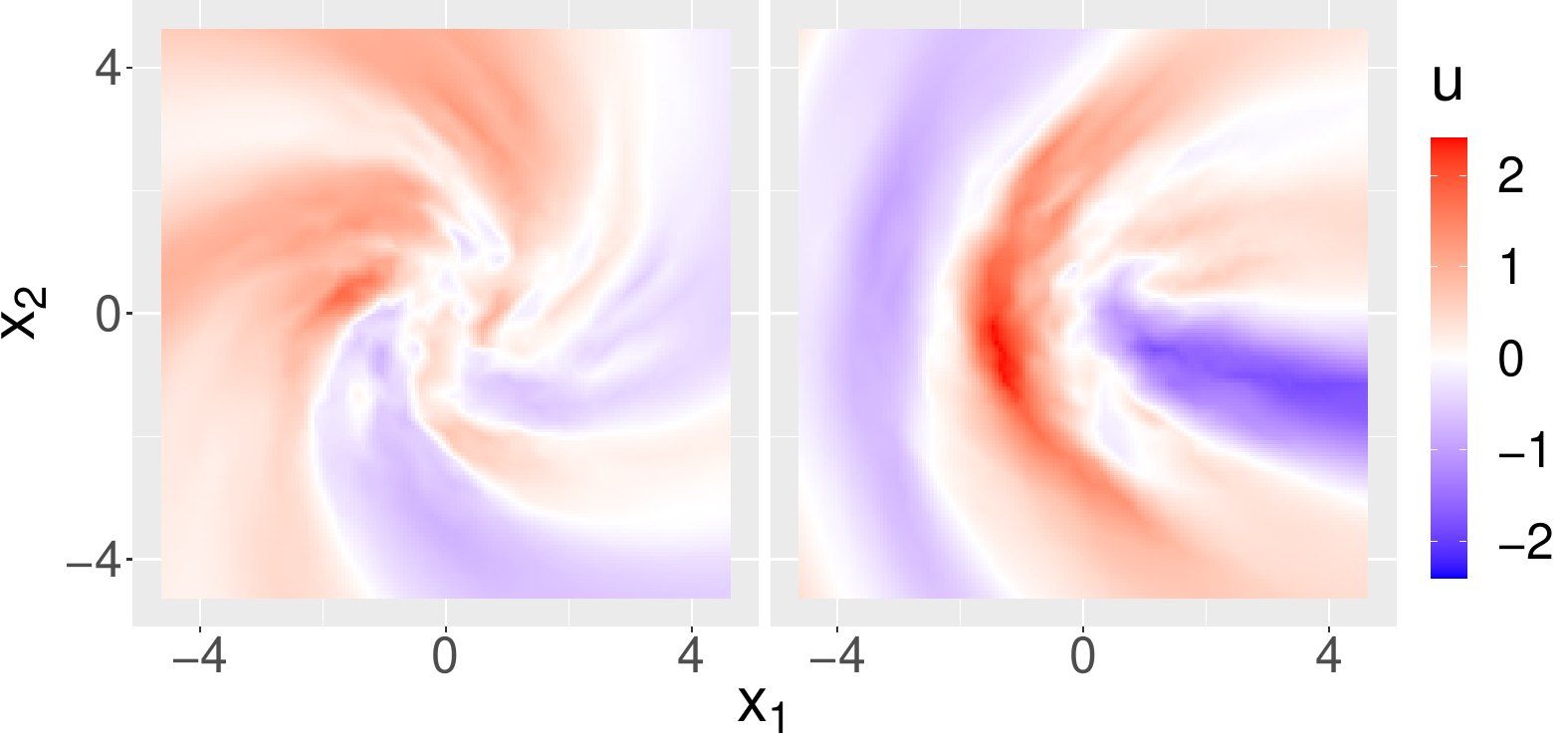}
        \caption{Realizations of $u(\bm{x})$}
        \label{fig:anisotropic_figure_2}
    \end{subfigure}
    \caption{We fix $\kappa =\sigma_u = 1$ and plot the covariance $K(\cdot,(2,2))$  and realizations of solution $u$ to \eqref{SPDE2D} for $\tl{\bm{v}}(\bm{x})=(-x_2,x_1)$ on the left and $\bm{v}(\bm{x})=(x_1,x_2)$ on the right of each subfigure.}
    \label{fig:anisotropic_figures}
\end{figure}

In summary, we have parameterized the anisotropic field $u$ using parameters $(\kappa, v_1,v_2,\sigma_u)$, where $u$ solves \eqref{SPDE2D} and $\bm{H}:= \bm{H_v}$ is given by \eqref{cool formula}. The parameterization is \revisiontwo{invertible} and smooth, and the parameters have an intrinsic geometric interpretation. This will be the parameterization used throughout the paper.
\section{Penalized complexity priors}\label{priors section}
PC priors were originally developed by \citet{Simpson2014PenalisingMC} to construct weakly informative priors while adhering to certain principles. The main idea is that one has a parametric family
of models $M_{\bm{\theta}}$ with parameter $\bm{\theta}$ \revisiontwo{belonging to some parameter space $E$} and a base model $M_{\bm{0}}$ (by convention corresponding to $\bm{\theta}=\bm{0}$).

One views $M_{\bm{0}}$ as the most suitable in the absence of any information. The larger the distance $\zeta(M_{\bm{\theta}}, M_{\bm{0}}) $ between a model $M_{\bm{\theta}}$ and $M_{\bm{0}}$, the smaller the prior density for $\bm{\theta}$ should be, and the decrease is set to be exponential. That is, we choose the prior distribution for $\bm{\theta}$, such that:
\begin{equation}\label{principle 1}
 d(\bm{\theta} ):=\zeta (M_{\bm{\theta}},M_{\bm{0}})\sim \text{Exp}(\lambda_{\bm{\theta}} ).
\end{equation}
Here, the rate $\lambda_{\bm{\theta}} > 0$ of the exponential distribution serves as a hyperparameter selected by the user, which governs the model's flexibility. A smaller value of $\lambda_{\bm{\theta}}$ increases the model's flexibility, allowing for greater deviations of $M_{\bm{\theta}}$ from the base model $M_{\bm{0}}$. Conversely, a larger value of $\lambda_{\bm{\theta}}$ imposes stricter penalties on these deviations, thereby reducing the model's flexibility. In the previously cited \citet{Simpson2014PenalisingMC}, this notion of ``distance'' was taken to be
\begin{equation*}
  \eta (M_{\bm{\theta}},M_{\bm{0}}):=\sqrt{2 \rmm{KLD}(M_{\bm{\theta}} | |M _0)},
\end{equation*}
where $\mathrm{KLD}$ is the Kullback-Leibler divergence
\begin{equation*}
  \text{KLD}(M_{\bm{\theta}} | | M_0 ):=\int_E \log \qty(\frac{\d M_{\bm{\theta}}}{\d M_0 })\d M_{\bm{\theta}} .
\end{equation*}
One possible complication of using the KL divergence to define the distance is that the Radon--Nikodym derivative $\frac{\d M_{\bm{\theta}}}{\d M_{\bm{0}} }$ does not exist. In fact, in the case of our model, if the variances of $M_{\bm{\theta }}$ and $M_{\bm{0}}$ are equal, for any possible base model $u_{\bm{0}}$ and $\bm{\theta}\neq \bm{0}$ the measures $M_{\bm{\theta}}$ and $M_{\bm{0}}$ are singular. That is, with the usual convention,
\begin{equation}\label{infinite KL}
  \text{KLD}(M_{\bm{\theta}} | | M_{\bm{0}} )=\infty .
\end{equation}
The proof of this fact is given in \Cref{KLD section}.
Equation \eqref{infinite KL} makes an exact adherence to the previous steps impossible. As a result, we will merely adopt the principles of the PC prior construction (exponential penalization of complexity) while altering how we measure complexity
between models. The idea of modifying the metric used to measure complexity has also been explored in different settings, such as in \citet{bolin2023wasserstein, Uribe}, where a Wasserstein distance was used. The possibility of using the Wasserstein distance to measure the complexity of the model \eqref{SPDE2D} was also considered. However, the Wasserstein distance is bounded in this setting (see \Cref{wasserstein appendix} in the supplementary material) and, as a result, was discarded. Thus, one of the main challenges is to find a computationally feasible distance that captures relevant information about our model. To this aim, we give the following definition.
\begin{definition}\label{metric def}
 Given two sufficiently regular models $u_A\sim M_A,u_B \sim M_B$, with respective spectral densities $S_A, S_B$ and variances $\sigma_A,\sigma_B$, we define the pseudometric
  \begin{equation}\label{pseudo metric}
 D_2(M_A ,M_B):=\qt{\int_{\R^2} \norm{2\pi\bm{\xi}}^{4}\qt{\frac{S_A(\bm{\xi})}{\sigma_A^2}-\frac{S_B(\bm{\xi})}{\sigma_B^2}}^2 \d\bm{\xi}}^{\frac{1}{2}} .
  \end{equation}
\end{definition}
The above definition uses the rescaled spectral densities of each field to define a Sobolev seminorm on the difference of the correlations of $M_A$ and $M_B$. \revisionone{Sobolev norms are a common way to measure the distance between two functions due to their sensitivity to smoothness and shape. In this case, it measures the difference in speed of oscillation of the normalized covariance up to order $2$.}

\revisionone{Other measures of complexity between models are also possible, and we give such an example in Appendix \ref{Sobolev distance section}. If a particular application suggests the clear use of a specific distance, that distance should be used instead.} 

\revisionone{In this work, we considered many other common distances and showed that they were unsuitable. As we already mentioned, the KL divergence between two different models is infinite. Furthermore, the Wasserstein distance, the $L^2$ distance, and the Hellinger distance are bounded, making them unsuitable for the exponential penalization of PC priors. Sobolev seminorms of integer order different from $2$ can also be shown to be unsuitable for this model with this smoothness.} \revisiontwo{For different smoothneess and dimension than the ones considered here, the distance can be adapted by modifying the weight used to make the integrand finite. The weight could also be adapted for models whose spectral density has a singularity at the origin of order $\alpha$ by multiplying by a factor $\norm{\b{\xi}}^\alpha (1+\norm{\b{\xi}})^{-\alpha}$.}

\revisionone{Finally, we note that the seminorm distinguishes different models (it defines a norm on our family of models) in the following sense. Write $M_{\kappa,\bm{v}}$ for the model given by SPDE \eqref{SPDE2D} with parameters $(\kappa,\bm{v})$ and with variance fixed to $\sigma _u=1$. Due to the rescaling of the spectral densities in \Cref{metric def}, the choice of $\sigma _u$ does not affect the distance between models and can be set to any other positive value. Then,}
\revisionone{\begin{align*}
 D_2(M_{\kappa_A,\bm{v}_A},M_{\kappa_B,\bm{v}_B})&=0 \iff (\kappa_A, \b{v}_A)=(\kappa _B, \b{v} _B) .
\end{align*}}
\revisionone{We give a more detailed discussion in Section \ref{Sobolev distance section} of the Appendix.}

 We define the base model as the limit in distribution of $M_{\kappa,\bm{v}}$ as $\kappa,\bm{v}$ go to zero
\begin{align*}
 {M}_{\bm{0}}:= \lim_{(\kappa_0,\bm{v}_0) \to \bm{0}} {M}_{\kappa_0,\bm{v}_0} =\Nn\qt{0,\mathbf{1}\otimes\mathbf{1}},
\end{align*}
where $\mathbf{1}: \mathcal{D}\rightarrow \mathbb{R}$ is the constant function, $z \in \Dd \mapsto 1$ and $(f\otimes g)(\bm{x},\bm{y}):=f(\bm{x})g(\bm{y})$.
That is, $u_{\bm{0}}$ is constant over space and follows a Gaussian distribution with variance $1$. We choose this $u_{\bm{0}}$ as our base model because it is simple, and $(\kappa,\bm{v}) = \bm{0}$ is the only distinguished point in the parameter space.

To reflect the dependency of this distance on the parameters, we use the notation
\begin{equation}\label{distance metric}
 d(\kappa,\bm{v}):=D_2(M_{\kappa,\bm{v}},M_{\bm{0}}) := \lim_{(\kappa_0,\bm{v}_0) \to \bm{0}}D_2(M_{\kappa,\bm{v}},M_{\kappa_0,\bm{v}_0}).
\end{equation}

The exponential penalization in \eqref{principle 1} imposes one condition on the prior of $(\kappa,\bm{v}) $ whereas, since $(\kappa,\bm{v}) $ is three dimensional, two more conditions are necessary to determine the prior distribution uniquely. \citet{fuglstad2019constructing} circumvented this issue by working iteratively, fixing one parameter while allowing the other to vary, building PC priors for each parameter, and then multiplying them together to obtain a joint prior. However, we prefer to work jointly from the start. This approach is made possible by the structure of $d(\cdot,\cdot )$, which is calculated to have the form
\begin{equation}\label{distance form}
 d(\kappa,\bm{v})=f(\norm{\bm{v}})g(\kappa)
\end{equation}
Since the angle $\alpha:= \arg(\bm{v})$ does not affect \eqref{distance form}, we impose that $\alpha$ is uniformly distributed in $[0,2\pi)$. This choice guarantees that we do not favor the alignment of the covariance of $u$ in any direction of the plane. Next, since the contribution of $f$ and $g$ to the distance is symmetric, we impose that $f-f(0)$ and $g$ knowing $f$ are exponentially distributed. The translation is necessary as $f$ takes a nonzero minimum $f(0)$ at $\norm{\bm{v}}= 0$, whereas $g$ takes a minimum of $0$ at $\kappa =0$.

This construction leads to the distance $d(\kappa,\bm{v})$ being exponentially distributed, and it provides us with three conditions that uniquely determine $(\kappa,\bm{v})$.

\begin{theorem}[PC prior for $(\kappa,\bm{v})$]\label{Prior r kappa stationary theorem}
 A PC prior for $(\kappa,\bm{v}) $ with base model $(\kappa,\bm{v}) =\bm{0}$ is
  \begin{align}\label{pc1}
    \pi(\kappa,\bm{v}) & =\frac{\lambda_{\bm{\theta}} \lambda_{\bm{v}}f'(\norm{\bm{v}} ) f(\norm{\bm{v}} ) }{2\pi \norm{\bm{v}} } \exp({-\lambda_{\bm{v}}\qty( f(\norm{\bm{v}} )-f(0) )} -\lambda_{\bm{\theta}} f(\norm{\bm{v}} ) \kappa ),\end{align}
 where $\lambda_{\bm{\theta}}>0 ,\lambda_{\bm{v}}>0$ are hyperparameters and
  \begin{align}\label{pc2}
 f(r):=\qty(\frac{\pi }{3 } (3 \cosh (2 r )+1))^\frac{1}{2}, \quad f'(r)=\frac{\sqrt{\pi } \sinh (2 r)}{\sqrt{\cosh (2 r)+1/3}}.
  \end{align}
\end{theorem}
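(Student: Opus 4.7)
The plan splits into two phases. The \textbf{first phase} derives the explicit form $d(\kappa,\bm{v}) = \kappa f(\lvert \bm{v}\rvert)$ by computing the integral in \Cref{metric def}; the \textbf{second phase} converts the three distributional principles stated before the theorem into the joint density \eqref{pc1}.

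For phase one, I would substitute the spectral density \eqref{spectral density} (with $\sigma_u = 1$) into \eqref{pseudo metric}. Since $M_{\bm{0}}$ arises as the limit as $(\kappa_0,\bm{v}_0)\to\bm{0}$ and the rescaled spectral density has numerator $4\pi\kappa_0^2$, the base spectrum vanishes in the limit and
\begin{equation*}
d(\kappa,\bm{v})^2 = \int_{\R^2} (2\pi\lvert\bm{\xi}\rvert)^4 \cdot \frac{16\pi^2 \kappa^4}{(\kappa^2 + 4\pi^2 \bm{\xi}^T \bm{H} \bm{\xi})^4}\, d\bm{\xi}.
\end{equation*}
The key substitution is $\bm{\eta} = (2\pi/\kappa) \bm{H}^{1/2}\bm{\xi}$, which has Jacobian $\kappa^2/(4\pi^2)$ because $\det\bm{H} = 1$ by \Cref{parameterizATION THEOREM}. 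After careful bookkeeping of the factors of $2\pi$ and $\kappa$, this collapses to
\begin{equation*}
d(\kappa,\bm{v})^2 = 4\kappa^2 \int_{\R^2} \frac{(\bm{\eta}^T \bm{H}^{-1} \bm{\eta})^2}{(1+\lvert\bm{\eta}\rvert^2)^4}\, d\bm{\eta},
\end{equation*}
isolating the $\kappa$-dependence outside the integral.

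I would then evaluate the residual integral in the eigenbasis of $\bm{H}^{-1}$, whose eigenvalues are $e^{\mp\lvert\bm{v}\rvert}$. In polar coordinates the radial and angular integrals decouple. The angular integrand is $(e^{-\lvert\bm{v}\rvert}\cos^2\theta + e^{\lvert\bm{v}\rvert}\sin^2\theta)^2$; expanding and using $\int_0^{2\pi}\cos^4\theta\,d\theta = \int_0^{2\pi}\sin^4\theta\,d\theta = 3\pi/4$, $\int_0^{2\pi}\cos^2\theta\sin^2\theta\,d\theta = \pi/4$, together with $e^{-2\lvert\bm{v}\rvert} + e^{2\lvert\bm{v}\rvert} = 2\cosh(2\lvert\bm{v}\rvert)$, yields $(\pi/2)(3\cosh(2\lvert\bm{v}\rvert)+1)$. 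The radial integral $\int_0^\infty r^5/(1+r^2)^4\,dr$ evaluates to $1/6$ via $u = 1+r^2$. Combining gives $d(\kappa,\bm{v})^2 = (\pi \kappa^2/3)(3\cosh(2\lvert\bm{v}\rvert)+1)$, hence $d(\kappa,\bm{v}) = \kappa\, f(\lvert\bm{v}\rvert)$ with $f$ precisely as in \eqref{pc2}; differentiating gives the stated $f'$.

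For phase two, translate the three distributional requirements into densities for the triple $(\alpha, r, \kappa)$ with $r = \lvert\bm{v}\rvert$. Uniformity of $\alpha = \arg(\bm{v}) \in [0, 2\pi)$ contributes $1/(2\pi)$. The assumption $f(\lvert\bm{v}\rvert) - f(0) \sim \text{Exp}(\lambda_{\bm{v}})$ gives, by change of variables with Jacobian $f'(r)$, the marginal density $\lambda_{\bm{v}} f'(r)\exp(-\lambda_{\bm{v}}(f(r) - f(0)))$. Conditional on $\bm{v}$, exponentiality of $d = \kappa f(\lvert\bm{v}\rvert)$ at rate $\lambda_{\bm{\theta}}$ is equivalent to $\kappa \mid \bm{v} \sim \text{Exp}(\lambda_{\bm{\theta}} f(\lvert\bm{v}\rvert))$, contributing $\lambda_{\bm{\theta}} f(\lvert\bm{v}\rvert)\exp(-\lambda_{\bm{\theta}} f(\lvert\bm{v}\rvert)\kappa)$. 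Multiplying the three factors and passing from polar $(r,\alpha)$ to Cartesian $\bm{v}$ via the Jacobian $d\bm{v} = r\,dr\,d\alpha$ introduces the extra factor $1/\lvert\bm{v}\rvert$ and produces \eqref{pc1} exactly.

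\textbf{The main obstacle} is phase one: carrying the numerical constants cleanly through the spectral substitution, and verifying that the anisotropy contracts to the single function $\cosh(2\lvert\bm{v}\rvert)$ rather than a more intricate combination of the eigenvalues of $\bm{H}$. This cleanness is a genuine feature of the quartic Sobolev weight $\lvert 2\pi\bm{\xi}\rvert^4$ in \Cref{metric def}: it is exactly this weight that lets $d$ factor multiplicatively as $\kappa f(\lvert\bm{v}\rvert)$, which in turn enables \eqref{distance form} and the entire joint-prior construction of phase two.
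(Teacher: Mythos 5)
Your proposal is correct and follows essentially the same route as the paper: Lemma~\ref{distance lemma} establishes $d(\kappa,\bm{v})=\kappa f(\norm{\bm{v}})$ by the same spectral computation (the paper rescales by $\kappa/(2\pi)$ and keeps the anisotropy in the denominator, integrating radially first, while you absorb $\bm{H}^{1/2}$ into the substitution so the radial and angular integrals decouple — a cosmetic difference yielding the identical value), and your phase two matches the paper's proof of \Cref{Prior r kappa stationary theorem} step for step, including the change of variables via $f'$, the conditional exponential for $\kappa$ given $f$ (the paper's \Cref{product lemma}), and the polar-to-Cartesian factor $1/(2\pi\norm{\bm{v}})$. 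The only imprecision is describing the base spectrum as "vanishing" in the limit — it actually converges to a Dirac mass at the origin, which the quartic weight $\norm{2\pi\bm{\xi}}^4$ then annihilates — but this does not affect the result.
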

See Section \ref{distance calc stationary sec} in the supplementary material for a proof of this and other results of this section.
We plot the marginal density of the prior on $\kappa $ and $\bm{v}$ for $\lambda_{\bm{\theta}}=\lambda_{\bm{v}}=16\pi^2$ in \Cref{prior figure}. The marginal prior densities take a maximum at $\kappa =0$ and $\bm{v}=0$ respectively, and by construction, the prior on $\bm{v}$ is radially symmetric.

\begin{figure}[H]
  \centering
  \includegraphics[width=\textwidth]{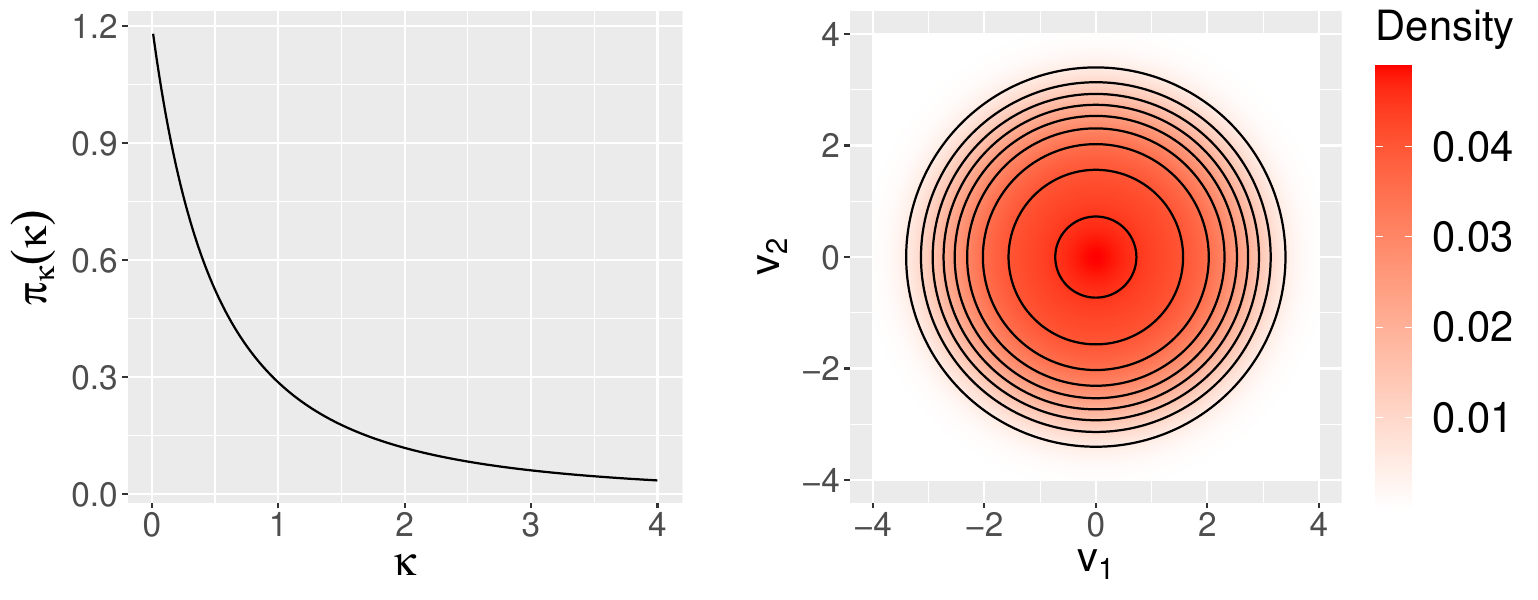}
  \caption{Marginal PC prior density of $\kappa $ and $\bm{v}$ obtained in \Cref{Prior r kappa stationary theorem} for $\lambda_{\bm{\theta}}=\lambda_{\bm{v}}=16\pi^2$.  }
  \label{prior figure}
\end{figure}

The hyperparameter $\lambda_{\bm{\theta}} $ determines the flexibility of the model (how much we penalize large values of $d(\kappa,\bm{v} )$), whereas $\lambda_{\bm{v}}$ controls the degree of anisotropy (how much we penalize large values of $\norm{\bm{v}}$). Their values can be set to agree with desired quantiles using the following two results.
\begin{theorem}\label{par r theorem}
 The prior for $r:=\norm{\bm{v}} $ satisfies $\mathbb{P}[r >r_0]=\beta $ if and only if
  \begin{equation*}
    \lambda_{\bm{v}} =-\frac{\log(\beta )}{f(r_0)-f(0)} .
  \end{equation*}
\end{theorem}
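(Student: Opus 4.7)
The plan is to compute the marginal distribution of $r = \|\bm{v}\|$ from the joint density in \Cref{Prior r kappa stationary theorem} and then evaluate its tail. First I would switch from Cartesian $\bm{v} = (v_1, v_2)$ to polar coordinates $(r, \alpha)$ with Jacobian $r$, noting that the density $\pi(\kappa, \bm{v})$ in \eqref{pc1} depends on $\bm{v}$ only through $\|\bm{v}\|$. Pulling the factor $1/(2\pi\|\bm{v}\|)$ against the Jacobian $r = \|\bm{v}\|$ and integrating over $\alpha \in [0, 2\pi)$ cancels all angular dependence, leaving
\begin{equation*}
    \pi(\kappa, r) = \lambda_{\bm{\theta}} \lambda_{\bm{v}} f'(r) f(r) \exp\bigl(-\lambda_{\bm{v}}(f(r) - f(0)) - \lambda_{\bm{\theta}} f(r) \kappa\bigr).
\end{equation*}

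Next I would marginalize out $\kappa \in (0, \infty)$, which is a standard exponential integral in $\kappa$ with rate $\lambda_{\bm{\theta}} f(r) > 0$ (since $f(r) \geq f(0) > 0$ from \eqref{pc2}). This integral evaluates to $1/(\lambda_{\bm{\theta}} f(r))$ and cancels cleanly against the factors $\lambda_{\bm{\theta}}$ and $f(r)$ in the density, yielding
\begin{equation*}
    \pi(r) = \lambda_{\bm{v}} f'(r) \exp\bigl(-\lambda_{\bm{v}}(f(r) - f(0))\bigr), \qquad r \geq 0.
\end{equation*}
This is precisely the density obtained by pushing an $\mathrm{Exp}(\lambda_{\bm{v}})$ random variable through $r \mapsto f(r) - f(0)$, confirming the design principle stated before the theorem that $f(\|\bm{v}\|) - f(0) \sim \mathrm{Exp}(\lambda_{\bm{v}})$.

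Finally, since $\cosh(2r)$ is strictly increasing on $[0,\infty)$, so is $f$, and in particular $r > r_0 \Leftrightarrow f(r) - f(0) > f(r_0) - f(0)$. Therefore, substituting $u = f(r) - f(0)$,
\begin{equation*}
    \PP[r > r_0] = \int_{r_0}^\infty \lambda_{\bm{v}} f'(r) \exp\bigl(-\lambda_{\bm{v}}(f(r) - f(0))\bigr) \d r = \exp\bigl(-\lambda_{\bm{v}}(f(r_0) - f(0))\bigr).
\end{equation*}
Setting this equal to $\beta$ and solving gives $\lambda_{\bm{v}} = -\log(\beta)/(f(r_0) - f(0))$, which is both necessary and sufficient. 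There is no substantive obstacle here: the content of the statement is entirely determined by the joint density \eqref{pc1}, and the only mild care is to keep track of the polar Jacobian and to note that $f$ is strictly monotonic so the tail event transfers cleanly.
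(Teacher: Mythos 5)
Your proposal is correct and rests on the same key fact as the paper's proof, namely that $f(r)-f(0)\sim\mathrm{Exp}(\lambda_{\bm{v}})$ together with the strict monotonicity of $f$, which transfers the tail event $\{r>r_0\}$ to $\{f(r)-f(0)>f(r_0)-f(0)\}$. The only difference is presentational: the paper invokes this exponential law directly from the construction of the prior on $f$ (equation \eqref{prior f,g}), whereas you recover it by marginalizing the joint density \eqref{pc1} over $\alpha$ and $\kappa$ --- a slightly longer but equally valid route.
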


\begin{theorem}\label{par kappa theorem} The prior for $\kappa $ satisfies $\mathbb{P}[\kappa >\kappa _0]=\alpha$ if and only if
  \begin{equation*}
    \lambda_{\bm{\theta}} = \frac{1}{\kappa_0} \qty(\frac{1}{f(0)} W_0\qty(\frac{\ex{\lambda_{\bm{v}} f(0)}\lambda_{\bm{v}}f(0)}{\alpha} ) -\lambda_{\bm{v}}),
  \end{equation*}
 where $W_0$ is the principal branch of the Lambert function. That is, $W_0(x)$ is the real-valued inverse of $x\ex{x}$ for $x\ge 0$,
  \begin{align*}
 x=W_0(x)\ex{W_0(x)}, \quad\forall x \geq 0.
  \end{align*}
\end{theorem}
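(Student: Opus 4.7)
The plan is to compute the marginal tail probability $\mathbb{P}[\kappa>\kappa_0]$ in closed form from the joint prior~\eqref{pc1} and then invert the resulting scalar equation for $\lambda_{\bm{\theta}}$. I would swap the order of integration and integrate in $\kappa$ first, with $\bm{v}$ held fixed. Writing $r=\norm{\bm{v}}$, the inner integral $\int_{\kappa_0}^{\infty} \ex{-\lambda_{\bm{\theta}} f(r)\kappa}\,d\kappa = \ex{-\lambda_{\bm{\theta}} f(r)\kappa_0}/(\lambda_{\bm{\theta}} f(r))$ contributes a factor that exactly cancels the $\lambda_{\bm{\theta}}$ and one copy of $f(r)$ from the joint density. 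Converting to polar coordinates $d\bm{v}=r\,dr\,d\alpha$ removes the remaining $1/r$ factor, and the angular integral contributes $2\pi$, reducing the computation to a single radial integral.

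Applying the substitution $u = f(r)$ --- valid because~\eqref{pc2} shows that $f$ is smooth and strictly increasing on $[0,\infty)$ with $f(r)\to\infty$ --- this radial integral becomes elementary and evaluates to
\begin{equation*}
\mathbb{P}[\kappa>\kappa_0] \;=\; \frac{\lambda_{\bm{v}}\,\ex{-\lambda_{\bm{\theta}}\kappa_0 f(0)}}{\lambda_{\bm{v}}+\lambda_{\bm{\theta}}\kappa_0}.
\end{equation*}
Setting this equal to $\alpha$, clearing the denominator, and multiplying both sides by $f(0)\,\ex{\lambda_{\bm{v}} f(0)}$ produces the equation $z\,\ex{z}=\lambda_{\bm{v}} f(0)\,\ex{\lambda_{\bm{v}} f(0)}/\alpha$, where $z:=f(0)(\lambda_{\bm{v}}+\lambda_{\bm{\theta}}\kappa_0)$. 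Since $z \ge \lambda_{\bm{v}} f(0) \ge 0$, one may invert via the principal Lambert branch to obtain $z = W_0\bigl(\lambda_{\bm{v}} f(0)\ex{\lambda_{\bm{v}} f(0)}/\alpha\bigr)$, and solving for $\lambda_{\bm{\theta}}$ recovers the claimed formula.

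The ``if and only if'' direction follows from monotonicity: the closed form above is continuous and strictly decreasing in $\lambda_{\bm{\theta}}\in[0,\infty)$, with value $1$ at $\lambda_{\bm{\theta}}=0$ and limit $0$ as $\lambda_{\bm{\theta}}\to\infty$, so the equation $\mathbb{P}[\kappa>\kappa_0]=\alpha$ admits a unique positive root for every $\alpha\in(0,1)$. The main obstacle is not analytic but algebraic: spotting the affine rescaling $z=f(0)(\lambda_{\bm{v}}+\lambda_{\bm{\theta}}\kappa_0)$ that puts the transcendental equation into the Lambert form $z\ex{z}=C$. Everything else --- the two elementary integral evaluations and the monotonicity check --- is routine.
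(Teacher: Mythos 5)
Your proposal is correct and follows essentially the same route as the paper: both reduce the problem to the closed form $\mathbb{P}[\kappa>\kappa_0]=\lambda_{\bm{v}}\ex{-f(0)\lambda_{\bm{\theta}}\kappa_0}/(\lambda_{\bm{\theta}}\kappa_0+\lambda_{\bm{v}})$ and then invert via the substitution $z=f(0)(\lambda_{\bm{v}}+\lambda_{\bm{\theta}}\kappa_0)$ and the principal Lambert branch. The only cosmetic difference is that you integrate out $\kappa$ first and then $f$, whereas the paper first computes the marginal density $\pi_\kappa$ by integrating over $f$ and then forms the CDF; both orders give the same tail formula.
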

When specifying values of $\lambda_{\bm{\theta}},\lambda_{\bm{v}}$, it is useful to consider the ratio between the eigenvalues of $\bm{H}_{\bm{v}}$ and the empirical correlation range. These measure respectively how much more correlated the field is in one direction in space and the distance at which the field becomes essentially uncorrelated (see Section \ref{parameterization section}). We denote these by
\begin{align*}
 a:= \exp(\norm{\bm{v} }), \quad \rho:=\sqrt{8}{\kappa^{-1} } .
\end{align*}
\Cref{par r theorem} and \Cref{par kappa theorem} can then be rewritten as follows.
\begin{theorem}\label{quantiles theorem}
 The PC priors satisfy that $\mathbb{P}[a>a_0]=\beta $ and $\mathbb{P}[\rho <\rho_0]=\alpha$ if and only if
  \begin{align*}
    \lambda_{\bm{v}} =-\frac{\log(\beta )}{f(\log(a_0)) -f(0)}, \quad \lambda_{\bm{\theta}} = \frac{\rho _0}{\sqrt{8} } \qty(\frac{1}{f(0)} W_0\qty(\frac{\ex{\lambda_{\bm{v}} f(0)}\lambda_{\bm{v}}f(0)}{\alpha} ) -\lambda_{\bm{v}})
 .
  \end{align*}
\end{theorem}
In a practical application, $\alpha,\beta $ can be chosen to be small (for example $0.01$), $a_0$ can be chosen to be an unexpectedly large amount of anisotropy, and $\rho_0$ can be chosen as a surprisingly small correlation range for the field.

The parameters $(\kappa,\bm{v}) $ can be written as a joint transformation of a three-dimensional vector with a standard multivariate Gaussian distribution. Thus, it is possible to efficiently generate independent samples $(\kappa,\bm{v})$ through sampling multivariate Gaussian random variables, which is straightforward.

\begin{theorem}\label{reparameterization theorem stationary}
 Let $\bm{Y} \sim \Nn(\bm{0}, \mathbf{I}_3)$ and write respectively $\Phi, R$ for the CDFs of a univariate standard Gaussian and Rayleigh distribution with shape parameter $1$. Define
  \begin{align*}
 A := \sqrt{Y_1^2 + Y_2^2}, \quad B := f(0) - {\log(1 - R(A))}/{\lambda_{\bm{v}}}.
  \end{align*}
 Then, it holds that \revisiontwo{for $f$ as in \eqref{pc2} }
  \begin{align*}
 (v_1, v_2, \kappa) \stackrel{d}{=} \varphi(Y_1, Y_2, Y_3) := \left( f^{-1}(B) \frac{Y_1}{A}, f^{-1}(B) \frac{Y_2}{A}, -\frac{\log(1 - \Phi(Y_3))}{\lambda_{\bm{\theta}} B} \right),
  \end{align*}
 where $f^{-1}(x) = \frac{1}{2} \cosh^{-1} \left( \frac{x^2}{\pi } - \frac{1}{3} \right)$.
\end{theorem}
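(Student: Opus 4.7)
The plan is to exhibit $\varphi(Y_1,Y_2,Y_3)$ as a composition of inverse CDF transforms matching the factorization of the joint density from \autoref{Prior r kappa stationary theorem}, using the independence properties of $\bm Y\sim\Nn(\bm 0,\mathbf I_3)$.

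First, I would pass to polar coordinates on $\bm v$. Writing $\bm v=(r\cos\alpha,r\sin\alpha)$ with $r=\norm{\bm v}$, $\alpha=\arg(\bm v)\in[0,2\pi)$, the Jacobian is $r$ and the joint density \eqref{pc1}--\eqref{pc2} becomes
\begin{equation*}
\pi(r,\alpha,\kappa) \;=\; \frac{1}{2\pi}\,\cdot\,\lambda_{\bm v}f'(r)\ex{-\lambda_{\bm v}(f(r)-f(0))}\,\cdot\,\lambda_{\bm\theta}f(r)\ex{-\lambda_{\bm\theta}f(r)\kappa}.
\end{equation*}
This displays $(\alpha,r,\kappa)$ as mutually (conditionally) independent with (i) $\alpha\sim U[0,2\pi)$, (ii) $r$ having marginal density $\lambda_{\bm v}f'(r)\ex{-\lambda_{\bm v}(f(r)-f(0))}$ (equivalently, $f(r)-f(0)\sim\mathrm{Exp}(\lambda_{\bm v})$ by the change of variables $s=f(r)$), and (iii) $\kappa\mid r\sim\mathrm{Exp}(\lambda_{\bm\theta}f(r))$. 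The target of the theorem is therefore to show that $\varphi(\bm Y)$ reproduces this three-part factorization.

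Next I would build the match component by component, using $(Y_1,Y_2)\perp Y_3$. Writing $(Y_1,Y_2)=A(\cos\beta,\sin\beta)$ in polar form, standard facts give $A\sim\mathrm{Rayleigh}(1)$ independent of $\beta\sim U[0,2\pi)$. Then $(Y_1/A,Y_2/A)=(\cos\beta,\sin\beta)$ supplies the uniform angle. Since $R$ is the Rayleigh CDF, $R(A)\sim U[0,1]$, hence $-\log(1-R(A))\sim\mathrm{Exp}(1)$ and so $B-f(0)=-\log(1-R(A))/\lambda_{\bm v}\sim\mathrm{Exp}(\lambda_{\bm v})$; applying the deterministic bijection $f^{-1}$ shows $f^{-1}(B)$ has exactly the marginal density of $r$ identified above. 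Finally, independently of $(Y_1,Y_2)$ and hence of $B$, $\Phi(Y_3)\sim U[0,1]$ yields $-\log(1-\Phi(Y_3))\sim\mathrm{Exp}(1)$; dividing by $\lambda_{\bm\theta}B$ gives, conditionally on $B=f(r)$, a random variable distributed as $\mathrm{Exp}(\lambda_{\bm\theta}f(r))$, which is the required conditional law of $\kappa$ given $r$. Assembling these three pieces and using the independence of $Y_3$ from $(Y_1,Y_2)$ (and of $\beta$ from $A$) shows $\varphi(\bm Y)\stackrel{d}{=}(v_1,v_2,\kappa)$.

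The only remaining step is to verify the closed form for $f^{-1}$ by solving $x=f(r)$ from \eqref{pc2}: squaring yields $\cosh(2r)$ as a linear function of $x^2$, and applying $\cosh^{-1}$ (noting $f$ is strictly increasing on $[0,\infty)$, so $f^{-1}$ is well defined) produces the stated formula up to the normalizing constants in the definition of $f$. I do not expect any serious obstacle: the delicate point is to track the conditional/unconditional independences cleanly so that the product density of $\varphi(\bm Y)$ literally agrees, factor by factor, with $\pi(r,\alpha,\kappa)$; once polar coordinates are in place and the three inverse CDF transforms are identified, the rest is routine.
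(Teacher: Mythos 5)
Your proposal is correct and follows essentially the same route as the paper: both arguments reduce to inverse-CDF (Rosenblatt-type) sampling applied to the polar factorization $\pi(\alpha)\,\pi(r)\,\pi(\kappa\mid r)$, using the same ingredients ($A\sim\mathrm{Rayleigh}(1)$ independent of the uniform direction $(Y_1/A,Y_2/A)$, $R(A)$ and $\Phi(Y_3)$ i.i.d.\ uniform on $[0,1]$, and the explicit CDFs $F_r(r_0)=1-\exp(-\lambda_{\bm{v}}(f(r_0)-f(0)))$ and $F_{\kappa\mid r}(\kappa)=1-\exp(-\lambda_{\bm{\theta}}f(r)\kappa)$). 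The only difference is presentational --- the paper packages the conditional inverse-CDF step into a general ``generalized CDF'' lemma before applying it, whereas you verify the three factors directly --- and your closing hedge about the constant in $f^{-1}$ is warranted, since solving $x=f(r)$ from \eqref{pc2} gives $\cosh(2r)=x^2/\pi-1/3$ rather than the $4x^2-1/3$ displayed in the theorem statement.
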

The transformation $\varphi$ only involves standard functions and can be evaluated efficiently. For the proof, see Section \ref{reparameterization section} in the appendix.

To assess the practical performance of this theoretically-grounded prior against common alternatives, we next present a comprehensive simulation study

\section{Simulation study}\label{simulation section}
\subsection{Framework}
This section studies the performance of the PC priors. To do so, we set different priors $\pi_{\kappa, \bm{v}}$ on $\kappa, \bm{v}$, observe a noisy realization of the field, and compare the behavior of the posteriors, resulting from each prior. We consider the random field $u$ that solves our model \eqref{SPDE2D} on the square domain $\Dd =[0,10]^2$ and define the
observation process
\begin{align}\label{obs}
  \bm{y}= \bm{A u} +\bm{\varepsilon},
\end{align}
where $\bm{u}\in \mathbb{R}^n$ is a discrete approximation of the solution to \eqref{SPDE2D} obtained through a FEM on a mesh with $n$ nodes, and $\bm{A} \in \R^{m \times n}$ linearly interpolates to observe $\bm{u}$ at $m=15$ locations $\set{\bm{x}_j}_{j=1}^m$ which are obtained by sampling uniformly from $\Dd$, and $\bm{\varepsilon} \in \R^m$ is a noise vector with
\begin{equation}
  \bm{\varepsilon} \sim \Nn(\bm{0},\bm{Q_\varepsilon} ^{-1}),\quad \bm{Q}_{\bm{\varepsilon}}:=\sigma_{\bm{\varepsilon}}^{-2}\bm{I}_{m}.
\end{equation}

\revisionone{We choose a small number of data points as the focus of this section is on comparing the effect of different priors against the PC priors. For a large amount of data, it is difficult to discern the effect of the priors in the posterior; however, for a small amount of data, the effect of the prior is more pronounced, and the penalization on complexity is more relevant.}

The details of how $\bm{u}$ is sampled from and more detailed results can be found in \Cref{simulation details}.
We set independent priors
\begin{equation}\label{PC priors}
 (\kappa , \bm{v}) \sim \pi_{\kappa , \bm{v}}, \quad \sigma_u\sim \mathrm{Exp}(\lambda_{\sigma_u}), \quad \sigma_{\bm{\varepsilon}}\sim \mathrm{Exp}(\lambda_{\sigma_{\bm{\varepsilon}}}).
\end{equation}
The priors on $\sigma_u, \sigma_{\bm{\varepsilon}}$ correspond to their respective PC priors, see respectively \citet[Section 3.3]{Simpson2014PenalisingMC}, \citet[Theorem 2.1]{fuglstad2019constructing}. 

We compare the performance of our PC prior $\pi_{\rmm{PC}} $  against three alternative prior specifications, an Exponential-Gaussian prior $\pi_{\rmm{EG}}$  and uniform and beta priors $\pi_{\rmm{U}},\pi_\beta$  which are common non-informative choices. The detailed definitions of these alternative priors are provided in Appendix \ref{app:prior comparison}.

We simulate ${\bm{\theta }^{(j)}}^{\rmm{true}}$ from $\pi_\rmm{sim}\in \{\pi_\rmm{PC}, \pi_\rmm{EG}, \pi_\rmm{U}, \pi_\beta\}$, use the FEM to simulate $\bm{u}^{(j)}$ from \eqref{SPDE2D} and then simulate $\bm{y}^{(j)}$ from \eqref{obs}. Then, for each of the four priors $\pi_{\kappa,\bm{v}} \in \{\pi_\rmm{PC}, \pi_\rmm{EG}, \pi_\rmm{U}, \pi_\beta\}$ we approximate the posterior distribution of the parameters $\bm{\theta }=(\kappa, \bm{v}, \sigma_u, \sigma_{\bm{\varepsilon}})$ given the data $\bm{y}^{(j)}$ and evaluate the performance of each of the four priors. This process is repeated $J~=~600$ times and repeated for each of the four possible values of $\pi_{\rmm{sim}}$. Since the posterior is not available in closed form, we approximate it using Pareto smoothed importance sampling. See Appendix \Cref{Posterior sampling appendix} for the details.

\subsection{Results}
The focus of our study is the anisotropy parameters $(\kappa, v_1, v_2)$. As a result, in this section, we discuss the performance of the different priors on these parameters, reserving the results for the remaining parameters for \Cref{simulation details}.

We first show in \Cref{fig: MAP distances} the empirical cumulative distribution function (eCDF) of the vector of distances of the true anisotropy parameters
$(\log(\kappa^{\rmm{true}}),v_1^{\rmm{true}},v_2^{\rmm{true}})$ to the MAP estimates $(\log(\wh{\kappa}),\wh{v}_1,\wh{v}_2)$ for each of the four different distributions on $\bm{\theta}^{\rmm{true}}$.

We observe that $\pi_{\rmm{PC}}$ and $\pi_{\rmm{EG}}$ perform the best and give almost identical results, to the point where it is difficult to distinguish them from the plots. This behavior is to be expected, as both of these priors are similar and lead to almost identical posteriors, as is discussed further in \Cref{simulation details}.

The difference in performance between $\pi_{\rmm{PC}}, \pi_{\rmm{EG}}$ and $\pi_{\rmm{U}}, \pi_\beta $ is clearest for $v_1,v_2$ whereas for $\log(\kappa)$, when $\bm{\theta}^{\rmm{true}}\sim \pi_{\rmm{U}} $ or $\bm{\theta}^{\rmm{true}}\sim \pi_\beta $ all four models give comparable results. The figures relative to the parameters $v_1$ and $v_2$ are almost identical. This is expected by the symmetry in the priors and the likelihood, as there is no preferred direction of anisotropy.
\begin{figure}[H]
  \centering
  \includegraphics[width=\textwidth]{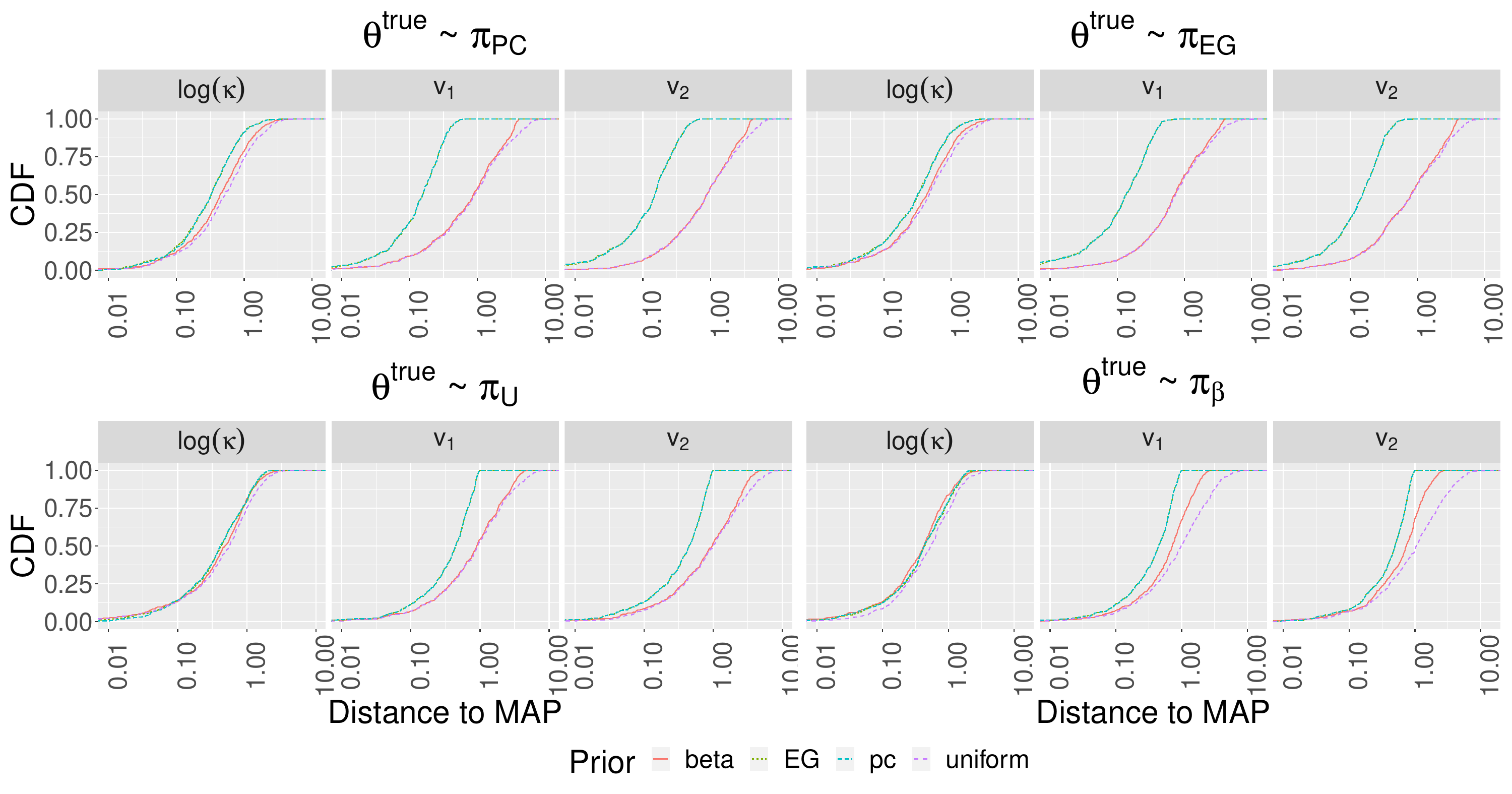}
  \caption{Empirical cumulative distribution function (eCDF) of the absolute distances between true parameter value ($\theta_i^{\text{true}}$) and their estimated values ($\widehat{\theta}_i$) for each of $\log(\kappa)$, $v_1$, $v_2$. In the plots, arranged from left to right and top to bottom, $\bm{\theta}^{\text{true}}$ is simulated from the four distributions—$\pi_{\text{PC}}$, $\pi_{\text{EG}}$, $\pi_{\text{U}}$, and $\pi_{\beta}$. Then, $\bm{y} = \bm{A} \bm{u} + \bm{\varepsilon}$ is observed with true parameter value $\bm{\theta}^{\text{true}}$. Finally, for each prior (red, green, teal, purple), the MAP estimate $\widehat{\bm{\theta}}$ is computed, and the eCDF over 600 simulations of the distances between $\bm{\theta}^{\text{true}}$ and $\widehat{\bm{\theta}}$ is plotted.}
  \label{fig: MAP distances}
\end{figure}
Next, in \Cref{fig: CI lengths}, we show the length of the symmetric equal-tailed $0.95$ credible interval for each parameter. As can be seen from the plots, both $\pi_{\rmm{PC}}$ and $\pi_{\rmm{EG}}$ limit the length of the credible intervals to a similar extent while $\pi_{\rmm{U}},\pi_\beta$ give much wider credible intervals. This aligns with the motivating factor behind the construction of the PC priors, which is to penalize the model's complexity.
\begin{figure}[H]
  \centering
  \includegraphics[width=\textwidth]{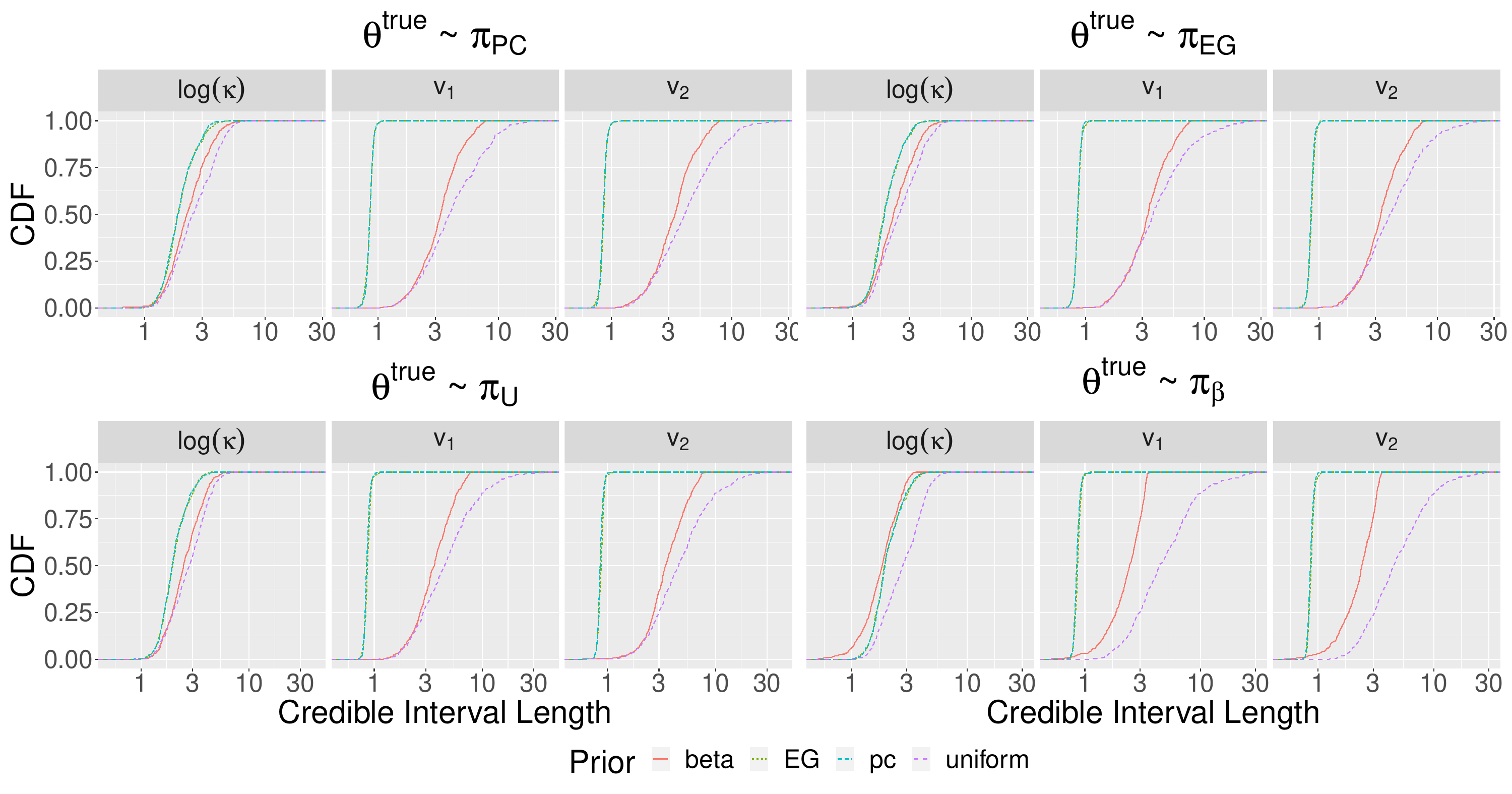}
  \caption{Empirical cumulative distribution function (eCDF) of the length of symmetric equal-tailed $0.95$ credible intervals (CIs) for the posterior of $\log(\kappa)$, $v_1$, $v_2$. In the plots, arranged from left to right and top to bottom, $\bm{\theta}^{\text{true}}$ is simulated from the four distributions—$\pi_{\text{PC}}$, $\pi_{\text{EG}}$, $\pi_{\text{U}}$, and $\pi_{\beta}$. Then, $\bm{y} = \bm{A} \bm{u} + \bm{\varepsilon}$ is observed with true parameter value $\bm{\theta}^{\text{true}}$. Finally, for each prior (red, green, teal, purple), the length of the CIs is estimated using smoothed importance sampling as explained in \Cref{Posterior sampling appendix} in the supplementary material.}
  \label{fig: CI lengths}
\end{figure}
In \Cref{fig: complexity}, we show the eCDF of the posterior mean complexity $\E_{\pi _{\bm{\theta }|\bm{y}}}\qb{d(\kappa , \bm{v})}$ , where the complexity $d$ is defined in \eqref{metric def} for each of the four different true distributions on $\bm{\theta}$. As can be seen, in all four cases, the PC and exponential-Gaussian priors reduce model complexity as compared to the uniform and beta priors.
\begin{figure}[H]
  \centering
  \includegraphics[width=\textwidth]{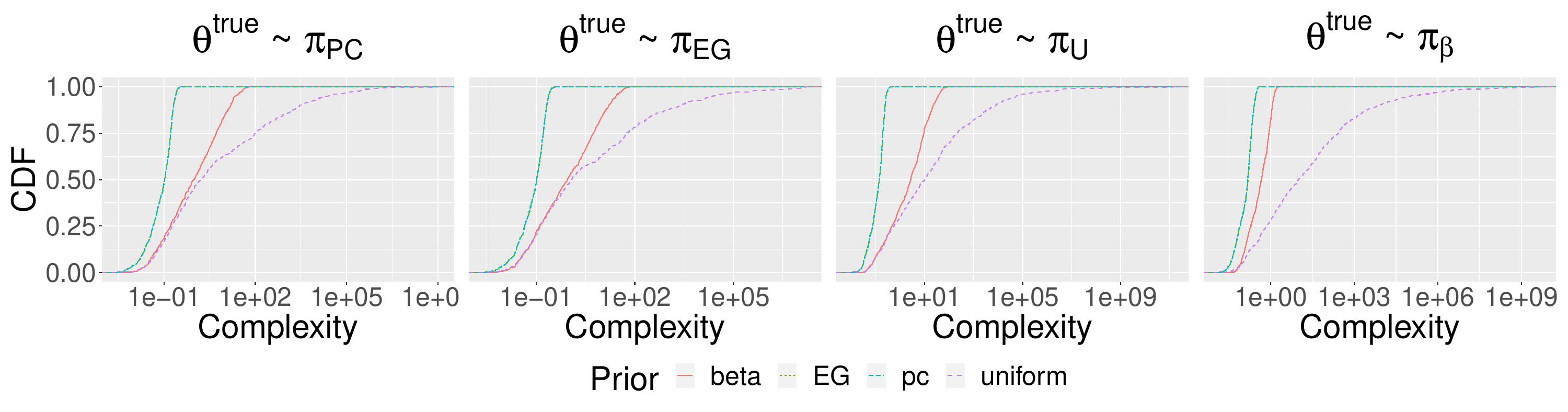}
  \caption{Empirical CDF of the complexity $d(\wh{\kappa } , \wh{\bm{v}})$ defined in \eqref{distance metric}. In the plots, arranged from left to right , $\bm{\theta}^{\text{true}}$ is simulated from the four distributions—$\pi_{\text{PC}}$, $\pi_{\text{EG}}$, $\pi_{\text{U}}$, and $\pi_{\beta}$. Then, $\bm{y} = \bm{A} \bm{u} + \bm{\varepsilon}$ is observed with true parameter value $\bm{\theta}^{\text{true}}$. Finally, for each prior (red, green, teal, purple), $\E_{\pi _{\bm{\theta }|\bm{y}}}\qb{d(\kappa, \bm{v})}$ is computed using smoothed importance sampling.}
  \label{fig: complexity}
\end{figure}

\section{An application to rainfall data}\label{precipitation section}
\subsection{Framework}
In this section, we analyze the performance of the anisotropic model and the PC priors on a dataset for total annual precipitation in southern Norway between September 1, 2008, and August 31, 2009. This data set was studied by \citet{Simpson2014PenalisingMC} and \citet{ingebrigtsen2015estimation} using a linear model.
\begin{align}\label{height model}
 y_i = \beta _0 + \beta _1 h_{i} + u(\bm{x}_i)+ \varepsilon _i, \quad i=1,\ldots,m= 233,
\end{align}
where $y_i$ is the total annual precipitation at location $\bm{x}_i$, $h_i$ is the altitude at location $\bm{x}_i$, $u(\bm{x}_i)$ is a spatially correlated random effect, and $\bm{\varepsilon}\sim \Nn(\bm{0},\sigma_{\bm{\varepsilon}}^2\bm{I})$ is a noise term.
In the articles above, $u$ was modeled using a non-stationary Matérn process
\begin{align*}
 (\kappa^2(\bm{x})-\Delta)\qty(\frac{u(\bm{x})}{\sigma_u(\bm{x})})=\sqrt{4\pi}\kappa(\bm{x})\dot{\Ww},
\end{align*}
and the non-stationarity $\kappa^2(\bm{x}),\sigma_u(\bm{x})$ was parameterized by a log-linear model with covariates elevation and gradient of elevation. \citet{Simpson2014PenalisingMC} then built PC priors for these extra parameters.
\begin{figure}[H]
  \centering
  \begin{subfigure}[b]{0.32\linewidth}
    \centering
    \includegraphics[width=\linewidth]{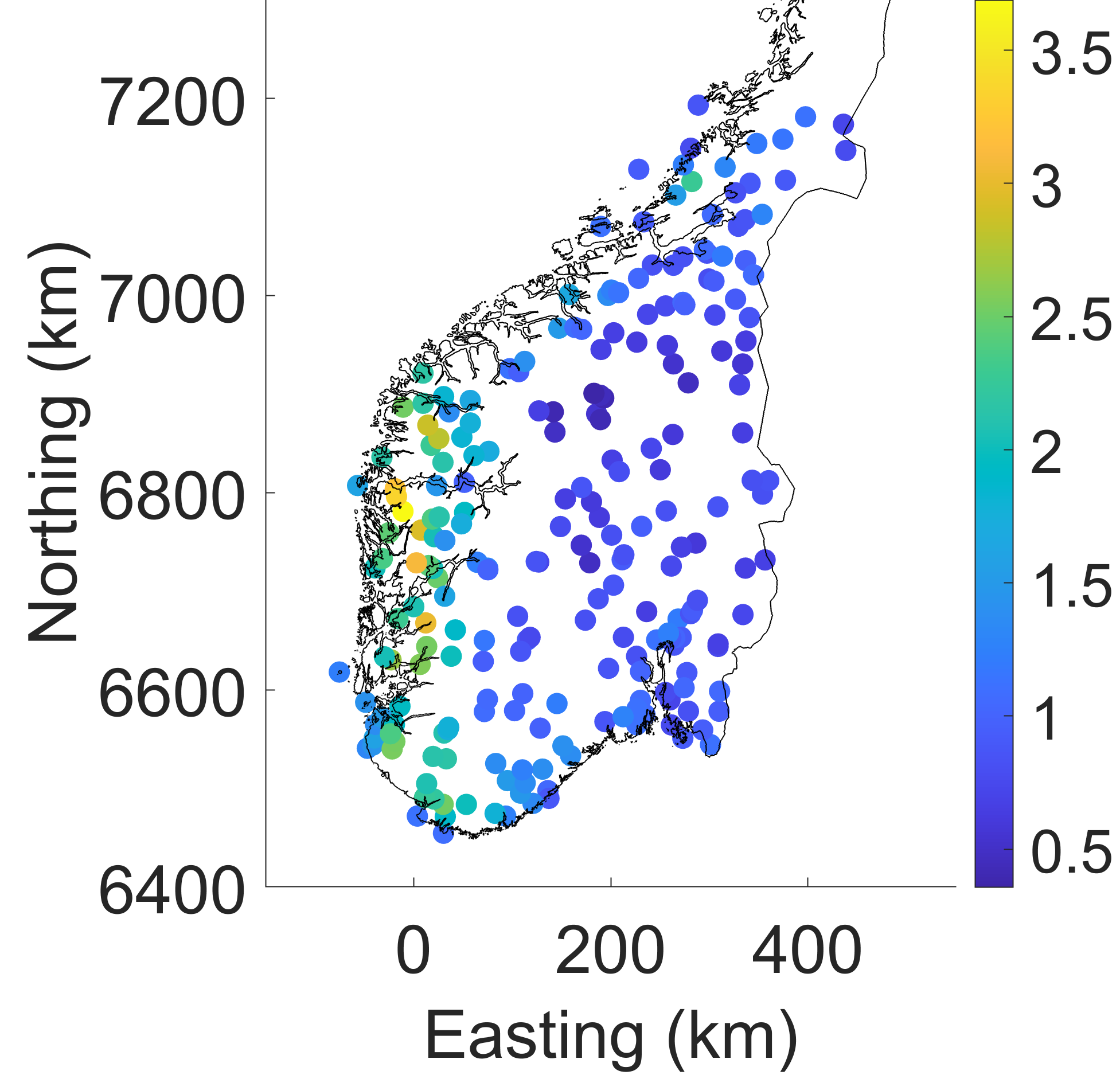}
    \caption{Observed precipitation $\bm{y}$}
    \label{fig:obs_precip}
  \end{subfigure}
  \hfill
  \begin{subfigure}[b]{0.32\linewidth}
    \centering
    \includegraphics[width=\linewidth]{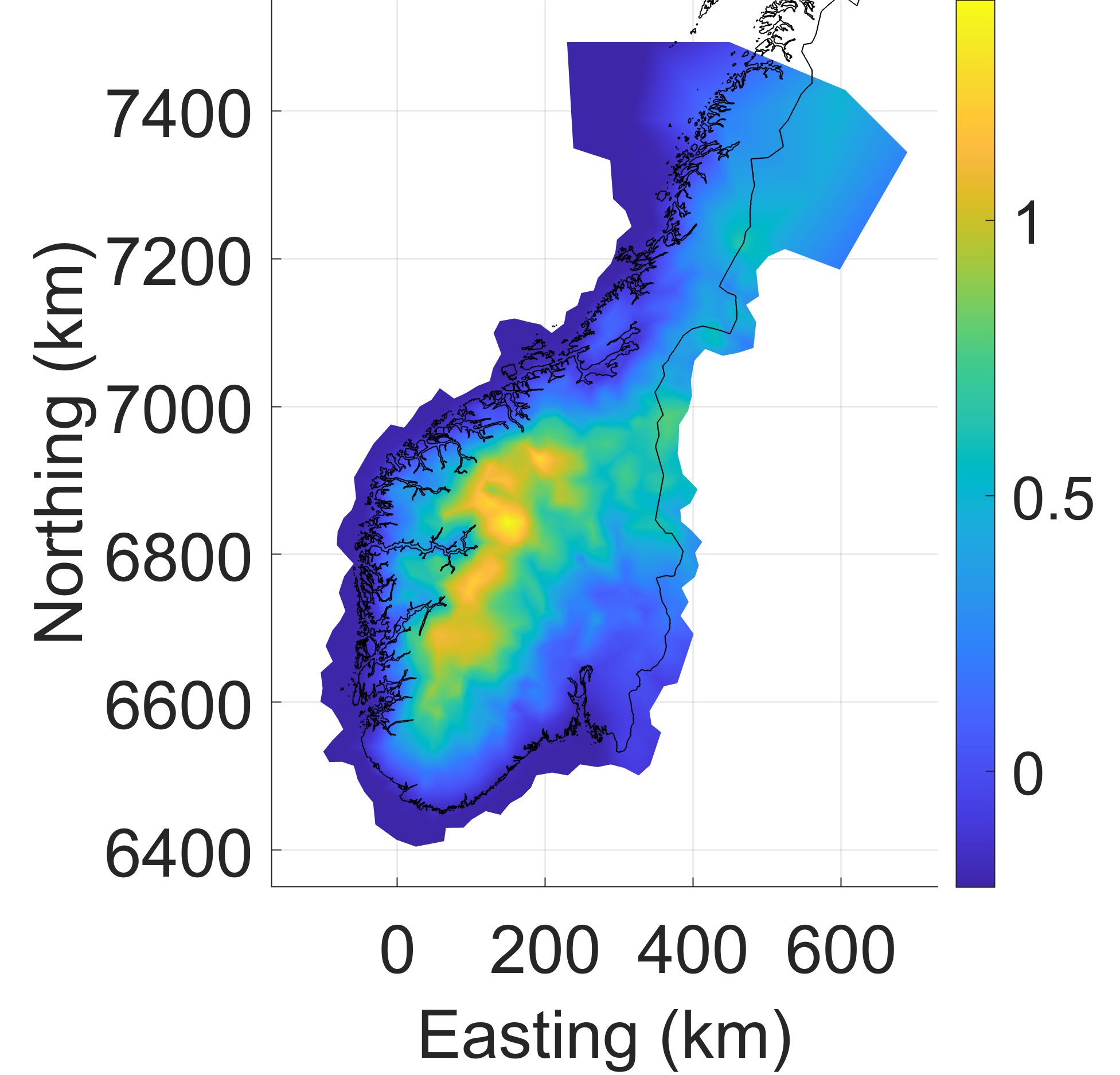}
    \caption{Elevation $h$}
    \label{fig:height}
  \end{subfigure}
  \hfill
  \begin{subfigure}[b]{0.32\linewidth}
    \centering
    \includegraphics[width=\linewidth]{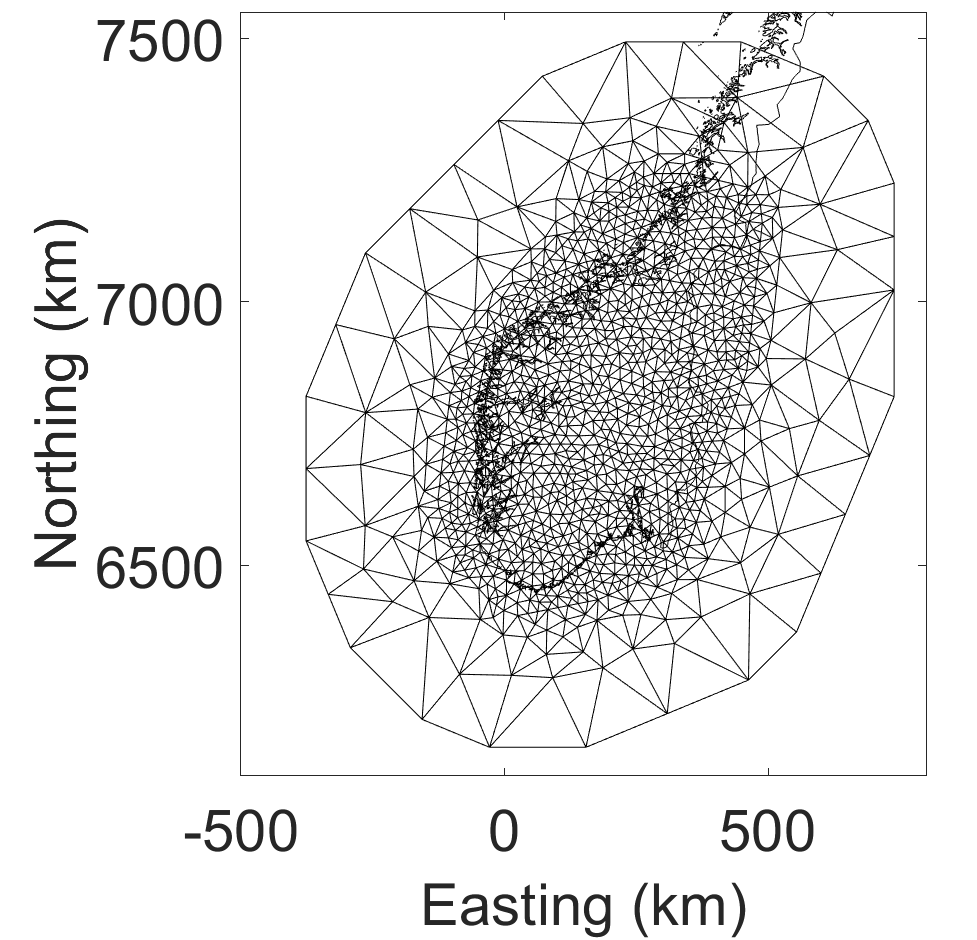}
    \caption{Mesh of domain}
    \label{fig:mesh_domain}
  \end{subfigure}
  \caption{The observed precipitation $\bm{y}$, the elevation $h$, and the mesh of the domain for the precipitation data set.}
  \label{Precip:figs}
\end{figure}

This analysis focuses on the stationary anisotropic setting where $u| \bm{\theta }$ is a solution to \eqref{SPDE2D} with spatially constant parameters. By incorporating $\bm{\beta }:=(\beta _0,\beta _1)$ into $\bm{u}$, our linear model \eqref{height model} fits into the framework of \Cref{simulation section}, where now
\begin{align*}
  \bm{A}_{\bm{\beta }} := (\bm{1}_m, \bm{h},\bm{A}), \quad \bm{u}_{\bm{\beta } } := (\bm{\beta }, \bm{u})
\end{align*}
take the place of $\bm{A}, \bm{u}$ in \eqref{obs}. We will consider
\begin{align*}
  \bm{\beta } \sim \Nn (\bm{0}, \bm{Q}_{\bm{\beta }}^{-1}), \quad \bm{Q}_{\bm{\beta }} = \tau_{\bm{\beta }} \bm{I}_{2}
\end{align*}
independent from $\bm{u}, \bm{\theta }$ and where we set the precision parameter to be $\tau_{\bm{\beta }}= 10^{-4}$.
\subsection{Maximum a posteriori estimates}\label{MAP estimates}
We compare the model in \eqref{height model} under anisotropic PC and EG priors (see \Cref{option1} and \Cref{option2}) on ($\kappa, \bm{v}$) and isotropic PC priors where $\bm{v}$ is set to $\bm{0}$. To do so, we must first derive an isotropic PC prior on $\kappa $ using the distance metric \eqref{pseudo metric} restricted to the case where $\bm{v}=\bm{0}$. Using equation~\eqref{pseudo metric calculation}, we obtain the following result.
\begin{align*}
 D_2 (M_{\kappa , \bm{0}},M_{\bm{0}})= \frac{1}{\sqrt{12 \pi } }\kappa.
\end{align*}
So, by the principle of exponential penalization, $\kappa \sim \rmm{Exp}(\lambda_{\rmm{iso}} )$.
Following \citet{Simpson2014PenalisingMC}, we choose the hyperparameters in the three priors so that\begin{equation*}
  \mathbb{P}[\rho<10]=0.05, \quad \mathbb{P}[\sigma_u>3]=0.05,\quad \mathbb{P}\left[\sigma_{\bm{\varepsilon}}>3\right]=0.05.
\end{equation*}
For the anisotropy parameter $\bm{v}$ we choose the hyperparameter $\lambda_{\bm{v}}$ so that $\mathbb{P}[a>10]=0.05$. This choice of hyperparameters imposes that with probability $0.05$, the field has a correlation range that is $10$ times larger in any given direction.

In \Cref{Prior comparison}, we plot the density of the priors on $\rho$. Since the marginal density of $\rho$ is the same for the isotropic PC and EG priors, we do not plot it. We also plot the density of the PC and (anisotropic) EG priors on $r:=\norm{\bm{v}}$.
\begin{figure}[H]
  \centering
  \includegraphics[width=0.9\textwidth]{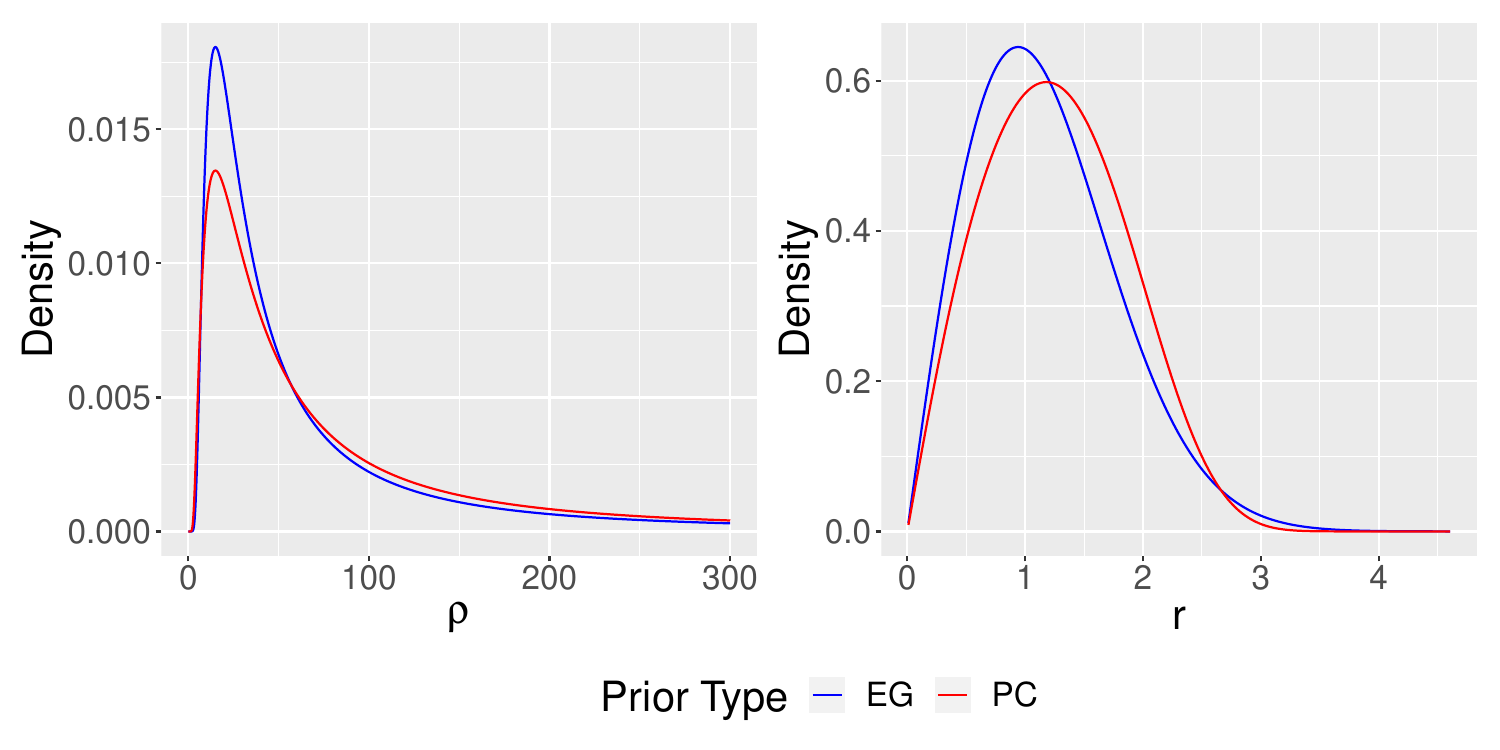}
  \caption{Marginal densities on $\rho$ and $r=\norm{\bm{v}}$ of PC and EG priors}
  \label{Prior comparison}
\end{figure}
The decay of the marginal PC prior on $\kappa $ and of the EG prior are both exponential. The decay rate of the marginal PC prior on $v$ is $\exp(- c_1 \exp(\norm{\bm{v}} ))$, whereas for the EG prior it decays slower, as $\exp(-c_2 \norm{\bm{v}}^2)$ for some constants $c_1,c_2$.

The MAP estimates and symmetric $95\%$ credible intervals for the anisotropic PC, anisotropic EG, and isotropic PC models are shown in \Cref{tab:map_and_ci}.

\begin{table}[H]
  \begin{tabular}{lcccccc}
    \toprule
    \multirow{2}{*}{Parameter}    & \multicolumn{3}{c}{MAP Estimates} & \multicolumn{3}{c}{95\% Credible Intervals}                                 \\
    \cmidrule(lr){2-4} \cmidrule(lr){5-7}
                    & ANISO PC             & ANISO EG                  & ISO PC
                    & ANISO PC             & ANISO EG                  & ISO PC
    \\
    \midrule
    $\wh{\rho}$           & 201                & 201                     & 193  & (132, 310)    & (132, 311)    & (128, 290)   \\
    $\wh{v}_1$            & $-0.45$              & $-0.43$                   & -   & $(-0.81, -0.11)$ & $(-0.78, -0.10)$ & -       \\
    $\wh{v}_2$            & $0.04$              & $0.03$                   & -   & $(-0.28, 0.35)$ & $(-0.28, 0.34)$ & -       \\
    $\wh{\sigma}_u$         & 0.63               & 0.64                    & 0.65  & $(0.46, 0.88)$  & $(0.46, 0.89)$  & $(0.47, 0.90)$ \\
    $\wh{\sigma}_{\bm{\varepsilon}}$ & 0.14               & 0.14                    & 0.13  & $(0.11, 0.18)$  & $(0.11, 0.18)$  & $(0.10, 0.16)$ \\
    \bottomrule
  \end{tabular}
  \caption{\label{tab:map_and_ci}MAP estimates and 95\% credible intervals for the parameters of the three different precipitation models: the anisotropic model with PC priors (ANISO PC), the anisotropic model with EG priors (ANISO EG), and the isotropic model with PC priors (ISO PC).}
\end{table}

The credible intervals for $v_1$ in the anisotropic models do not contain $0$, indicating that with high confidence, anisotropy is present in the precipitation field.

The half angle vector of the MAP for the anisotropic model with PC priors $\wh{\bm{v}}$ is $\tl{\wh{\bm{v}}}=(0.02,0.48)$, and indicates that the precipitation is $a=1.64$ times more correlated in the North-South direction than in the East-West direction.
In \Cref{fig:predPrecPC}, we plot the posterior prediction, latent field, and the covariance function of the anisotropic field $\bm{u}$ with parameters $\wh{\kappa},\wh{\bm{v}}$.
\begin{figure}[H]
  \centering
  \begin{subfigure}[b]{0.28\linewidth}
    \centering
    \includegraphics[width=\linewidth]{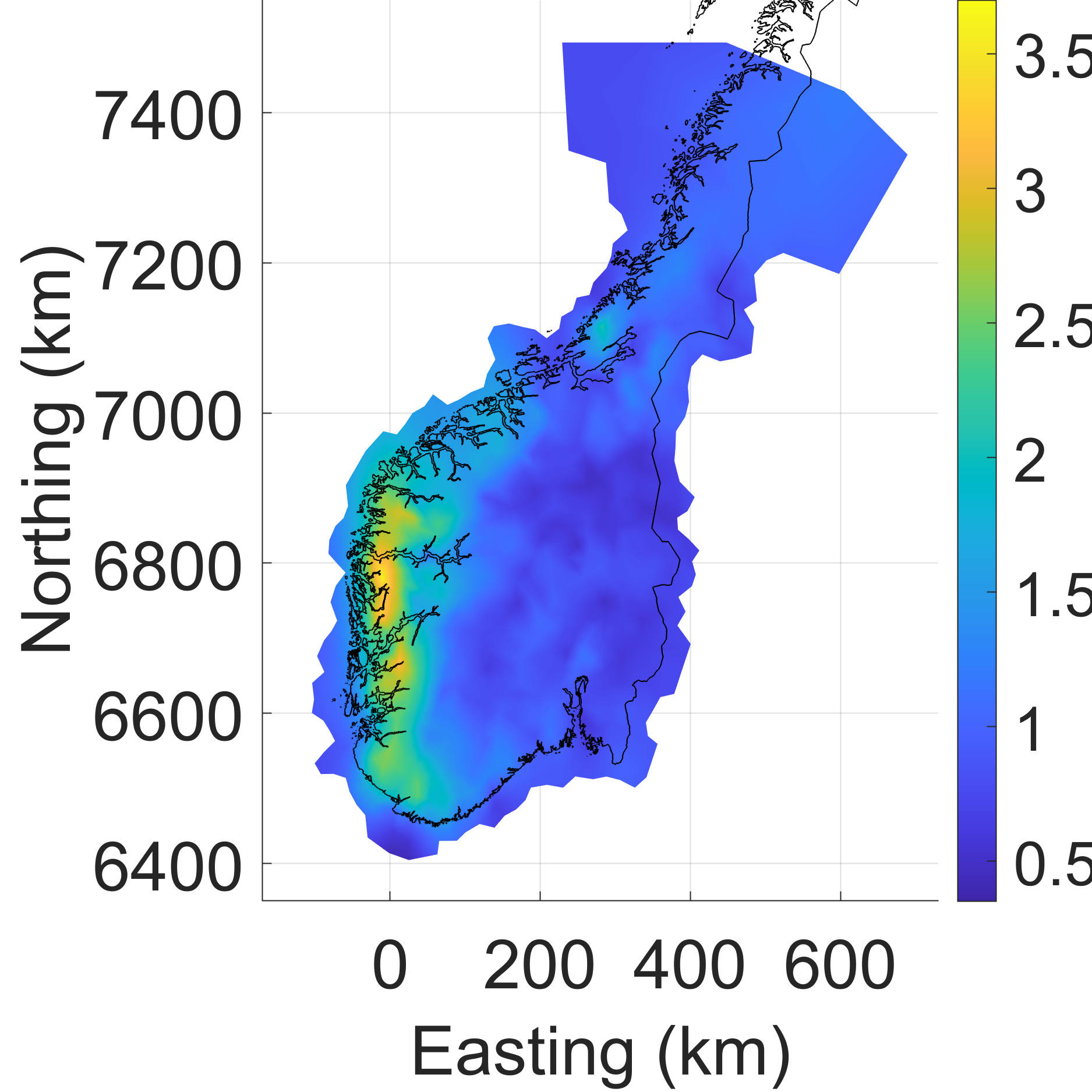}
    \caption{$\E[\beta_0+ \beta_1 \bm{h}+ \bm{u}| \bm{y}]$}
    \label{fig:pred_prec_pc}
  \end{subfigure}
  \hfill
  \begin{subfigure}[b]{0.28\linewidth}
    \centering
    \includegraphics[width=\linewidth]{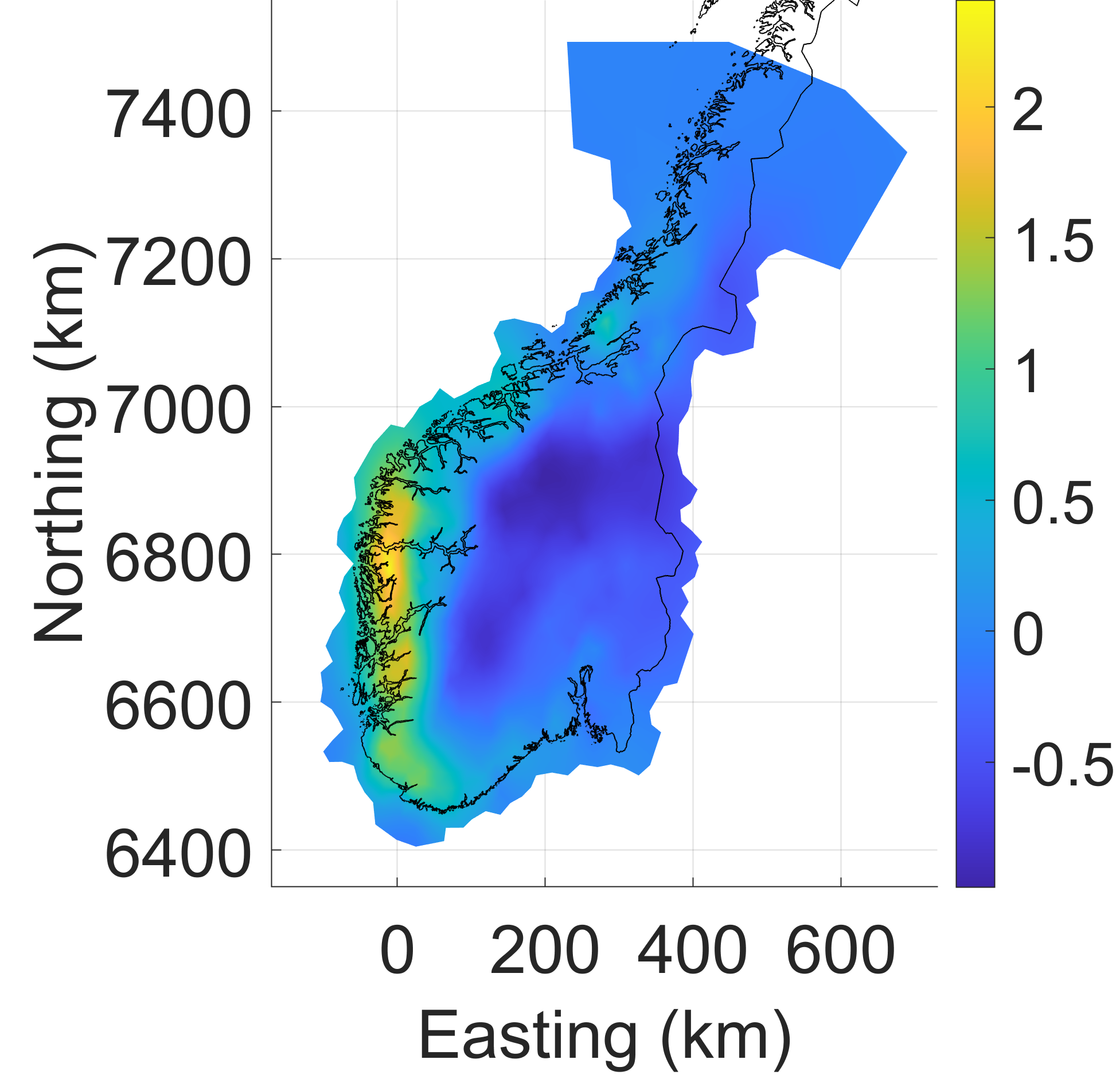}
    \caption{$\E\qb{\bm{u} | \bm{y}}$}
    \label{fig:post_mean_u_pc}
  \end{subfigure}
  \hfill
  \begin{subfigure}[b]{0.40\linewidth}
    \centering
    \includegraphics[width=\linewidth]{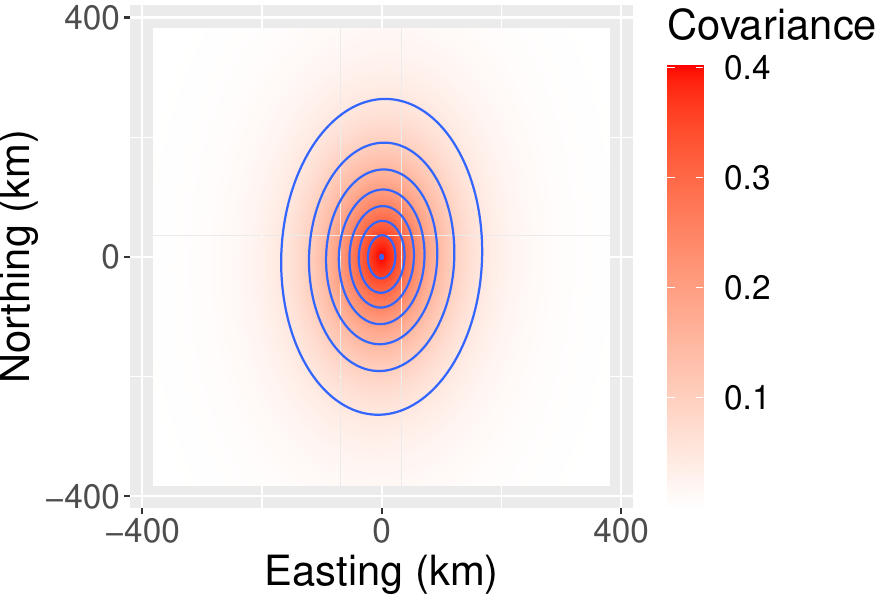}
    \caption{$r(\bm{x})=\mathbb{E}\qty[u_{\wh{\kappa},\wh{\bm{v}}}(\bm{x}+\bm{z})u_{\wh{\kappa},\wh{\bm{v}}}(\bm{z})]$}
    \label{fig:covariance_precip}
  \end{subfigure}
  \caption{Posterior estimates for the precipitation data set under the anisotropic PC prior. (a) Predictive mean of precipitation. (b) The posterior mean of the latent spatial field $\bm{u}$. (c) The anisotropic covariance function for the MAP estimate, showing a stronger correlation (longer range) in the North-South direction.}
  \label{fig:predPrecPC}
\end{figure}

\subsection{Model performance}

To assess the performance of the models, we calculate various scores for each model. We recall that, given a (predictive) distribution $P$ and an (observed) point ${y}$, a score is a function $S({P},{y})$ that measures how closely the prediction ${P}$ matches the observation ${y}$. If ${y}$ follows the distribution ${Q}$, then the expected score is
\begin{align*}
 S({P},{Q}) := \E_{{Q}}[S({P},{y})].
\end{align*}
The score should be minimized when the true distribution ${Q}$ matches the predictive distribution ${P}$. In this case, the score is said to be \emph{proper}. It is \emph{strictly proper} if it is minimized if and only if ${P}={Q}$. We consider the \emph{squared error} ($\mathrm{SE}$), \emph{continuous ranked probability score} ($\rmm{CRPS}$, and \emph{Dawid-Sebastiani score} ($\rmm{DSS}$). These are defined in Appendix \Cref{app:Score definitions}.

Given a sample sample $\bm{y}=(y_1,...,y_n)$, a vector of predictive distributions $\bm{F}=(F_1,...,F_n)$ and a score $S$ , the \emph{mean score} is defined as
\begin{align*}
  \overline{S}(\bm{F}, \bm{y}):= \frac{1}{n}\sum_{i=1}^n S(F_i,y_i).
\end{align*}
By the linearity of the expectations, if $S$ is a (strictly) proper scoring rule, then so is $\overline{S}$.
\subsection{Score results}\label{results precip}
In this section, we calculate the mean scores
\begin{align*}
  \rmm{RMSE}:=\qty(\overline{SE}(\bm{F},\bm{y}))^\frac{1}{2}, \quad \overline{\rmm{CRPS}}(\bm{F},\bm{y}), \quad \overline{\rmm{DSS}}(\bm{F},\bm{y}),
\end{align*}
where $F_i:= \pi(y_i| \bm{y}_{-i})$ is the LOO predictive distribution under each of the three models in Subsection \ref{MAP estimates}, and $\bm{y}$ is the observation.

An initial calculation showed that the scores of the previous sections are very similar under all three priors. From this, we conclude that the model is very informative, and the influence of the priors is limited. As a result, to better compare the priors, we uniformly sub-sample $\bm{y}$ and observe only $\bm{y}' \in \R^{n_y'}$ with $n_y'< n_y=233$. We then calculate the scores from the previous sections. Due to the extra variability introduced by sub-sampling the observations, we repeat this process $10$ times. The resulting mean scores are shown in \Cref{fig: Simulation_images-LO_scores-pdf}. The anisotropic models perform better with less data, with the PC prior performing slightly better than the EG prior. Whereas with more data, the results are almost equivalent.

\begin{figure}[H]
  \centering
  \includegraphics[width=\textwidth]{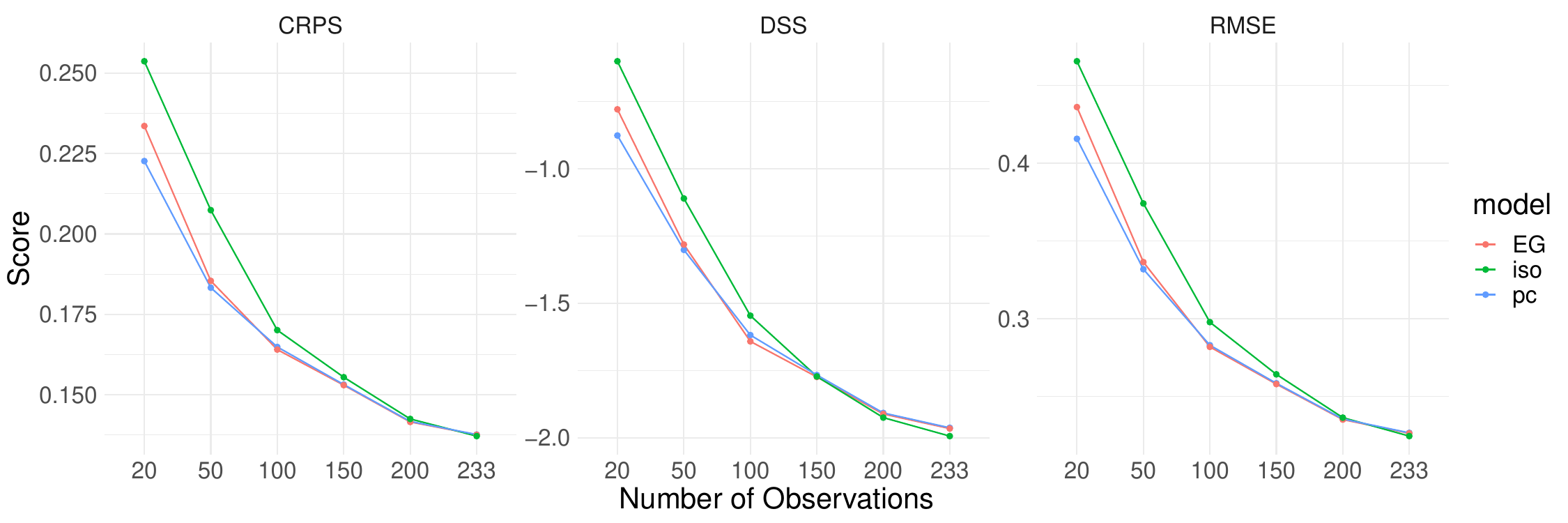}
  \caption{Leave-one-out cross-validation scores as a function of the number of observed data points. Lower scores are better. The anisotropic models (ANISO PC, ANISO EG) show a distinct performance advantage over the isotropic model (ISO PC) in low-data regimes.}
  \label{fig: Simulation_images-LO_scores-pdf}
\end{figure}
Based on the trend of the above results, it is reasonable to check whether the isotropic model outperforms the other two models with a larger number of observations. To test this hypothesis, we conducted a simulation study in which a larger number of observations was generated, up to $n_y=1000$. The results are included in \Cref{simulation study precipitation appendix} in the supplementary material and show that this is not the case. The isotropic model does not outperform the anisotropic models with a larger amount of observations.

\section{Discussion}\label{Discussion section}
In this study, we constructed a smooth, \revisiontwo{invertible}, and geometrically interpretable parameterization for a 2D anisotropic spatial Gaussian model. We developed penalized complexity (PC) priors for these parameters, defining model complexity as a Sobolev seminorm of the correlation function. This distance was calculated in a closed form using the spectral density.

Our performance comparison of the PC prior against other priors demonstrated its effectiveness in penalizing model complexity. We found that priors designed to match the quantiles of the PC prior yielded similar results. Both these priors significantly outperformed the other non-informative priors considered, highlighting the necessity of incorporating penalization in prior information to prevent overfitting.

Applying the anisotropic model to a real dataset of precipitation in Southern Norway, we observed that the anisotropic model outperformed the isotropic model when the number of observations was small, with the PC prior slightly outperforming the other choices. However, with a larger number of observations, the isotropic model performed similarly to the anisotropic models. These results indicate that the anisotropic model is more informative when the data is scarce, but as the number of observations increases, the isotropic model becomes more competitive. The results also suggest that a non-stationary model could be better suited to capture spatially varying patterns in the data.

In conclusion, we advocate for the use of informative priors for the anisotropy parameters in spatial Gaussian models. The PC prior is highly effective, but other priors designed to match desired quantiles can also be effective and are simpler to construct.

\revisiontwo{   Finally, we note that while the Matérn class allows for flexible local behavior, it restricts the field to exponential tail decay. Recent work by \citet{ma2023beyond} introduces the Confluent Hypergeometric (CH) class, which generalizes the Matérn family to allow for polynomial tail decay. While the Matérn class is a limiting case of the CH class, the PC framework presented here relies on the existence of a standard spectral density, which is not guaranteed for the CH class in regimes of strong long-range dependence ($\alpha \le d/2$) where the spectrum exhibits a singularity at the origin. Extending PC priors to such heavy-tailed classes remains a promising avenue for future research, likely requiring a complexity distance adapted to handle such singularities, for instance, via weighted spectral norms.}

Looking forward, we aim to extend this study to the non-stationary setting where the model parameters vary spatially. This presents a challenge as there is no agreed-upon definition of correlation function or spectral density, necessitating a different definition of complexity. We are also interested in extending the parameterization to higher dimensions. The current construction relies on a ``half-angle'' parameterization of the anisotropy vector, which is not easily extendable to higher dimensions. Finally, we are interested in extending the construction to different orders of regularity.


\newpage
\section*{Supplementary Material}
\addcontentsline{toc}{section}{Supplementary Material} %
\setcounter{section}{0}
\renewcommand{\thesection}{S\arabic{section}}
\renewcommand{\thesubsection}{S\arabic{section}.\arabic{subsection}}

\section{Model derivation}\label{app:model derivation}
Here we informally motivate the choice of the SPDE \eqref{SPDE2D}. By definition, a field $u$ on $\R^d$ is \emph{stationary} if its covariance $K$ depends only on the relative position between two points. That is, for some function $r$ called the \emph{covariance function}
\begin{equation}\label{covariance function def}
 K(\bm{x},\bm{y}):={\rmm{Cov}[u(\bm{x}),u(\bm{y})]}=r(\bm{y}-\bm{x}), \quad\forall \bm{x},\bm{y}\in\R^d.
\end{equation}
Equation \eqref{covariance function def}, which is an extension of \eqref{stationary}, requires that the Euclidean translation $\bm{y}-\bm{x}$ captures all relevant information related to the covariance of the field between any two spatial locations $\bm{x}$ and $\bm{y}$. However, Euclidean geometry may not fit with the underlying properties of the field. For example, suppose that our field describes the geological properties of some homogeneous terrain. Even if the field was initially stationary, stationarity is lost if the terrain underwent a geological deformation $\psi$. Instead, the relationship
\begin{equation*}
 K(\bm{x},\bm{y})=r(\psi^{-1}(\bm{x})-\psi^{-1}(\bm{y}))
\end{equation*}
would be more appropriate. This
is the \emph{deformation method} \citep{sampson1992nonparametric} and is, for instance, sometimes used to connect different layers in deep Gaussian processes. In deep Gaussian processes, this transformation is then random and not diffeomorphic \citep{dunlop2018deep}.

The deformation method offers a straightforward approach to constructing non-stationary fields from stationary ones. Furthermore, it can be used in a layered approach, where a stationary field is first built through some appropriate method and then deformed into a non-stationary field.

As discussed previously, SPDEs provide a convenient framework for constructing stationary fields, for example, as seen in \eqref{SPDE2011}. Let us see what happens when we deform such a field. To this aim, consider an (unknown) $d$-dimensional manifold $\tl{\mathcal{D}}$ and our target manifold $\mathcal{D}$ obtained through a diffeomorphic transformation
\begin{equation*}
  \psi:\tl{\mathcal{D}}\to\mathcal{D}.
\end{equation*}
Further consider the solutions $\tl{u}$ to \begin{equation}\label{stationary model}
 (1-\Delta)\tl{u}={\sqrt{4\pi}}\dot{\tl{\Ww}},\quad \text{on }\tl{\mathcal{D}}.
\end{equation}
Here, either $\tl{\Dd}=\R^d$, in which case, to obtain uniqueness, we impose a stationarity condition. Otherwise, $\tl{\Dd }$ is a \revisiontwo{smooth bounded domain in} $\mathbb{R}^d$, in which case Neumann conditions or Dirichlet conditions are imposed on the boundary. \revisiontwo{Here and in all that follows $d$ is restricted to $d=1,2,3$ so that the solution has positive regularity.} Then, a change of variables yields that $u:= \sigma_u \cdot\tl{u} \circ \psi^{-1}$ verifies the \emph{non-stationary} SPDE
Then, a change of variables yields that $u:= \sigma_u \cdot\tl{u} \circ \psi^{-1}$ verifies the \emph{non-stationary} SPDE
\begin{equation}\label{TransformedSPDE}
  \frac{1}{\norm{\bm{\Psi}}}\left[1-\norm{\bm{\Psi}} \nabla \cdot \frac{\bm{\Psi} \bm{\Psi}^T}{\norm{\bm{\Psi}}} \nabla\right] \frac{u}{\sigma_u}=\frac{\sqrt{4\pi}}{\norm{\bm{\Psi}}^{1 / 2}} \dot{\mathcal{W}} \quad \text {on } \Dd,
\end{equation}
where $\bm{\Psi}$ is the Jacobian of $\psi$, and we denote its determinant by $\norm{\bm{\Psi}} $ \cite[Section 3.4]{Lindgren2011AnEL}.

Let us write $\gamma$ for the geometric mean of the eigenvalues of $\bm{\Psi}$. Then, $\norm{\bm{\Psi}}=\gamma^d$ and we obtain from \eqref{TransformedSPDE} that
\begin{equation}\label{TransformedSPDE2}
  \frac{1}{\gamma^d}\left[1-\gamma^d \nabla \cdot \gamma^{2-d}\bm{H}\nabla\right] \frac{u(\bm{x})}{\sigma_u}=\frac{\sqrt{4\pi}}{\gamma^{d / 2}} \dot{\Ww}\quad \text {on } \Dd,
\end{equation}
where we defined $\bm{H}:=\gamma^{-2}\bm{\Psi}\bm{\Psi}^T$. Note that if we write $\kappa:=\gamma^{-1}$ and take the dimension $d$ to be $2$, we recover our model in \eqref{SPDE2D}. For general $d$, it follows from the definition of $\bm{H}$ that
\begin{enumerate}
  \item $\bm{H}$ is symmetric.\label{c1}
  \item $\bm{H}$ has determinant $1$.
  \item $\bm{H}$ is positive definite.\label{c3}
\end{enumerate}
From now on, we will impose these three assumptions on $\bm{H}$. Furthermore, we will restrict ourselves to the stationary case by imposing that $\bm{H}$ is constant in space, or equivalently, we impose that $\psi$ is a linear deformation. That is, $\psi(\bm{x})=\bm{\Psi}\bm{x}$.
By construction, the solution field $u$ is thus \emph{geometrically anisotropic} \citep[ Section~4.1]{cressie2015statistics}. That is, if we replace the Euclidean metric with the \emph{deformed metric} $\norm{\bm{x}}_{\bm{\Psi}^{-1}}:= \norm{\bm{\Psi}^{-1}\bm{x}}$, we obtain analogously to \eqref{stationary} that, for some function $r:\R^d \to \R$ called the \emph{covariance function} of $u$,
\begin{equation*}
 K(\bm{x},\bm{y}):=\rmm{Cov}[u(\bm{x}),u(\bm{y})] =r\left(\norm{\bm{y}-\bm{x}}_{\bm{\Psi}^{-1}}\right), \quad\forall \bm{x},\bm{y}\in\R^d,
\end{equation*}
In the case $\tl{\Dd} =\R^d$, the marginal variance of $\tl{u}$ is $\mathbb{E}[\tl{u} (\tl{ \bm{x}})^2]=1$ for all $\tl{ \bm{x}} \in \tl{\Dd } $ \citep[Section~2.1]{Lindgren2011AnEL} (we recall that the solution to SPDEs of the form \eqref{SPDE2D} have mean $0$). As a result, the marginal variance of $u$ is $\mathbb{E}[u(\bm{x})^2]=\sigma_u^2$ for all $\bm{x} \in \Dd $.

In the case where $\Dd$ is a bounded domain, there is a boundary effect that affects the marginal variance of $u$. However, at a distance larger than twice the correlation length from the boundary, the bounded domain model is almost indistinguishable from the unbounded domain model. As a result, the boundary effect can be made negligible by embedding the region of interest in a sufficiently large domain \cite{khristenko2019analysis}.

\section{Proof of \Cref{parameterizATION THEOREM}}
\begin{proof}
    By definition, $\bm{H}_{\bm{v}}$ is symmetric with determinant $1$. Furthermore, it is positive definite as its eigenvalues are positive. Given a positive definite symmetric matrix $\bm{A}$ with determinant $1$, $\bm{A}$ has an eigensystem of the form
    \begin{align*}
        \set{(\bm{w},\lambda ),(\bm{w}^\perp,\lambda ^{-1})},
    \end{align*}
    where $\lambda\geq 1$.
    By normalizing and taking $-w$ if necessary we may suppose that $\bm{w}=\ex{\alpha i}$ with $\alpha\in [0,\pi)$. Then, we obtain $\bm{A} = \bm{H}_{v}$ for
    \begin{equation*}
        \bm{v}=\log(\lambda)\ex{2\alpha i}.
    \end{equation*}
    This proves that the parameterization is surjective. Suppose now that $\bm{H}_{v}=\bm{H}_{\bm{v}'}$ then their eigenvalues and eigenvectors must be equal so
    \begin{align*}
        \ex{\norm{\bm{v}} }=\ex{\norm{\bm{v}'} }, \quad  \tl{v}=a\tl{\bm{v}'}.
    \end{align*}
    For some $a \in  \R$. From the first condition, we deduce that $\norm{\bm{v}}=\norm{\bm{v}'}$. Taking absolute values in the second condition thus gives $\abs{a}=1$. By construction $\tl{v}\neq -\tl{\bm{v}'}$ for all $v,\bm{v}'$ so necessarily  $\bm{v}=\bm{v}'$. This proves that the parameterization is invertible.

    Next, to derive the second equality in \eqref{cool formula}, we use the half-angle formula
    \begin{equation*}
        \cos(\frac{\alpha}{2})= \sqrt{\frac{1+\cos(\alpha ) }{2} }\mathrm{sign}(\sin(\alpha))  , \quad \sin(\frac{\alpha}{2} )= \sqrt{\frac{1-\cos(\alpha) }{2} }.
    \end{equation*}
    This gives that
    \begin{equation*}
        \tl{\bm{v}}\tl{\bm{v}}^T= \frac{\norm{\bm{v}}^2}{2}\bm{I}+ \frac{\norm{\bm{v}}}{2}\begin{bmatrix}
            v_1 & v_2
            \\
            v_2 & -v_1
        \end{bmatrix}, \quad    \tl{\bm{v}}_\perp\tl{\bm{v}}^T_\perp= \frac{\norm{\bm{v}}^2}{2}\bm{I}-\frac{\norm{\bm{v}}}{2}\begin{bmatrix}
            v_1 & v_2
            \\
            v_2 & -v_1
        \end{bmatrix}    .
    \end{equation*}
    The proof follows immediately by using the definition of $\bm{H}_{\bm{v}}$ (first line of \eqref{cool formula}).
\end{proof}
\section{Selection of a distance}\label{Sobolev distance section}
In this section, we discuss the choice of the Sobolev distance in \Cref{distance metric}. We first examine other possible choices, such as the Kullback-Leibler divergence, the $L^2$ distance, the Wasserstein distance, and the Hellinger distance, and show that they are unsuitable for our purposes. We then derive the exact form of the Sobolev distance in \ref{pseudo metric calculation} and show a possible alternative definition in \ref{alternative distance}.

Without pretending full generality, we will use the following notation. Firstly, \revisiontwo{we denote the torus on $[\b{0}, \b{T}]^d$ by  $\mathbb{T}^d:=\R^d /\b{T} \Z^d$ where $\b{T} \in \R_+^d$ is the side-length. In particular, we assume periodic boundary conditions on $\mathbb{T}^d$ ,} denote the \emph{orthonormal Fourier basis} on \revisiontwo{$\T^d$} by
\begin{align}\label{on Fourier basis}
    e_{\bm{k}}(\bm{x}):= \rmm{vol}(\revisiontwo{\T^d} )^{-1/2}\exp(i\bm{\omega}_{\bm{k}}\cdot {\bm{x}}), \quad \bm{\omega_k}:= \nicefrac{2 \pi  \bm{k}}{\revisiontwo{\b{T}}} \quad \bm{k} \in \Z^d,
\end{align}
where we used the elementwise division $[\bm{k}/\revisiontwo{\b{T}}]_i:= k_i/\revisiontwo{T_i}$. Given a stationary field $u$ on $\revisiontwo{\T^d} $, we denote its \emph{covariance function} $r$ and \emph{spectral density}, if they exist, by
\begin{align*}
    r(\bm{x}):= \rmm{Cov}[u(\bm{x}),u(\bm{0})], \quad S(\bm{k}):=\int_{\T^d}  r(\bm{x})\overline{e_{\bm{k}}(\bm{x})} \d \bm{x}.
\end{align*}
Additionally, we will denote the \emph{covariance operator} of $u$ by \revisiontwo{$\Kk$}. For stationary $u \in L^2(\revisiontwo{\T^d})$ this is given by
\begin{align*}
    \br{\revisiontwo{\Kk} f,g}_{L^2(\revisiontwo{\T^d} )}:= \int_{\revisiontwo{\T^d}}\int_{\revisiontwo{\T^d} } r(\bm{y-x})f(\bm{x})g(\bm{y}) \d \bm{x}\d \bm{y}, \quad f,g \in L^2(\revisiontwo{\T^d}).
\end{align*}
For the calculations, the following lemma will be very useful.
\begin{lemma}\label{diagonal lemma}
    Let $u$ be stationary and periodic on the torus $\mathbb{T}^d$ with continuous covariance $r$ and spectral density $S$. Then, $\Kk$ has eigensystem $\set{\lambda_\b{k}^2, e_\b{k}}_{\b{k} \in \Z^d} $ given by
    \begin{align*}
        \lambda_{\bm{k}}^2= S(\bm{k}),\quad    e_{\bm{k}}(\x) =e^{i \om_{\bm{k}} \cdot \x}, \quad \om_{\bm{k}}:= \nicefrac{2\pi\bm{k}}{\bm{T}}.
    \end{align*}
\end{lemma}
\begin{proof}
    Let $r(\x)$ be the covariance function of $u$. By Bochner's theorem on the torus \cite[p.19]{rudin2017fourier} and Fubini, \revisiontwo{given $f,g \in L^2(\revisiontwo{\T^d})$} 
    \begin{align*}
        \br{\Kk f, g}_{\revisiontwo{L^2(\revisiontwo{\T^d})}}  & = \int_{\revisiontwo{\revisiontwo{\T^d}}} r(\x-\y)f(\y)\overline{g(\x)} \d\x \d\y                                                                                                                                               \\
                    & = \int_{\revisiontwo{\T^d}} \int_{\revisiontwo{\T^d}}  \qt{\sum_{\b{k} \in \Z^d}  S({\b{k}}) e^{ i \om_{\bm{k}} (\x-\y)} } f(\y)\overline{g(\x)}    \d\x \d\y=\sum_{\b{k} \in \Z^d}   S({\b{k}})\hat{f}(\om_{\bm{k}})\overline{\hat{g}(\om_{\bm{k}})}.
    \end{align*}
    This shows that, since \revisiontwo{by orthonormality of the Fourier basis $\wh{e_\b{j}}(\om_{\bm{k}})= \br{e_{\b{j}},e_{\b{k}}}= \delta_{jk}$}
    \begin{align*}
        \br{\Kk e_\b{j}, e_\b{k}} = \sum_{\b{l}\in \Z^d} S({\b{l}}) \wh{e_\b{j}}(\om_{\bm{l}})\wh{e_\b{k}}(\om_{\bm{l}}) \ =\sum_{\b{l}\in \Z^d} S({\b{l}})\delta_{\b{jl}} \delta_{\b{k l}} = S(\b{j}) \delta_{\b{jk}}  .
    \end{align*}
    This concludes the proof.
\end{proof}
We will also repeatedly use the expression of the spectral density of the solution $u$  to \eqref{SPDE2D}. To calculate this, note that the covariance operator of $u$ is given by $\revisiontwo{\Kk}= \Ll ^{-2}$ where
\begin{align*}
    \Ll := \frac{\kappa^2- \nabla \cdot \bm{H}  \nabla}{ \sqrt{4 \pi} \kappa \sigma _u}.
\end{align*}
Using \revisiontwo{the periodicity} and that $\nabla e_{\bm{k}}= i \bm{\omega}_{\bm{k}} e_{\bm{k}}$ we obtain that $\Ll $ diagonalizes in the Fourier basis with
\begin{align*}
    [\Ll ]_{\bm{jk}}:= \br{\Ll e_{\bm{j}},e_{\bm{k}}}_{L^2(\revisiontwo{\T^d} )} = \frac{\kappa^2+ \bm{\omega}_{\bm{k}} \cdot \bm{H} \bm{\omega}_{\bm{k}}}{\sqrt{4 \pi} \kappa \sigma _u}\delta_{\bm{j}\bm{k}}, \quad [\Ll ^{-1}]_{\bm{jk}}= \frac{\sqrt{4 \pi} \kappa \sigma _u}{\kappa^2+ \bm{\omega}_{\bm{k}} \cdot \bm{H} \bm{\omega}_{\bm{k}}}\delta_{\bm{j}\bm{k}}.
\end{align*}
As a result, by \Cref{diagonal lemma}, the spectral density of $u$ is given by
\begin{align}\label{stationary density bounded}
    S(\bm{k})= \frac{4 \pi \kappa ^2 \sigma _u^2}{(\kappa^2+ \bm{\omega}_{\bm{k}}^T \bm{H} \bm{\omega}_{\bm{k}})^2}, \quad \bm{k} \in \Z^d.
\end{align}
If now the domain is $\R^d$, the defining property of the spectral density is that \citep[Volume IV page 264]{gel2014generalized}
\begin{align*}
    \br{\Kk f,g}_{L^2(\R^d )}= \int_{\R^d} S(\bm{\xi })\wh{f}(\bm{\xi })\overline{\wh{g}(\bm{\xi })} \d \bm{\xi }, \quad\forall f,g \in L^2(\R^d).
\end{align*}
Since the Fourier transform is an isometry on $L^2(\R^d)$,
\begin{align*}
    \br{\Kk f,g}_{L^2(\R^d )} & = \br{ \Ll ^{-1} f, \Ll ^{-1}g}_{L^2(\R^d )}=\br{\wh{\Ll ^{-1}f},\wh{\Ll^{-1}g}}_{L^2(\R^d )} \\&=\int_{\R^d} \frac{4\pi\kappa ^2 \sigma _u^2}{(\kappa^2+ 4\pi^2\bm{\xi}^T \bm{H} \bm{\xi})^2} \wh{f}(\bm{\xi })\overline{\wh{g}(\bm{\xi })} \d \bm{\xi }.
\end{align*}
So the stationary solution to \eqref{SPDE2D} on $\R^d$  has spectral density
\begin{align}\label{spectral density R^d}
    S(\bm{\xi })= \frac{4\pi\kappa ^2 \sigma _u^2}{(\kappa^2+ 4\pi^2\bm{\xi}^T \bm{H} \bm{\xi})^2}.
\end{align}

\subsection{Kullback-Leibler divergence}\label{KLD section}
As previously discussed, the KLD is unsuitable as a notion of distance between models as it is infinite for every possible parameter value. As a result, a different method is necessary to measure the distance between models. This section discusses some of the options considered and how \Cref{metric def} was eventually chosen. We work on the torus in this and the next subsection as the Hilbert-Schmidt norm of a stationary field on $\R^d$.

We begin by showing that the KLD is infinite for every pair of parameter sets that do not fulfil certain mutual relationships. To do so, we first give a sufficient condition for two stationary measures to be mutually singular. This is the converse direction of  \citet[Theorem A.1]{stein2004equivalence}. See also \citet[Proposition 3]{dunlop2017hierarchical}.
\begin{lemma}\label{KLD infinite condition}
    Let $u_A,u_B$ be two stationary, \revisiontwo{and in particular periodic}, Gaussian fields in $L^2(\revisiontwo{\T^d} )$ with spectral densities $S_A,S_B$. Suppose that
    \begin{align*}
        \sum_{ \bm{k} \in \Z^d} \qty(\frac{S_A(\bm{k})}{S_B(\bm{k})}-1)^2 \revisiontwo{=} \infty.
    \end{align*}
    Then, the Gaussian measures defined by $u_A,u_B$ are mutually singular.
\end{lemma}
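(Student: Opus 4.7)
The plan is to invoke the Feldman--Hájek dichotomy for centered Gaussian measures on a separable Hilbert space, according to which any two such measures are either equivalent or mutually singular. To establish mutual singularity here, I would rule out equivalence.

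First, \Cref{diagonal lemma} implies that both covariance operators $K_A, K_B$ diagonalize in the orthonormal Fourier basis $\{e_{\bm{k}}\}_{\bm{k} \in \Z^d}$ with eigenvalues $S_A(\bm{k}), S_B(\bm{k})$. The operator $T := K_B^{-1/2} K_A K_B^{-1/2} - I$ therefore also diagonalizes in this basis with eigenvalues $S_A(\bm{k})/S_B(\bm{k}) - 1$, so its squared Hilbert--Schmidt norm is exactly the sum appearing in the hypothesis,
\begin{equation*}
\|T\|_{\rmm{HS}}^2 \;=\; \sum_{\bm{k} \in \Z^d}\left(\frac{S_A(\bm{k})}{S_B(\bm{k})} - 1\right)^2,
\end{equation*}
which is finite by assumption.

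Next, I would appeal to the Feldman--Hájek criterion in full: the centered Gaussians $\mu_A, \mu_B$ are equivalent iff both (a) the Cameron--Martin spaces $H_A := K_A^{1/2}L^2(\Dd)$ and $H_B := K_B^{1/2}L^2(\Dd)$ coincide as subspaces with equivalent norms, and (b) $T$ is Hilbert--Schmidt. Since (b) is satisfied by the calculation above, the argument for singularity must proceed via failure of (a). For diagonal operators in a shared basis, (a) is equivalent to the ratios $S_A(\bm{k})/S_B(\bm{k})$ being uniformly bounded above and bounded away from zero. I would therefore aim to exhibit a subsequence $\{\bm{k}_n\}$ along which $S_A(\bm{k}_n)/S_B(\bm{k}_n)$ escapes every bound, together with a test function $f = \sum_n c_n e_{\bm{k}_n}$ satisfying $\|f\|_{H_B} < \infty$ but $\|f\|_{H_A} = \infty$, which witnesses $H_A \neq H_B$ and hence singularity by Feldman--Hájek.

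The main obstacle is the tension between the summability in the hypothesis and unboundedness of the ratio: the HS condition forces $S_A(\bm{k})/S_B(\bm{k}) \to 1$ outside a set of summable squared deviations, so any escape subsequence must be sparse enough to contribute a finite amount to the HS sum yet dense enough to obstruct Cameron--Martin equivalence and to admit a witness $f$ that respects the two competing summability requirements. Reconciling these demands is the crux; I would attempt to do so by following the converse direction of \cite[Theorem A.1]{stein2004equivalence} and exploiting the explicit algebraic form \eqref{stationary density bounded} of the spectral densities in our SPDE model, whose directional tail behavior is tractable enough to produce the requisite sparse sequence $\{\bm{k}_n\}$ together with a compatible choice of coefficients $c_n$.
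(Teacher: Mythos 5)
Your strategy cannot succeed, and the reason it cannot points to the real issue: the hypothesis of the lemma as printed is a typo. As you yourself observe, finiteness of $\sum_{\bm{k}}\qty(S_A(\bm{k})/S_B(\bm{k})-1)^2$ forces $S_A(\bm{k})/S_B(\bm{k})\to 1$, so (the spectral densities being strictly positive) the ratios are bounded above and bounded away from zero for \emph{all} $\bm{k}$. In the diagonal setting that is exactly the condition for the Cameron--Martin spaces $K_A^{1/2}L^2(\Dd)$ and $K_B^{1/2}L^2(\Dd)$ to coincide with equivalent norms, so your condition (a) holds automatically alongside (b), and by the Feldman--H\'ajek dichotomy the two measures are \emph{equivalent}, not singular. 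The sparse ``escape subsequence'' you hope to construct does not exist: there is no room between square-summability of $S_A/S_B-1$ and unboundedness of $S_A/S_B$. The tension you flag as ``the crux'' is not something to be reconciled by a clever choice of $\{\bm{k}_n\}$ and coefficients $c_n$; it is an outright contradiction, and the witness $f$ can never be produced.

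The intended statement --- which is what the paper actually proves, and what is actually invoked in the proof of \Cref{KLD infinite prop} (``the sum diverges unless \dots The result follows by \Cref{KLD infinite condition}'') --- has the hypothesis reversed: if the sum \emph{diverges}, the measures are mutually singular. The paper's argument is then a one-line contrapositive of the part of Feldman--H\'ajek you label (b): non-singularity requires $T=\qty(K_B^{-1/2}K_A^{1/2})\qty(K_B^{-1/2}K_A^{1/2})^*-I$ to be Hilbert--Schmidt, its squared Hilbert--Schmidt norm equals the sum in question by \Cref{diagonal lemma}, and divergence of that sum therefore rules out non-singularity. Your computation of $\norm{T}_{\rmm{HS}}^2$ is correct and is precisely the computation the paper performs; everything after it should be discarded. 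Your closing appeal to the explicit spectral densities \eqref{stationary density bounded} is also misplaced: this lemma is stated for general stationary fields, and the specific form of the densities only enters later, in \Cref{KLD infinite prop}.
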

\begin{proof}
    Denote the covariance operators of $u_A,u_B$ by $K_A,K_B$ respectively and let $I$ be the identity on $L^2(\revisiontwo{\T^d} )$. By the Feldman-Hájek theorem, a necessary condition for $u_A,u_B$ to not be mutually singular is that the following operator is Hilbert-Schmidt \revisiontwo{\citep[Theorem 2.25]{da2014stochastic}}, 
    \begin{align*}
        T:=\qty(K_B^{-1/2}K_A^{1/2})\qty(K_B^{-1/2}K_A^{1/2})^*-I,
    \end{align*}
    is a Hilbert-Schmidt operator where $K_A,K_B$ are the covariance operators of $u_A,u_B$ and $I$ is the identity operator on $L^2(\revisiontwo{\T^d} )$. By \Cref{diagonal lemma}, $K_A, K_B$ both diagonalize in the orthonormal Fourier basis with eigenvalues given respectively by $S_A(\bm{k})$ and $S_B(\bm{k})$. As a result, the Hilbert-Schmidt norm of $T$  is given by
    \begin{align*}
        \norm{T}_{\rmm{HS}}^2= \sum_{\bm{k} \in \Z^d} \qty(\frac{S_A(\bm{k})}{S_B(\bm{k})}-1)^2.
    \end{align*}
    The result follows by the necessary condition for non-mutual singularity of $u_A,u_B$.
\end{proof}
Using the just proved \autoref{KLD infinite condition}, we show that the KLD between two different solutions to \eqref{SPDE2D} is infinite if they have the same variance parameter. In particular, if we were to renormalize the spectral densities of the two solutions to have the same variance, the KLD would be infinite for all different parameter values.
\begin{proposition}\label{KLD infinite prop}
    Let $u_{A},u_{B}$ be the solutions to \eqref{SPDE2D} on $\revisiontwo{\T^d}$ with  \revisiontwo{periodic} boundary conditions and parameters $(\kappa _A,\bm{v}_A,\sigma _A),(\kappa _B,\bm{v}_B,\sigma _B)$ respectively. Then, if \revisiontwo{${\kappa _A}{\sigma _A}\neq {\kappa _B}{\sigma _B}$} or $\bm{v}_A\neq\bm{v}_B$ the measures $\mu_A, \mu_B$ defined by $u_{A},u_{B}$ are mutually singular and
    \begin{align*}
        KLD(\mu _A||\mu _B)=\infty.
    \end{align*}

\end{proposition}
\begin{proof}We have by \eqref{stationary density bounded} that
\begin{align}\label{sum1}
    \sum_{\bm{k} \in \Z^2} \qty(\frac{S_A(\bm{k})}{S_B(\bm{k})}-1)^2  = \sum_{\bm{k} \in \Z^2} \qty(\frac{\kappa _A^2 \sigma _A^2}{\kappa _B^2 \sigma _B^2}\frac{\qty(\kappa_B^2+4\pi^2\bm{k} \cdot \bm{H}_{\bm{v}_B}\bm{k})^2}{\qty(\kappa_A^2+4\pi^2\bm{k} \cdot \bm{H}_{\bm{v}_A}\bm{k})^2}-1)^2.
\end{align}
\revisiontwo{
A necessary condition for the convergence of the sum is that the summands tend to zero as $\norm{\bm{k}} \to \infty$. Let us examine the limit along rays. Let $\bm{k}_n$ be a sequence of integer vectors such that $\norm{\bm{k}_n} \to \infty$ and $\bm{k}_n / \norm{\bm{k}_n} \to \bm{u}$ for some unit vector $\bm{u} \in S^{1}$.Dividing numerator and denominator by $\norm{\bm{k}}^4$, the limit of the term inside the square is:
\begin{align}\label{limit ratio}
    \lim_{n \to \infty} \frac{S_A(\bm{k}_n)}{S_B(\bm{k}_n)} = \frac{\kappa _A^2 \sigma _A^2}{\kappa _B^2 \sigma _B^2} \qty( \frac{\bm{u} \cdot \bm{H}_{\bm{v}_B}\bm{u}}{\bm{u} \cdot \bm{H}_{\bm{v}_A}\bm{u}} )^2.
\end{align}
For the sum in \eqref{sum1} to converge, this limit must equal $1$ for all reachable directions $\bm{u}$. Since the directions of integer vectors are dense in $S^{1}$, this equality must hold for all $\bm{u} \in S^{1}$. That is:
\begin{align}\label{quadratic equality}
    \frac{\bm{u} \cdot \bm{H}_{\bm{v}_B}\bm{u}}{\bm{u} \cdot \bm{H}_{\bm{v}_A}\bm{u}} = \frac{\kappa _B \sigma _B}{\kappa _A \sigma _A} := \gamma, \quad \forall \bm{u} \in S^{1}.
\end{align}
Rearranging \eqref{quadratic equality}, we have $\bm{u} \cdot (\bm{H}_{\bm{v}_B} - \gamma \bm{H}_{\bm{v}_A}) \bm{u} = 0$ for all $\bm{u} \in S^1$. Since $\bm{H}_{\bm{v}_B} - \gamma \bm{H}_{\bm{v}_A}$ is a symmetric matrix, it must be zero. Thus, $\bm{H}_{\bm{v}_B} = \gamma \bm{H}_{\bm{v}_A}$.

Taking the determinant of both sides, and using that $\det(\bm{H}_{\bm{v}_A}) = \det(\bm{H}_{\bm{v}_B}) = 1$:
\begin{align*}
    1 = \det(\bm{H}_{\bm{v}_B}) = \det(\gamma \bm{H}_{\bm{v}_A}) = \gamma^2 \det(\bm{H}_{\bm{v}_A}) = \gamma^2.
\end{align*}
Since $\gamma$ is a ratio of real positive parameters, $\gamma > 0$, and thus $\gamma = 1$.

The condition $\gamma = 1$ implies two things. Firstly, $\bm{H}_{\bm{v}_B} = \bm{H}_{\bm{v}_A}$, which implies $\bm{v}_A = \bm{v}_B$ (due to the injectivity of the parameterization). Secondly, $\kappa_A \sigma_A = \kappa_B \sigma_B$.
}
This shows that unless the conditions of the proposition hold the sum diverges. The result follows from \Cref{KLD infinite condition}.
\end{proof}

\subsection{The $L^2$ distance}
One possible option is to consider the $L^2$ distance between  $u,u_{\bm{0}}$. An application of Fubini shows that given two Gaussian fields $u_A\sim \Nn(0,K_A),u_B\sim\Nn(0,K_B)$ on $\Dd $ 
\begin{align*}
    \E\qty[\norm{u_A-u_B}^2_{L^2(\Dd)}]=\mathrm{Tr} (K_A+K_B-2K_{AB}), \end{align*}
where $K_{AB}$ is the covariance between $u_A,u_B$ and $\mathrm{Tr}$ is the trace. It is unclear which choice of $K_{AB}$ would be the most appropriate. For example, choosing  $K_{AB}=0$ would be too coarse a measure as it would not consider any of the non-stationarity that occurs off the diagonal of $K_A, K_B$ (see Lemma \ref{diagonal lemma}). One can also choose $K_{AB}$ as to minimize the $L^2$ norm while keeping $(u_A,u_B)$ jointly Gaussian  by taking
\begin{align*}
    K_{AB}=\qty(\sqrt{K_B}K_A\sqrt{K_B})^{\frac{1}{2}}.
\end{align*}
This leads to the Wasserstein distance, which we discuss in the following subsection.

\subsection{Wasserstein distance between Gaussian measures}\label{wasserstein appendix}
Each solution to SPDE \eqref{SPDE2D} on $\revisiontwo{T}^d$ induces a measure on $L^2(\T^d)$. As a result, one possible way to measure the distance between $u$ and $u_{\bm{0}}$  is to take the Wasserstein distance between the induced Gaussian measures. It is known that given two Gaussian measures $\revisiontwo{\mu}_A\sim \Nn(m_A, K_A),\mu_B\sim\Nn(m_B, K_B)$ on a separable Hilbert space, the Wasserstein distance between them is given by \citet[Theorem 3.5]{gelbrich1990formula}
\begin{align}\label{W gaussian}
    W_2(\mu _A,\mu _B)=\norm{m_A-m_B}_{L^2(\T^d)}+\mathrm{Tr} \qty(K_A+K_B-2\qty(\sqrt{K_B}K_A\sqrt{K_B} )^\frac{1}{2}),
\end{align}
where $\mathrm{Tr} $ is the trace. However, this approach poses some difficulties. Let us write $ K_A, K_B$ for the covariance operators of two stationary solutions $u_A\sim M_A, u_B\sim M_B$ to \eqref{SPDE2D} with parameters $(\kappa _A,\bm{v}_A,\sigma _A),(\kappa _B,\bm{v}_B,\sigma _B)$ respectively. By \Cref{diagonal lemma}, $K_A, K_B$ diagonalize on the same basis, and using the expression of the spectral density in \eqref{stationary density bounded}, we obtain that
\begin{align*}
    W_2(M_A,M_B) & = \mathrm{Tr} \qty(K_A+K_B-2 \sqrt{K_AK_B})= \sum_{\b{j}\in \Z^d} \abs{ \left[\sqrt{K}-\sqrt{K_0}\right]_\b{jj}}^2                                                                                                                                     \\
                 & =\sum_{\b{j} \in \Z^d} \qty(\frac{\sqrt{4 \pi}\kappa_A \sigma_A }{\kappa_A ^2+ 4 \pi^2\bm{\xi}_\b{j}^T \bm{H}_{\bm{v}_A}\bm{\xi}_\b{j}}-\frac{\sqrt{4 \pi}\kappa_B \sigma_B }{\kappa_B ^2+ 4 \pi^2\bm{\xi}_\b{j}^T \bm{H}_{\bm{v}_B}\bm{\xi}_\b{j}})^2 .
\end{align*}
Making the side-length $\b{T}$ of the torus go to infinity \revisiontwo{heuristically} shows that, by definition of $\bm{\xi}_\b{j}$ our discrete sum becomes an integral with
\begin{align*}
    W_2(M _A,M_B)=\frac{\rmm{vol}(\T^d) }{\pi} \int_{\R^2} \qty(\frac{\kappa_A }{\kappa_A ^2+ \bm{\xi}^T \bm{H}_{\bm{v}_A}\bm{\xi}}-\frac{\kappa _B}{\kappa _B^2+ \bm{\xi}^T \bm{H}_{\bm{v}_B}\bm{\xi} })^2\d\bm{\xi} +O(1).
\end{align*}
The Wasserstein distance between $\mu _0,\mu _1$ scales as $\rmm{vol}(\T^d) $ times the Hellinger distance of the spectral densities of $u_A,u_B$. Thus, a reasonable option could be to define the scaling as the distance
\begin{align*}
    D_{W_2} (M_A,M_B):= \lim_{T \to \infty} \frac{W_2(M_A,M_B)}{\rmm{vol}(\T^d) }=\mathrm{Hell}(S_A,S_B).
\end{align*}
However, a calculation in this simplified case shows that this will be bounded. For example, in the case where, as in the base model, $\bm{v}_A=\bm{v}_B=\bm{0}$, and $\sigma_A = \sigma _B =1$, we have the following.
\begin{align*}
    D_{W_2} (u_A,u_B)=2\qty(1-\frac{\kappa_A\kappa_B \left(\log
            \left(\kappa _A^2\right)-\log
            \left(\kappa_B ^2\right)\right)}{\kappa _A^2-\kappa_B ^2}).
\end{align*}
The above expression takes a maximum of $2$. This makes putting an exponential distribution on the Wasserstein distance impossible, and as a result, the Wasserstein distance cannot be used to define PC priors for our model \eqref{SPDE2D}.

\subsection{Sobolev distance between fields}
Given a stationary field $u$ on $\revisiontwo{\R^2} $ with $s$-times differentiable covariance function $r(\bm{x}):= \rmm{Cov}[u(\bm{x},u(\bm{0}))]$, we define the seminorm $\abs{\cdot }_s$ as the Sobolev seminorm of order $s$ of $r$. That is, \revisiontwo{for $s \in \R$ we define }
\begin{align}
    \abs{u}_s:= \norm{\nabla^s r}_{L^2(\revisiontwo{\R^2})}=\norm{\norm{2\pi\bm{\xi}}^{s}S(\bm{\xi})}_{L^2(\revisiontwo{\R^2})}:=\qt{\int_{\revisiontwo{\R^2}} \norm{2\pi\bm{\xi}}^{2s}S(\bm{\xi})^2 \d\bm{\xi}}^{\frac{1}{2}},
\end{align}
If we think of the $L^2(\revisiontwo{\R^2})$ norm of $r$ as a measure of the total ``random energy'' present in $u$. Then, $\abs{\cdot }_s$ measures the oscillation size in this energy to order $s$.

We now discuss which value of $s$ is appropriate. In our case, as we saw in \eqref{spectral density R^d}, the solution $u$ to \eqref{SPDE2D} with parameters $\kappa, \bm{v}, \sigma_u$  has spectral density
\begin{align*}
    S_{\kappa, \bm{v}, \sigma_u}(\bm{\xi})= \frac{4\pi\kappa ^2\sigma_u^2}{\qty(\kappa ^2+4\pi^2\bm{\xi} \cdot \bm{H_v}\bm{\xi} )^2} .
\end{align*}
A change of variables $\xi \to \kappa \xi $ shows that the behavior in $\kappa $ is
\begin{align*}
    \abs{u}_s\sim\kappa ^{s-1} .
\end{align*}
If we take $s=0$, then  $\abs{u}_0$ becomes infinite, whereas if we choose $s=1$, we obtain that  $\abs{u}_1$ is bounded in $\kappa $ and thus cannot be exponentially distributed. \revisiontwo{It is also not possible to choose $s\geq 3$ as then the integrand has decay rate smaller than $d=2$, leading to an infinite integral.} This leads to the choice $s=2$ used in Definition \ref{metric def}. \revisiontwo{However, fractional choices $ s \in (2,3)$ are also possible. Deriving PC priors for these exponents and comparing their performance to the ones in this work would be an interesting avenue of future research.}

\revisiontwo{For choices of model smoothness and dimension different than the ones considered here, the distance could be adapted by modifying the value of $s$.}

\revisionone{Write $M_{\kappa,\bm{v}}$ for the model given by SPDE \eqref{SPDE2D} with parameters $(\kappa,\bm{v})$ and with variance fixed to $\sigma _u=1$. Due to the rescaling of the spectral densities in \Cref{metric def}, the choice of $\sigma _u$  does not affect the distance between models and can be set to any other positive value. Then the following holds:}
\revisionone{\begin{lemma}\label{seminorm is norm}
The distance $D_2$  in Definition \ref{metric def} distinguishes between different models. That is, for any two stationary models $M_{\kappa_A,\bm{v}_A},M_{\kappa_B,\bm{v}_B}$, we have that
\begin{align*}
    D_2(M_{\kappa_A,\bm{v}_A},M_{\kappa_B,\bm{v}_B})=0 \iff (\kappa_A,\bm{v}_A)=(\kappa_B,\bm{v}_B).
\end{align*}
\end{lemma}
\begin{proof}
    By  definition of $D_2$, we have that 
    \begin{align*}
        D_2(M_{\kappa_A,\bm{v}_A},M_{\kappa_B,\bm{v}_B})^2=\int_{\revisiontwo{\R^2}} \norm{2\pi\bm{\xi} }^4\abs {S_{\kappa_A,\bm{v}_A}(\bm{\xi})-S_{\kappa_B,\bm{v}_B}(\bm{\xi})}^2 \d \bm{\xi }.
    \end{align*} 
    By basic measure theory, this is zero if and only if the integrand is zero almost everywhere. This is if and only if the spectral densities are equal almost everywhere. By \Cref{spectral density R^d}, and since $\bm{H}_{\bm{v}_A}\neq \bm{H}_{\bm{v}_B}$ for $\bm{v}_A \neq \bm{v}_B$  this is equivalent to $(\kappa_A,\bm{v}_A)=(\kappa_B,\bm{v}_B)$. This concludes the proof.
\end{proof}}

\begin{observation}
    The seminorm $\abs{u}_s$ is different to the Sobolev  seminorm of $u$ as we have that, given a multi-index $\alpha \in \Z^d$, the derivative  $D^\alpha u$ of a stationary field $u$  exists as long as
    \begin{align*}
        \int_{\revisiontwo{\R^2}} \abs{(2\pi\bm{\xi})^{2\alpha}} S(\bm{\xi})  \d \bm{\xi} < \infty .
    \end{align*}
\end{observation}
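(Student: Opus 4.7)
The plan is to interpret ``$D^\alpha u$ exists'' in the mean-square sense, which is the natural notion of differentiability for a Gaussian random field without further assumed regularity, and then to reduce the question to classical differentiability of the covariance function using Bochner's theorem and differentiation under the integral sign.

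First I would invoke the standard correspondence between mean-square differentiability of a stationary field and differentiability of its covariance function: a stationary $L^2$ field $u$ with covariance $r(\bm{x}) := \rmm{Cov}[u(\bm{x}+\bm{y}), u(\bm{y})]$ admits a mean-square partial derivative $D^\alpha u$ at every point if and only if $D^{2\alpha} r(\bm{0})$ exists (and is finite). This equivalence is classical (see Cramér--Leadbetter) and is proved by writing the squared $L^2$-norm of a difference quotient of $u$ as a symmetric second difference of $r$ based at the origin, and then letting the increment tend to zero. In particular, if $D^{2\alpha} r$ exists at $\bm{0}$, one recovers that $D^\alpha u$ is itself stationary with covariance $(-1)^{|\alpha|}D^{2\alpha} r$.

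Next I would combine this with Bochner's theorem in the form $r(\bm{x}) = \int_{\wh{\Dd}} e^{2\pi i \bm{\xi}\cdot \bm{x}} S(\bm{\xi}) \d \bm{\xi}$, already used implicitly in \Cref{diagonal lemma}. Under the hypothesis $\int_{\wh{\Dd}} |(2\pi\bm{\xi})^{2\alpha}| S(\bm{\xi}) \d \bm{\xi} < \infty$, every mixed partial derivative of the integrand $e^{2\pi i \bm{\xi}\cdot \bm{x}} S(\bm{\xi})$ of order up to $2\alpha$ in $\bm{x}$ is pointwise dominated by $|(2\pi\bm{\xi})^{2\alpha}| S(\bm{\xi})$, which is integrable by hypothesis. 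A routine iterated application of Lebesgue's dominated convergence theorem then justifies differentiating under the integral sign and yields
\begin{equation*}
D^{2\alpha} r(\bm{x}) = \int_{\wh{\Dd}} (2\pi i \bm{\xi})^{2\alpha}\, e^{2\pi i \bm{\xi}\cdot \bm{x}} S(\bm{\xi}) \d \bm{\xi},
\end{equation*}
which is finite and continuous in $\bm{x}$, so in particular exists at $\bm{x}=\bm{0}$. Combined with the first step, this proves mean-square existence of $D^\alpha u$.

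The main thing to be careful about is the interpretation of the multi-index notation: $(2\pi\bm{\xi})^{2\alpha}$ stands for the componentwise product $\prod_i (2\pi \xi_i)^{2\alpha_i}$, and the absolute value in the hypothesis is needed because the complex factor $(2\pi i)^{2\alpha}=(2\pi)^{2|\alpha|}(-1)^{|\alpha|}$ is real but signed. Once this bookkeeping is in place, the dominated-convergence step is straightforward; I expect no substantive obstacle beyond it. An alternative (equivalent) route would be to start from the spectral stochastic-integral representation $u(\bm{x}) = \int e^{2\pi i \bm{\xi}\cdot \bm{x}} \d Z(\bm{\xi})$ with $\E|\d Z|^2 = S(\bm{\xi})\d \bm{\xi}$ and observe that the candidate derivative $\int (2\pi i \bm{\xi})^{\alpha} e^{2\pi i \bm{\xi}\cdot \bm{x}} \d Z(\bm{\xi})$ is well defined in $L^2(\mathbb{P})$ precisely under the stated integrability, and that it is the mean-square limit of the difference quotients by Itô isometry; this makes the result slightly more self-contained at the cost of invoking the spectral theorem for stationary fields.
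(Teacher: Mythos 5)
The paper states this Observation without proof; it is offered as a standard remark whose only purpose is to contrast the seminorm $\abs{u}_s$ (which is quadratic in the spectral density $S$, being a Sobolev seminorm of the covariance function $r$) with the mean-square differentiability of the field itself (which is governed by a condition linear in $S$). So there is no in-paper argument to compare yours against. Your proposal is the classical Cram\'er--Leadbetter argument and is correct in substance: mean-square differentiability is equivalent to existence of the corresponding derivative of $r$ at the origin, and Bochner's representation $r(\bm{x})=\int_{\wh{\Dd}}e^{2\pi i\bm{\xi}\cdot\bm{x}}S(\bm{\xi})\d\bm{\xi}$ turns the stated integrability into differentiability under the integral sign.

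One technical point is worth flagging in your primary route. To reach $D^{2\alpha}r$ by \emph{iterated} differentiation under the integral sign you implicitly need the intermediate moments $\int_{\wh{\Dd}}\abs{(2\pi\bm{\xi})^{\beta}}S(\bm{\xi})\d\bm{\xi}<\infty$ for the multi-indices $\beta\le 2\alpha$ along the differentiation path, and for genuinely mixed $\alpha$ these are not automatic consequences of $\int S<\infty$ together with $\int\abs{(2\pi\bm{\xi})^{2\alpha}}S<\infty$ alone (a spectral density concentrated near a coordinate axis can have a finite top mixed moment but an infinite lower-order marginal moment). Your ``alternative route'' via the spectral stochastic integral avoids this entirely and is the cleaner proof of exactly the stated claim: the candidate derivative $\int(2\pi i\bm{\xi})^{\alpha}e^{2\pi i\bm{\xi}\cdot\bm{x}}\d Z(\bm{\xi})$ lies in $L^2(\mathbb{P})$ precisely when $\int\abs{(2\pi\bm{\xi})^{2\alpha}}S(\bm{\xi})\d\bm{\xi}<\infty$ by the isometry, and the difference quotients of $e^{2\pi i\bm{\xi}\cdot\bm{x}}$ are dominated pointwise by $\abs{(2\pi\bm{\xi})^{\alpha}}$ via $\abs{e^{i\theta}-1}\le\abs{\theta}$, so dominated convergence in $L^2(S\d\bm{\xi})$ closes the argument with no appeal to lower-order moments. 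I would promote that sketch to the main argument and keep the covariance-function route as the remark.
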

Thus, one possible choice of metric, corresponding to $s=1 /2$, could also be to set
\begin{align*}
    \abs{u}_*:=\int_{\revisiontwo{\R^2}} \norm{2\pi\bm{\xi}} S(\bm{\xi})  \d \bm{\xi}   .
\end{align*}
This distance is unbounded in $\kappa $ and finite at the base model. A calculation gives a distance of a similar form to the one derived for the distance used in this paper in \Cref{distance lemma}.
\begin{aligneq}\label{alternative distance}
    d_*(\kappa , \bm{v})^2 & :=\lim_{\kappa_0,\bm{v} \to 0 }\int_{\revisiontwo{\R^2}} \norm{2\pi\bm{\xi}} \qty(\frac{S_{\kappa, \bm{v}, \sigma_u}}{\sigma_u^2}-\frac{S_{\kappa, \bm{v}, \sigma_0}}{\sigma _0^2})  \d \bm{\xi}
    \\
    & =2\pi{ E\left(1-\ex{2\norm{\bm{v}} }\right)\ex{-\frac{\norm{\bm{v}} }{2}}}\kappa,
\end{aligneq}
where $E$ is the complete elliptic integral
\begin{align*}
    E(m):=\int_{0}^{\pi /2}(1-m\sin^2(\alpha ))^\frac{1}{2} \d \alpha .
\end{align*}
This distance is unbounded in $\kappa $ and $\bm{v}$ and $0$ at the base model. As a result, using $d_*$ instead of $d$  could also provide a valid alternative.
\section{Derivation of PC priors}\label{distance calc stationary sec}
\begin{lemma}\label{distance lemma}
    Let $M_{\kappa ,\bm{v},\sigma_u},M_{\kappa _0,\bm{0},\sigma_0}$ be the stationary models given by \eqref{SPDE2D} with parameters  $(\kappa,\bm{v},\sigma_u)$ and  $(\kappa_0,\bm{0},\sigma_0)$ respectively. Then,
    \begin{align*}
        d(\kappa ,\bm{v}):=\lim_{\kappa_0 \to 0}D_2(M_{\kappa ,\bm{v},\sigma_u},M_{\kappa_0,\bm{0},\sigma_0})^2=\frac{\pi}{3} \kappa ^2 (3 \cosh (2 \norm{\bm{v}} )+1) .
    \end{align*}
\end{lemma}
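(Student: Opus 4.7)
My plan is to substitute the closed-form spectral density \eqref{spectral density R^d} into the definition of $D_2$ and reduce the problem to a scalar integral of a rational function in $\bm{\xi}$ that I can evaluate in closed form. Writing $S_{\kappa,\bm{v}}(\bm{\xi})/\sigma_u^2 = 4\pi\kappa^2/(\kappa^2 + 4\pi^2 \bm{\xi}^T \bm{H}_{\bm{v}} \bm{\xi})^2$ and $S_{\kappa_0,\bm{0}}(\bm{\xi})/\sigma_0^2 = 4\pi\kappa_0^2/(\kappa_0^2 + 4\pi^2 \|\bm{\xi}\|^2)^2$, I would expand
\[
D_2(M_{\kappa,\bm{v},\sigma_u}, M_{\kappa_0,\bm{0},\sigma_0})^2 = A(\kappa,\bm{v}) - 2 B(\kappa,\bm{v},\kappa_0) + C(\kappa_0),
\]
where $A$, $B$, $C$ are the integrals of $\|2\pi\bm{\xi}\|^4$ against the square, cross-term, and square of the two rescaled spectra.

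Next I would dispatch $B$ and $C$. For $C$, after substituting $\bm{\mu}= \bm{\xi}/\kappa_0$ the factor $\kappa_0^2$ comes out in front and the remaining integral is an absolute constant, so $C = O(\kappa_0^2) \to 0$. For $B$, the same substitution $\bm{\xi} = \kappa_0 \bm{\mu}$ yields an explicit prefactor $\kappa_0^4$ times an integrand dominated, uniformly in $\kappa_0$ small, by an integrable function of $\bm{\mu}$ (the factor $1/(\kappa^2+4\pi^2 \kappa_0^2 \bm{\mu}^T\bm{H}_{\bm{v}}\bm{\mu})^2$ is bounded by $1/\kappa^4$ and the remaining factor $\|\bm{\mu}\|^4/(1+\|\bm{\mu}\|^2)^2$ combines with one of the prefactors enough to be integrable on $\R^2$ after a second trivial split of the domain). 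Dominated convergence then gives $B \to 0$.

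It remains to evaluate $A$ in closed form. I would perform three substitutions in succession: first $\bm{\eta} = 2\pi\bm{\xi}$ to clean constants, giving
\[
A = 4\kappa^4 \int_{\R^2} \frac{\|\bm{\eta}\|^4}{(\kappa^2 + \bm{\eta}^T \bm{H}_{\bm{v}} \bm{\eta})^4}\,d\bm{\eta};
\]
second $\bm{\eta} = \kappa \bm{\zeta}$ to strip $\kappa$ from the denominator, producing an overall factor $4\kappa^2$; and third, diagonalize $\bm{H}_{\bm{v}} = \bm{R}^T \operatorname{diag}(e^{|\bm{v}|}, e^{-|\bm{v}|}) \bm{R}$ (using \Cref{parameterizATION THEOREM}) and then rescale the eigencoordinates by $e^{\mp|\bm{v}|/2}$ so that the quadratic form becomes $1 + \|\bm{u}\|^2$ while $\|\bm{\zeta}\|^4$ becomes $(e^{-|\bm{v}|}u_1^2 + e^{|\bm{v}|}u_2^2)^2$. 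The Jacobian of the last change is $1$ since $\det \bm{H}_{\bm{v}} = 1$.

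Expanding the square leaves three standard isotropic integrals $\int_{\R^2} u_i^4/(1+\|\bm{u}\|^2)^4\,d\bm{u}$ and $\int_{\R^2} u_1^2 u_2^2/(1+\|\bm{u}\|^2)^4\,d\bm{u}$, which I would compute in polar coordinates using the identities $\int_0^{2\pi}\cos^4\theta\,d\theta = 3\pi/4$, $\int_0^{2\pi}\cos^2\theta\sin^2\theta\,d\theta = \pi/4$, and $\int_0^\infty r^5/(1+r^2)^4\,dr = 1/6$. Summing the pieces yields $\frac{\pi}{8}(e^{2|\bm{v}|}+e^{-2|\bm{v}|}) + \frac{\pi}{12}$, and using $e^{2|\bm{v}|}+e^{-2|\bm{v}|} = 2\cosh(2|\bm{v}|)$ together with the $4\kappa^2$ prefactor delivers the claimed $\frac{\pi}{3}\kappa^2(3\cosh(2|\bm{v}|)+1)$. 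The only delicate step is the dominated-convergence argument for the cross term $B$; everything else is bookkeeping of changes of variables.
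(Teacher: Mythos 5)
Your proposal is correct and reaches the right value, and its core computation of the main term is the same as the paper's: strip out $\kappa$ by scaling to get $4\kappa^2\int \|\bm{\zeta}\|^4(1+\bm{\zeta}^T\bm{H}_{\bm{v}}\bm{\zeta})^{-4}\d\bm{\zeta}$, then diagonalize $\bm{H}_{\bm{v}}$. You differ in two places, both defensibly. First, the paper disposes of the base-model contribution by arguing that $S_{\kappa_0,\bm{0},\sigma_0}/\sigma_0^2\to\delta_{\bm{0}}$ distributionally and letting the weight $\norm{2\pi\bm{\xi}}^4$ kill it; your explicit decomposition $A-2B+C$ with scaling estimates for $B$ and $C$ is more careful, since the paper's heuristic does not directly address the squared term $C$ (a ``square of a delta''). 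Second, for the main integral the paper integrates radially first and is left with the angular integral $\int_0^{2\pi}\qty(e^{\norm{\bm{v}}}\sin^2\phi+e^{-\norm{\bm{v}}}\cos^2\phi)^{-3}\d\phi$, whereas your rescaling of the eigencoordinates moves the anisotropy into the numerator and reduces everything to elementary trigonometric moments; your route avoids the one nontrivial special integral at the cost of an extra change of variables.

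One step needs restating. For the cross term $B$, after substituting $\bm{\xi}=\kappa_0\bm{\mu}$ the integrand at $\kappa_0=0$ behaves like $\kappa^{-4}\norm{\bm{\mu}}^4/(1+\norm{\bm{\mu}}^2)^2$, which tends to a nonzero constant at infinity; there is no $\kappa_0$-uniform integrable dominating function, so dominated convergence as literally stated does not apply, and the integral in fact diverges like $\kappa_0^{-2}$. The conclusion survives because the prefactor is $\kappa_0^4$, giving $B=O(\kappa_0^2)\to 0$ — equivalently, in the original variables one can simply bound $(\kappa_0^2+4\pi^2\norm{\bm{\xi}}^2)^{-2}\le (4\pi^2\norm{\bm{\xi}}^2)^{-2}$, cancel the weight $\norm{2\pi\bm{\xi}}^4$, and observe that $\int(\kappa^2+4\pi^2\bm{\xi}^T\bm{H}_{\bm{v}}\bm{\xi})^{-2}\d\bm{\xi}<\infty$, so $B=O(\kappa_0^2)$. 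Your parenthetical about borrowing part of the prefactor and splitting the domain is exactly this fix; just do not call it dominated convergence.
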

\begin{proof}
    For any $(\kappa ,\bm{v},\sigma_u)$ the spectral density of $u_{\kappa ,\bm{v},\sigma_u}$ is
    \begin{align*}
        S_{\kappa ,\bm{v},\sigma_u}(\bm{\xi})=\frac{4\pi\kappa^2 \sigma _u^2}{\qty(\kappa ^2+4\pi^2\bm{\xi}^T \bm{H} \bm{\xi})^2} .
    \end{align*}
    As a result,  in a distributional sense
    \begin{align*}
        \lim_{\kappa _0 \to 0} \frac{S_{\kappa_0, \bm{0},\sigma_0}}{\sigma_0^2}(\bm{\xi})= \lim_{\kappa _0 \to 0} \frac{4\pi\kappa_0^2 }{\qty(\kappa_0 ^2+4\pi^2\norm{\bm{\xi }}^2 )^2}=\delta_{\bm{0}} (\bm{\xi}),
    \end{align*}
    where $\delta_{\bm{0}}$ is the Dirac delta at $\bm{0}$.
    In consequence, expanding the square in \eqref{pseudo metric} and using the change of variable $\bm{\xi } \to \kappa \bm{\xi }/(2\pi)$, we obtain that
    \begin{equation}\label{distance calc exp}
        \lim_{\kappa_0 \to 0}D_2(M_{\kappa ,\bm{v},\sigma_u},M_{\kappa_0,\bm{0},\sigma_0})^2=\int_{\R^2} \norm{2\pi\bm{\xi }}^4 \frac{S_{\kappa ,\bm{v}, \sigma _u}(\bm{\xi })^2}{\sigma _u^4} \d \bm{\xi }  =4\kappa^{2} \int_{\R^2}\frac{\norm{\bm{\xi}}^{4}}{(1+\bm{\xi}^T \bm{H}_{\bm{v}} \bm{\xi})^4} \d\bm{\xi}.
    \end{equation}
    Let $\alpha:= \arg(\bm{v})$ and $B_{\alpha /2}$ be a rotation by $\alpha /2$ radians. Then
    \begin{align*}
        \bm{B}_{\alpha /2}    \bm{H}_{\bm{v}} \bm{B}_{ \alpha /2 }^{-1}= \begin{bmatrix}
                                                                             \ex{-\norm{\bm{v}} } & 0
                                                                             \\
                                                                             0                    & \ex{\norm{\bm{v}}}
                                                                         \end{bmatrix} .
    \end{align*}
    Thus, by rotating and then changing to polar coordinates, we obtain
    \begin{align*}
         & \lim_{\kappa_0 \to 0}D_2(M_{\kappa ,\bm{v},\sigma_u},M_{\kappa_0,\bm{0},\sigma_0})^2=4\kappa^2 \int_{0}^{2\pi } \int_{0}^\infty \frac{r^5}{\left(1+r^2 \left(\ex{\norm{\bm{v}} } \sin ^2(\phi )+\ex{-\norm{\bm{v}} }
        \cos ^2(\phi )\right)\right)^4} \d r \d \phi                                                                                                                                                                    \\
         & =\int_{0}^{2\pi } \frac{4\kappa ^2}{6 \left(\ex{\norm{\bm{v}}}  \sin ^2(\phi )+\ex{-\norm{\bm{v}} } \cos
            ^2(\phi )\right)^3}\d  \phi=\frac{ \pi }{3 } (3 \cosh (2 \norm{\bm{v}} )+1)\kappa ^2 .\end{align*}
    This concludes the proof.
\end{proof}
\begin{observation}[\revisionone{Selection of Sobolev exponent}]
\revisionone{The same calculation to the one that gave \Cref{distance calc exp}  shows that for $s \geq 1$ }
\begin{equation*}
    \lim_{\kappa_0 \to 0}D_s(M_{\kappa ,\bm{v},\sigma_u},M_{\kappa_0,\bm{0},\sigma_0})^2=\int_{\R^2} \norm{2\pi\bm{\xi }}^{2s} \frac{S_{\kappa ,\bm{v}, \sigma _u}(\bm{\xi })^2}{\sigma _u^4} \d \bm{\xi }  =4\kappa^{2(s-1)} \int_{\R^2}\frac{\norm{\bm{\xi}}^{2s}}{(1+\bm{\xi}^T \bm{H}_{\bm{v}} \bm{\xi})^{4}} \d\bm{\xi}.
\end{equation*}
This shows that the distance $d_s$ is constant in $\kappa $ for $s=1$ and becomes infinite for \revisiontwo{$s\geq 3$ due to the decay rate of the integrand being too low}. For this reason, the value $s=2$ was chosen.

\end{observation}
As we see from \Cref{distance lemma}, $d(\kappa ,\bm{v})$ is independent of the argument $\alpha$ of $\bm{v}$  and depends only on $\kappa $ and $r:=\norm{\bm{v}}$. For this reason, we consider $\alpha $ to be independent of $\kappa,\bm{v}$, and, since a priori, there is no preferred direction for the anisotropy, we set a uniform prior on $\alpha$
\begin{equation}\label{alf uniform}
    \pi_\alpha (\alpha)=\frac{1}{2\pi }1_{[0,2\pi ]}(\alpha)
\end{equation}
We will now define priors on $\kappa,r$ following a sequential construction. To do so, we will use the fact that the distance to the base model $\rho (\kappa,\bm{v} )$ decomposes as a product
\begin{align}\label{pseudo metric calculation}
    d(\bm{v},\kappa)  =f(r)g(\kappa),\quad
    f(r)            :=\qty(\frac{ \pi }{3} (3 \cosh (2 \norm{\bm{v}} )+1) )^\frac{1}{2},\quad g(\kappa):=\kappa.
\end{align}
Together with the following lemma,
\begin{lemma}[Distribution of product]\label{product lemma}
    Let $X,Y$ be two positive random variables such that $Y|X$ has density
    \begin{equation*}
        f_{Y|X}(y,x)=\lambda x \exp(-\lambda x y )1_{[0,+\infty)}(y).
    \end{equation*}
    Then, the product $Z:=XY$ follows an exponential distribution with parameter $\lambda $.
\end{lemma}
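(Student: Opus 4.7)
The plan is to condition on $X$ and show that the conditional distribution of $Z=XY$ given $X$ is exponential with parameter $\lambda$, independent of $X$. Since the conditional distribution does not depend on $X$, the marginal distribution of $Z$ must coincide with this conditional distribution, giving the desired result.

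More concretely, the first step is to observe that by the hypothesis, conditional on $X=x>0$, the random variable $Y$ is exponentially distributed with rate $\lambda x$. Scaling properties of the exponential distribution then imply that $Z=XY$ conditional on $X=x$ is exponentially distributed with rate $\lambda x/x=\lambda$. I would make this explicit by computing the conditional CDF directly: for $z\geq 0$ and any $x>0$,
\begin{equation*}
    \PP[Z\leq z \mid X=x] = \PP[Y\leq z/x\mid X=x] = \int_0^{z/x}\lambda x \exp(-\lambda x y)\d y = 1-\exp(-\lambda z),
\end{equation*}
which is independent of $x$.

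The second and final step is to take the expectation over $X$. By the tower property,
\begin{equation*}
    \PP[Z\leq z] = \E\qb{\PP[Z\leq z\mid X]} = 1-\exp(-\lambda z),
\end{equation*}
which is the CDF of an $\mathrm{Exp}(\lambda)$ distribution.

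There is no significant obstacle here; the only minor care needed is to ensure that $X>0$ almost surely so that the change of variables $y\mapsto z/x$ is well defined, which follows from the standing assumption that $X$ is positive. The result can then be invoked in the main text with $X=f(\norm{\bm{v}})$ and $Y=\kappa$ to conclude that, under the chosen conditional exponential prior for $\kappa\mid\bm{v}$, the product $d(\kappa,\bm{v})=f(\norm{\bm{v}})g(\kappa)=f(\norm{\bm{v}})\kappa$ is exponentially distributed as required by the PC prior construction principle.
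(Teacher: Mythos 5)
Your proof is correct and follows essentially the same route as the paper: condition on $X$, integrate the conditional density of $Y$ up to $z/x$, and apply the law of total expectation. The only cosmetic difference is that you evaluate the conditional CDF in closed form (observing it is $1-\exp(-\lambda z)$, independent of $x$) before averaging over $X$, whereas the paper differentiates under the integral to exhibit the density $\lambda\exp(-\lambda z)$ directly; both computations are equivalent.
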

\begin{proof}
    By the law of total expectation, the cumulative density function of $Z$ verifies
    \begin{align*}
        F_Z(z) & =\mathbb{P}[XY\leq z]=\E[\E[1_{XY\leq z}|X]]=\E\qty[\int_{0}^{z/X}f_{Y|X}(y|X) dy] \\&=\int_{0}^\infty\int_{0}^{z /x} f_{Y|X}(y|x)\d y\d\mathbb{P}_{X}(x),
    \end{align*}
    where $\mathbb{P}_{X}$ is the push-forward of $\mathbb{P}$ by $X$.
    Applying the fundamental theorem of calculus shows that the density of  $Z$ is
    \begin{equation*}
        F'_Z(z)=\int_{0}^\infty \frac{1}{x}f_{Y|X}(z /x|x)d\mathbb{P}_{X}(x)=\lambda \ex{-\lambda z} \int_{0}^\infty d\mathbb{P}_{X}(x)=\lambda \ex{-\lambda z}.
    \end{equation*}
    This concludes the proof.
\end{proof}
The above lemma states that if we want the distance $d=f g$ to be exponentially distributed, it suffices to let $f$ follow  \emph{any} distribution and have $g$ conditional on  $f$ be exponentially distributed with parameter proportional to  $f$. Since $f$ takes a minimum of
\begin{equation*}
    f(0)=\sqrt{4\pi/3 },
\end{equation*}
the conditional distribution of $f$ given $g$ cannot follow an exponential distribution. Therefore, we will instead select the conditional distribution of $g$ given $f$ to be exponential.

While \Cref{product lemma} allows complete freedom in choosing a distribution for $f$, the symmetry in the distance $fg$ implies there is no intrinsic reason to prefer penalizing an increase in $f$ either more or less than an increase in $g$. Consequently, we will also impose an exponential penalty on $f$ by setting its marginal density to be equal to the following,
\begin{equation*}
    \pi_{f}(f)= \lambda_{\bm{v}} \ex{-\lambda_{\bm{v}} (f- f(0))}1_{[f(0),\infty)}.
\end{equation*}

We can set priors in the stationary setting in the same way as in the previous section.

\subsection{Proof of \Cref{Prior r kappa stationary theorem}} %
\begin{proof}
    Let us write
    \begin{align*}
        r=\norm{\bm{v}}, \quad  \alpha =\arg(\bm{v}) .
    \end{align*}
    We have already determined the prior distributions
    \begin{equation}\label{prior f,g}
        \pi_{f}(f)= \lambda_{\bm{v}} \ex{-\lambda_{\bm{v}} \qty(f- f(0))}1_{[f(0),\infty)}(f),\quad \pi_{\kappa |f}(\kappa |f)=\lambda_{\bm{\theta}} f \ex{-\lambda_{\bm{\theta}} f \kappa }1_{[0,\infty)}(\kappa )
    \end{equation}
    Since $f:[0,\infty) \to [f(0), \infty)$ is bijective, we may apply a change of variables to obtain that the marginal prior for $r$ is
    \begin{equation}\label{prior r}
        \pi _r(r)=\pi_{f}(f(r))\abs{f'(r)}= \lambda_{\bm{v}} \ex{-\lambda_{\bm{v}}\qty( f(r)-f(0))}\abs{f'(r)}  .
    \end{equation}
    Again, by the bijectivity of $f$, we obtain that
    \begin{equation}\label{prior equation 2}
        \pi_{\kappa|r}=\pi_{\kappa|f=f(r)}=\lambda_{\bm{\theta}} f(r) \ex{-\lambda_{\bm{\theta}} f(r) \kappa }1_{[0,\infty)}(\kappa ).
    \end{equation}
    Since $\alpha$ is uniformly distributed on $[0,2\pi ]$ independently of $r,\kappa $ the joint prior for $(\kappa,\alpha,r )$ is
    \begin{align*}
        \pi_{\kappa,\alpha,r }(\kappa,\alpha,r )=\frac{1_{[0,2\pi]}(\alpha)}{2\pi } \pi_r (r)\pi_{\kappa|r } (\kappa|r ) .
    \end{align*}
    Changing variables from polar to Cartesian coordinates gives
    \begin{align}\label{prior equation 1}
        \pi_{\kappa,\bm{v}}(\kappa,\bm{v} )=\frac{1}{2\pi \norm{\bm{v}}}  \pi_r (\norm{\bm{v}} )\pi_{\kappa|r } (\kappa\mid\norm{\bm{v}}) .
    \end{align}
    Using equations \eqref{prior r} and \eqref{prior equation 2} in \eqref{prior equation 1} and using the expression for $f$ derived in \eqref{pseudo metric calculation}  concludes the proof. \end{proof}

\subsection{Proof of \Cref{par r theorem}}
\begin{proof}
    The prior for $f$ is defined in \eqref{prior f,g}. Since $f$ is increasing, we obtain that the cumulative distribution function of $r$ is for all $r_0\geq 0$
    \begin{align}\label{cdf r2}
        F_r(r_0)=\mathbb{P}(r\leq r_0)=\mathbb{P}(f(r)\leq f(r_0))=1-\ex{-\lambda_{\bm{\theta}} (f(r_0)-f(0))}.
    \end{align}
    The theorem follows by a straightforward algebraic manipulation.
\end{proof}

\subsection{Proof of \Cref{par kappa theorem}}%
\begin{proof}
    We begin by calculating the prior $\pi_{\kappa }$  for $\kappa $. A calculation shows that for all $\kappa  >0$
    \begin{align*}
        \pi_{\kappa }(\kappa ) & =\int_{\R}\pi_{f}(f)\pi_{\kappa |f}(\kappa  |f)df=\int_{f(0)}^\infty\lambda_{\bm{v}} \lambda_{\bm{\theta}} f  \exp(-\lambda_{\bm{v}} \qty(f- f(0))-\lambda_{\bm{\theta}}  f \kappa  )df
        \\&= \frac{\lambda_{\bm{\theta}}  \lambda_{\bm{v}} \ex{-f(0) \kappa  \lambda_{\bm{\theta}} } \left(f(0) \kappa  \lambda_{\bm{\theta}} +f(0) \lambda_{\bm{v}}+1\right)}{\left(\kappa  \lambda_{\bm{\theta}} +\lambda_{\bm{v}}\right)^2},\end{align*}
    whereas for $\kappa <0$ we have $\pi_\kappa (\kappa )=0$.
    A further calculation shows that the CDF of $\kappa  $ is
    \begin{align}\label{cdf kappa2}
        F_\kappa  (\kappa _0 )=\int_{-\infty}^{\kappa_0}\pi_\kappa (\kappa)  \d \kappa = \qt{1-\frac{\lambda_{\bm{v}} \ex{-f(0) \lambda_{\bm{\theta}} \kappa _0}}{ \lambda_{\bm{\theta}} \kappa _0+\lambda_{\bm{v}}}}1_{[0,\infty]}(\kappa _0).
    \end{align}
    Now solving for $\lambda_{\bm{\theta}}  $ gives that for any $\kappa _0>0$
    \begin{equation}\label{sg}
        \mathbb{P}[\kappa >\kappa _0]=\alpha \Longleftrightarrow \lambda_{\bm{\theta}} = \frac{1}{f(0)\kappa _0} W_0\qty(\frac{\ex{\lambda_{\bm{v}} f(0)}\lambda_{\bm{v}}f(0)}{\alpha} ) -\frac{\lambda_{\bm{v}} }{\kappa _0}
    \end{equation}
    This concludes the proof.
\end{proof}

\section{Gaussian reparameterization of the stationary PC priors}\label{reparameterization section}
We show how to write $(\kappa,\bm{v}) $ jointly as a function of a three-dimensional standard Gaussian $\vb{Y}\stackrel{d}{=} \Nn(\vb{0}, \mathbf{I}_3)$. Our idea is to generalize the method of inverse sampling. This is done by building an invertible generalization of the cumulative distribution function.
\begin{definition}
    Given a random variable $\bm{X}=(X_1,X_2) \in \R^2$ we define the \emph{generalized CDF of $\bm{X}$} as
    \begin{align*}
        \varphi(\bm{X},\bm{x}):=\varphi_{\bm{X}}(\bm{x}):=(\mathbb{P}[X_1\leq x_1],\mathbb{P}[X_2\leq x_2|X_1=x_1]).
    \end{align*}
\end{definition}
\begin{observation}\label{obs indep}
    In the case where $X_1, X_2$ are independent we obtain that
    \begin{align*}
        \varphi_{\bm{X}}(\bm{x})=(F_{X_1}(x_1),F_{X_2}(x_2)).
    \end{align*}

\end{observation}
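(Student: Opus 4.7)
The plan is to apply the definition of the generalized CDF coordinatewise and use the independence hypothesis only where it is needed. From the defining identity
$\varphi_{\bm{X}}(\bm{x}) = (\mathbb{P}[X_1 \leq x_1], \mathbb{P}[X_2 \leq x_2 \mid X_1 = x_1])$,
the first coordinate equals $F_{X_1}(x_1)$ by the very definition of a univariate CDF and requires no further argument.

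The substantive step concerns the second coordinate, where the goal is to replace the conditional probability by an unconditional one. Independence of $X_1$ and $X_2$ means that for every Borel set $A \subset \R$ the event $\{X_2 \in A\}$ is independent of the $\sigma$-algebra generated by $X_1$, which implies $\mathbb{P}[X_2 \in A \mid X_1 = x_1] = \mathbb{P}[X_2 \in A]$ for $\mathbb{P}_{X_1}$-almost every $x_1$. Specializing to $A = (-\infty, x_2]$ gives $\mathbb{P}[X_2 \leq x_2 \mid X_1 = x_1] = F_{X_2}(x_2)$, and combining the two coordinates produces the claimed identity $\varphi_{\bm{X}}(\bm{x}) = (F_{X_1}(x_1), F_{X_2}(x_2))$.

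There is no real obstacle here; the observation is essentially an unfolding of the definitions of conditional probability and independence. The only mild caveat worth flagging is that when $X_1$ is continuous the equality for the second coordinate holds only for $\mathbb{P}_{X_1}$-almost every $x_1$, since the conditioning event has probability zero and $\mathbb{P}[X_2 \leq x_2 \mid X_1 = x_1]$ must be interpreted through a regular conditional distribution. This null-set qualifier is harmless and is tacitly used later when $\varphi_{\bm{X}}$ is deployed as an inverse-sampling map on continuous distributions.
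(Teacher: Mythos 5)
Your proposal is correct: the paper states this observation without proof, treating it as an immediate unfolding of the definition of $\varphi_{\bm{X}}$ together with independence, which is exactly what you do. The almost-everywhere caveat for continuous $X_1$ that you flag is a fair and harmless refinement of the same argument.
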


By construction $\varphi_{\bm{X}}:\R^2\to [0,1]^2$. The definition can be extended without difficulty to dimensions larger than $2$. However, the notation becomes more cumbersome. The motivation for the above definition is given by the following properties, which mimic that of the $1$-dimensional CDF.
\begin{lemma}\label{generalized CDF lemma}
    Let $\bm{X}$ be a \revisiontwo{continuous} two-dimensional random variable \revisiontwo{with strictly increasing CDFs}, then the following hold
    \begin{enumerate}
        \item \label{p1} The distribution of $\bm{X}$ is determined by  $\varphi$. That is if $\bm{Y}$ is a two dimensional random variable such that $\varphi_{\bm{X}}=\varphi_{\bm{Y}}$, then $\bm{X}\stackrel{d}{=} \bm{Y}$.
        \item \label{p2} Consider a function $\phi:A\to\R^2$ defined on some subset $A \subset \R^2$ of the form
              \begin{align}\label{form}
                  \phi(x_1,x_2)=(\phi_1(x_1),\phi_2(x_1,x_2)),
              \end{align}
              where $\phi_1$ is invertible and  $\phi_1,\phi_2(x_1,\cdot)$ are monotone functions.
              Then,
              \begin{align*}
                  \varphi(\phi(\bm{X}),\phi(\bm{x}))=\varphi(\bm{X},\bm{x}), \quad\forall \bm{x} \in A.
              \end{align*}
        \item Let $\bm{X}$ be a  $\R^2$ valued random variable. Then $\varphi_{\bm{X}}$ has an inverse and $\varphi_{\bm{X}}, \varphi_{\bm{X}}^{-1}$ are both of the form  \eqref{form}.
    \end{enumerate}
\end{lemma}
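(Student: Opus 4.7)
The plan is to interpret $\varphi_{\bm{X}}(\bm{x})$ as the pair consisting of the marginal CDF $F_{X_1}(x_1)$ and the conditional CDF $F_{X_2 \mid X_1 = x_1}(x_2)$ (in the sense of a regular conditional distribution), and then to derive each part by standard CDF manipulations.

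For \ref{p1}, the first component of $\varphi_{\bm{X}}$ is exactly $F_{X_1}$, which pins down the law of $X_1$, and the second component, viewed as a function of $x_2$ for each fixed $x_1$, is the regular conditional CDF of $X_2$ given $X_1 = x_1$ and therefore pins down the regular conditional distribution of $X_2$ given $X_1$. By the disintegration theorem the marginal plus the regular conditionals jointly determine the law of $\bm{X}$, so $\varphi_{\bm{X}} = \varphi_{\bm{Y}}$ forces $\bm{X} \stackrel{d}{=} \bm{Y}$. For \ref{p2}, since $\phi_1$ is both invertible and monotone it is strictly monotone; treating the increasing case, $\{\phi_1(X_1) \leq \phi_1(x_1)\} = \{X_1 \leq x_1\}$, so the first coordinates match. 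The crucial observation for the second coordinate is that $\sigma(\phi_1(X_1)) = \sigma(X_1)$ because $\phi_1$ is a bijection, so conditioning on $\{\phi_1(X_1) = \phi_1(x_1)\}$ is equivalent to conditioning on $\{X_1 = x_1\}$ in the regular conditional sense. On this event $\phi_2(X_1, X_2) = \phi_2(x_1, X_2)$, and monotonicity of $\phi_2(x_1, \cdot)$ reduces $\{\phi_2(x_1, X_2) \leq \phi_2(x_1, x_2)\}$ to $\{X_2 \leq x_2\}$, matching the second coordinates.

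For the third claim, the form \eqref{form} is immediate: taking $\varphi_1(x_1) = F_{X_1}(x_1)$ and $\varphi_2(x_1, x_2) = F_{X_2 \mid X_1 = x_1}(x_2)$, the first coordinate depends only on $x_1$ and each $\varphi_i$ is non-decreasing in its own argument. For the generalized inverse I would build
\[
\varphi_{\bm{X}}^{-1}(u_1, u_2) := \bigl(F_{X_1}^{-1}(u_1),\; F_{X_2 \mid X_1 = F_{X_1}^{-1}(u_1)}^{-1}(u_2)\bigr),
\]
using the standard quantile function $F^{-1}(u) := \inf\{x : F(x) \geq u\}$. This map is manifestly of the form \eqref{form}, and the usual quantile-function identities $F(F^{-1}(u)) \geq u$ and $F^{-1}(F(x)) \leq x$ show it acts as a generalized inverse.

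The main obstacle I anticipate is the measure-zero conditioning in \ref{p2}: $\{X_1 = x_1\}$ is typically null, so the argument only makes rigorous sense through a regular conditional distribution, and I would rely on $\sigma(\phi_1(X_1)) = \sigma(X_1)$ to transfer between the two conditioning $\sigma$-algebras rather than manipulating elementary conditional probabilities. A secondary subtlety is that the inverse constructed in the third claim is a genuine \emph{generalized} inverse, not a strict one, whenever $F_{X_1}$ or some conditional CDF has atoms or flat regions; this matches the lemma's phrasing and does not affect the proof.
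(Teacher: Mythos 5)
Your proposal is correct and follows essentially the same route as the paper: the same reading of $\varphi_{\bm{X}}$ as (marginal CDF, regular conditional CDF), the same $\sigma(\phi_1(X_1))=\sigma(X_1)$ argument for part \ref{p2}, and the same quantile-function construction of $\varphi_{\bm{X}}^{-1}$ for the third claim. The one place you genuinely diverge is part \ref{p1}: you invoke the disintegration theorem to conclude that the marginal of $X_1$ together with the regular conditional law of $X_2$ given $X_1$ determines the joint law. The paper instead integrates the conditional CDF against $\mathbb{P}_{X_1}$ to show $X_2\stackrel{d}{=}Y_2$ and then asserts $\bm{X}\stackrel{d}{=}\bm{Y}$; as written, that step only establishes equality of the two marginals, which does not by itself determine the joint distribution. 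Your disintegration argument is the correct way to close this, and the paper's computation would need only a minor modification (integrate over $(-\infty,x_1]$ rather than all of $\R$ to recover the joint CDF) to reach the same conclusion. Your remarks on the measure-zero conditioning and on the inverse being only a generalized inverse are consistent with the paper's treatment.
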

\begin{proof}
    We prove the above points in order.
    \begin{enumerate}
        \item Suppose  $\varphi_{\bm{X}}=\varphi_{\bm{Y}}$, then by definition of $\varphi$
              we have that for all $\bm{x}=(x_1,x_2) \in \R^2$
              \begin{align*}
                  \mathbb{P}[X_1\leq x_1]=\mathbb{P}[Y_1\leq x_1],\quad\mathbb{P}[X_2\leq x_2|X_1=x_1]=\mathbb{P}[Y_2\leq x_2|Y_1=x_1] .
              \end{align*}
              From the basic theory of random variables, the left-hand side of the above implies that $X_1\stackrel{d}{=}Y_1$. As a result, we obtain that
              \begin{align*}
                  \mathbb{P}[X_2\leq x_2] & =\int_{\R}\mathbb{P}[X_2 \leq x_2|X_1=x_1]\d\mathbb{P}_{X_1}(x_1) \\&=\int_{\R}\mathbb{P}[Y_2 \leq x_2|Y_1=x_1]\d\mathbb{P}_{Y_1}(x_1)=\mathbb{P}[Y_2\leq x_2].
              \end{align*}
              Since this holds for all $x_2 \in \R$ we deduce as previously that $X_2\stackrel{d}{=} Y_2$. Thus we obtain that $\bm{X}\stackrel{d}{=} \bm{Y}$, as desired.

        \item To prove   \cref{p2} we work with  $\varphi$ componentwise. By the monotonicity of $\phi_1$, it is clear that for the first component,               \begin{align*}
                  \varphi_1(\phi(\bm{X}),\phi(\bm{x})):=\mathbb{P}[\phi_1(X_1)\leq \phi_1(x_1)]=\mathbb{P}[X_1\leq x_1]=\varphi_1(\bm{X},\bm{x}).
              \end{align*}
              The second part is proved similarly. We have that
              \begin{align*}
                   & \varphi_2(\phi(\bm{X}),\phi(\bm{x})):=\mathbb{P}[\phi_2(X_1,X_2)\leq \phi_2(x_1,x_2)|\phi_1(X_1)=\phi_1(x_1)] \\&=\mathbb{P}[\phi_2(X_1,X_2)\leq \phi_2(x_1,x_2)|X_1=x_1]=\mathbb{P}[\phi_2(x_1,X_2)\leq \phi_2(x_1,x_2)|X_1=x_1]
                  \\&= \mathbb{P}[X_2\leq x_2|X_1=x_1]=\varphi_2(\bm{X},\bm{x}),
              \end{align*}
              where in the second equality, we used that $\phi_1$ is invertible, so the $\sigma$-algebra generated by $X_1$ is the same as that generated by  $\phi(X_1)$ and $\phi(X_1)=\phi(x_1)$ if and only if $X_1=x_1$. And in the fourth equality, we used the fact that $\phi_2(x_1,\cdot)$ is monotone.
        \item We first prove $\varphi_{\bm{X}}$ is invertible. Let us take $\bm{p}=(p_1,p_2) \in [0,1]^2$. Then, the first component of $\varphi_{\bm{X}}$ is the univariate CDF $F_{X_1}$, and we can define
              \begin{align*}
                  x_1:=F_{X_1}^{-1}(p_1).
              \end{align*}
              Now, since $\varphi_2(\bm{X},(x_1,\cdot ))$ is increasing (as a function of the dot) for each $x_1$, it has an inverse
              \begin{align*}
                  x_2:=\varphi_2(\bm{X},(x_1,\cdot ))^{-1}(p_2).
              \end{align*}
              An algebraic verification now shows that $\varphi_{\bm{X}}$ has inverse
              \begin{align}\label{inverse gen cdf}
                  \varphi_{\bm{X}}^{-1}(\bm{p})=(x_1,x_2).
              \end{align}
              Furthermore, since $\varphi_{\bm{X}}$ is monotone in each component, so is $\varphi^{-1}_{\bm{X}}$.
    \end{enumerate}
    This completes the proof.
\end{proof}

\begin{lemma}\label{multivariate inverse sampling}
    Given \revisiontwo{continuous} random variables $\bm{X},\bm{Y}$ valued in $\R^2$ it holds that
    \begin{align*}
        \varphi_{\bm{X}}^{-1}\circ\varphi_{\bm{Y}}(\bm{Y})\stackrel{d}{=} \bm{X}
    \end{align*}
\end{lemma}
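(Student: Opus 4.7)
The plan is to reduce everything to the three properties of the generalized CDF established in \Cref{generalized CDF lemma}. Set $\phi := \varphi_{\bm{X}}^{-1} \circ \varphi_{\bm{Y}}$; the goal is to show $\phi(\bm{Y}) \stackrel{d}{=} \bm{X}$, and by Property \ref{p1} it suffices to prove the equality of generalized CDFs $\varphi_{\phi(\bm{Y})} = \varphi_{\bm{X}}$.

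First I would verify that $\phi$ falls into the class of maps described in \eqref{form}, as required by Property \ref{p2}. By the third item of \Cref{generalized CDF lemma}, both $\varphi_{\bm{Y}}$ and $\varphi_{\bm{X}}^{-1}$ have that structure, with invertible first component and with the second component monotone in its last argument. The class is closed under composition: the first component of $\phi$ is $F_{X_1}^{-1}\circ F_{Y_1}$, invertible and monotone, while the second is a composition of monotone maps in $x_2$ for each fixed $x_1$.

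Applying Property \ref{p2} to $\phi$ with the variable $\bm{Y}$ then yields, for every $\bm{y}\in\R^2$,
\[
\varphi_{\phi(\bm{Y})}(\phi(\bm{y})) \;=\; \varphi(\phi(\bm{Y}),\phi(\bm{y})) \;=\; \varphi(\bm{Y},\bm{y}) \;=\; \varphi_{\bm{Y}}(\bm{y}).
\]
Since $\phi(\bm{y}) = \varphi_{\bm{X}}^{-1}(\varphi_{\bm{Y}}(\bm{y}))$ and $\varphi_{\bm{X}}\circ\varphi_{\bm{X}}^{-1}$ is the identity on $[0,1]^2$, this reads $\varphi_{\phi(\bm{Y})}(\bm{z}) = \varphi_{\bm{X}}(\bm{z})$ for every $\bm{z}$ in the image of $\phi$. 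Since $\phi(\bm{Y})$ is supported inside this image, both generalized CDFs are determined by their values there, so $\varphi_{\phi(\bm{Y})} = \varphi_{\bm{X}}$, and Property \ref{p1} concludes.

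The main technical obstacle I expect is the care required around generalized inverses: $\varphi_{\bm{X}}^{-1}$ is only a right inverse, so the identity $\varphi_{\bm{X}}\circ\varphi_{\bm{X}}^{-1}=\mathrm{id}$ and the invertibility hypothesis in Property \ref{p2} need justification at atoms or on flat stretches of the marginal CDFs. I would handle this coordinatewise through the sequential structure of \eqref{form}, invoking the classical one-dimensional fact that $F(F^{-1}(p))=p$ for all $p$ in the image of $F$, and observing that the complement of this image is null under $\varphi_{\bm{Y}}(\bm{Y})$, so it does not affect the equality in distribution.
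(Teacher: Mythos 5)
Your argument is correct and follows essentially the same route as the paper's proof: reduce via Property \ref{p1} to equality of generalized CDFs and then apply the invariance Property \ref{p2} to the composite $\varphi_{\bm{X}}^{-1}\circ\varphi_{\bm{Y}}$. The only cosmetic difference is that the paper evaluates the identity at the pulled-back point $\varphi_{\bm{Y}}^{-1}\circ\varphi_{\bm{X}}(\bm{x})$ to reach an arbitrary $\bm{x}$ directly, whereas you evaluate at $\phi(\bm{y})$ and pass to the image of $\phi$; your added checks (closure of the class \eqref{form} under composition, care with generalized inverses) are sensible refinements of steps the paper leaves implicit.
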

\begin{proof}
    By \cref{p1} of \Cref{generalized CDF lemma}, it suffices to prove that the generalized CDFs of $\bm{X}$ and  $\bm{Y}$ coincide. To this aim, we have that, by \cref{p2} of the previous lemma
    \begin{align*}
        \varphi(\varphi_{\bm{X}}^{-1}\circ\varphi_{\bm{Y}}(\bm{Y}),\bm{x})=\varphi(\bm{Y},\varphi_{\bm{Y}}^{-1}\circ\varphi_{\bm{X}}(\bm{x}))=\varphi_{\bm{Y}}(\varphi_{\bm{Y}}^{-1}(\varphi_{\bm{X}}(\bm{x})))=\varphi_{\bm{X}}(\bm{x}).
    \end{align*}
    This concludes the proof.
\end{proof}

We now calculate the transformation that makes $(\kappa,r)$ Gaussian.
\begin{lemma}\label{Gaussian r,k}
    Let $(\bm{v},\kappa)$ be distributed according to the PC priors $\pi (\bm{v},\kappa )$ and define $r:=\norm{\bm{v}} $, \revisiontwo{and let $f$ be as in \Cref{Prior r kappa stationary theorem}}.Then, given $\bm{U} \sim \Nn(\bm{0},\mathbf{I}_2)$ it holds that
    \begin{align*}
        (r,\kappa)\stackrel{d}{=}\qty( f^{-1}\qty(f(0)-\frac{\log(1-\Phi(U_1))}{\lambda_{\bm{v}} } ), \qty(\frac{\log(1-\Phi(U_1))}{\lambda_{\bm{v}} }-f(0)  )^{-1}\frac{\log(1-\Phi(U_2))}{\lambda_{\bm{\theta}} }
        ),
    \end{align*}
    where $\Phi$ is the CDF of a univariate standard Gaussian and $r:=\norm{\bm{v}}$.
\end{lemma}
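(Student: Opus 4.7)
The plan is to apply the multivariate inverse sampling result of Lemma~\ref{multivariate inverse sampling} to the pair $(r,\kappa)$ and the reference pair $\bm{U}=(U_1,U_2)$. Since $U_1,U_2$ are independent, Observation~\ref{obs indep} gives $\varphi_{\bm{U}}(\bm{u})=(\Phi(u_1),\Phi(u_2))$, so the identity $(r,\kappa)\stackrel{d}{=}\varphi_{(r,\kappa)}^{-1}\circ\varphi_{\bm{U}}(\bm{U})$ will reduce the claim to computing and inverting the generalized CDF of $(r,\kappa)$.

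First I would assemble the two ingredients already present in the construction of the PC prior. The marginal CDF of $r$, derived in the proof of Theorem~\ref{par r theorem}, is $F_r(r_0)=1-\exp(-\lambda_{\bm{v}}(f(r_0)-f(0)))$ on $[0,\infty)$, and from \eqref{prior equation 2} the conditional distribution of $\kappa$ given $r$ is exponential with rate $\lambda_{\bm{\theta}}f(r)$, so $F_{\kappa\mid r=r_0}(\kappa_0)=1-\exp(-\lambda_{\bm{\theta}}f(r_0)\kappa_0)$. Consequently
\begin{equation*}
\varphi_{(r,\kappa)}(r,\kappa)=\bigl(\,1-e^{-\lambda_{\bm{v}}(f(r)-f(0))},\;1-e^{-\lambda_{\bm{\theta}}f(r)\kappa}\bigr).
\end{equation*}
Since $f$ is strictly increasing on $[0,\infty)$ (visible from $f'(r)=\sqrt{\pi}\sinh(2r)/\sqrt{\cosh(2r)+1/3}>0$ for $r>0$) and $\kappa\mapsto 1-e^{-\lambda_{\bm{\theta}}f(r)\kappa}$ is strictly increasing for each fixed $r$, the map $\varphi_{(r,\kappa)}$ has the lower-triangular, componentwise-monotone structure required by Lemma~\ref{generalized CDF lemma}, so a generalized inverse exists.

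Next I would invert componentwise. Setting $p_1=F_r(r)$ gives $f(r)=f(0)-\log(1-p_1)/\lambda_{\bm{v}}$, hence $r=f^{-1}\bigl(f(0)-\log(1-p_1)/\lambda_{\bm{v}}\bigr)$; this is well-defined since $-\log(1-p_1)/\lambda_{\bm{v}}\ge 0$ lies in the range of $f-f(0)$. Substituting this expression for $f(r)$ into the second equation $p_2=1-\exp(-\lambda_{\bm{\theta}}f(r)\kappa)$ and solving for $\kappa$ gives
\begin{equation*}
\kappa=\frac{-\log(1-p_2)}{\lambda_{\bm{\theta}}\,(f(0)-\log(1-p_1)/\lambda_{\bm{v}})}=\left(\frac{\log(1-p_1)}{\lambda_{\bm{v}}}-f(0)\right)^{-1}\frac{\log(1-p_2)}{\lambda_{\bm{\theta}}}.
\end{equation*}

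Finally, taking $(p_1,p_2)=(\Phi(U_1),\Phi(U_2))=\varphi_{\bm{U}}(\bm{U})$ and applying Lemma~\ref{multivariate inverse sampling} yields $(r,\kappa)\stackrel{d}{=}\varphi_{(r,\kappa)}^{-1}(\Phi(U_1),\Phi(U_2))$. The stated formula then follows from the observation that $\Phi(U_i)\stackrel{d}{=}1-\Phi(U_i)\sim U(0,1)$ by symmetry of the standard Gaussian, so $\log(1-\Phi(U_i))$ may replace $\log(1-p_i)$ without changing the distribution. The only real subtlety is bookkeeping the monotonicity and the image of $f$ to justify the inversion on the correct branch; once $f$ is known to be a bijection from $[0,\infty)$ onto $[f(0),\infty)$, the rest is algebra.
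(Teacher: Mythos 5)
Your proposal is correct and follows essentially the same route as the paper: it invokes \Cref{multivariate inverse sampling} together with \Cref{obs indep}, computes the marginal CDF of $r$ from \eqref{cdf r2} and the conditional exponential law of $\kappa$ given $r$ from \eqref{prior equation 2}, and inverts the generalized CDF componentwise. The closing remark about replacing $\Phi(U_i)$ by $1-\Phi(U_i)$ is unnecessary (the stated formula already matches the direct inversion with $p_i=\Phi(U_i)$), but it is harmless.
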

\begin{proof}
    The CDF $F_r$ of $r$ was calculated in \eqref{cdf r2}. An inversion gives
    \begin{align*}
        x_1:=F_r^{-1}(p_1)             & = f^{-1}\qty(f(0)-\frac{\log(1-p_1)}{\lambda_{\bm{v}} }  ).    \end{align*}
    To calculate the CDF of $\kappa |r$, we work with the density in \eqref{prior f,g}. This gives
    \begin{align*}
        \varphi_2((r ,\kappa),(x_1,x_2))= 1-\ex{-\lambda_{\bm{\theta}} f(x_1)x_2}.
    \end{align*}
    For each fixed $x_1$, the above has the inverse
    \begin{align*}
        x_2:=\varphi_2((r,\kappa),(x_1,\cdot ))^{-1}(p_2)=- \frac{\log(1-p_2)}{\lambda_{\bm{\theta}} f(x_1)}.
    \end{align*}
    This, combined with \eqref{inverse gen cdf}, the preceding \Cref{multivariate inverse sampling} (with $\bm{Y}= \bm{U}$), and \Cref{obs indep} (applied to $\bm{U}$), completes the proof.
\end{proof}
\subsection{Proof of \Cref{reparameterization theorem stationary}}
\begin{proof}
    Let $\alpha\sim \Uu_{[0,2\pi ]} $ be independent from $r$. Then, by the definition $r= \norm{\bm{v}} $,
    \begin{align*}
        \bm{v}\stackrel{d}{=}(r\cos(\alpha ),r\sin(\alpha )),
    \end{align*}
    The proof then follows from Lemma \ref{Gaussian r,k} and by observing that $ R\qty(\sqrt{Y_1^2+Y_2^2} ),\Phi(Y_3)$ are i.i.d. uniformly distributed on  $[0,1]$, and that
    \begin{align*}
        \frac{Y_1}{\sqrt{Y_1^2+Y_2^2} } \stackrel{d}{=} \cos(\alpha),\quad \frac{Y_1}{\sqrt{Y_1^2+Y_2^2} } \stackrel{d}{=} \sin(\alpha)  .
    \end{align*}
\end{proof}
\subsection{Proof of \Cref{relationship prop}}

\begin{proof}
    The determinant of $\bm{H}_{\bm{v}(\alpha)}$ is $(\beta+\gamma)\gamma$. Setting this equal to $1$ gives $\beta=(1-\gamma^2)/\gamma$. Next, we note that given any unit vector $\bm{e}$, it holds that
    \begin{equation*}
        \mathbf{I}=\bm{e} \bm{e}^T+\bm{e}_\perp \bm{e}^T_\perp.
    \end{equation*}
    Substituting this into \eqref{fuglstadtparam} with $\bm{e}=\bm{v}(\alpha)$ gives
    \begin{equation*}
        \bm{H}_{\bm{v}(\alpha)}=(\gamma+\beta)\bm{v}(\alpha)\bm{v}(\alpha)^T+\gamma \bm{v}(\alpha)_\perp \bm{v}(\alpha)_\perp ^T=\frac{1}{\gamma}\bm{v}(\alpha)\bm{v}(\alpha)^T+\gamma \bm{v}(\alpha)_\perp \bm{v}(\alpha)_\perp ^T.
    \end{equation*}
    Setting $\bm{H}_{\bm{v}}=\bm{H}_{\bm{v}(\alpha)}$, we obtain the condition
    \begin{equation*}
        \frac{1}{\gamma}\bm{v}(\alpha)\bm{v}(\alpha)^T+{\gamma} \bm{v}(\alpha)_\perp \bm{v}(\alpha)^T_\perp=\frac{\ex{\norm{\bm{v}} }}{\norm{\bm{v}}^2 }\tl{\bm{v}}\tl{\bm{v}}^T+\frac{\ex{-\norm{\bm{v}} }}{\norm{\bm{v}}^2 }\tl{\bm{v}}_\perp\tl{\bm{v}}_\perp^T.
    \end{equation*}
    Thus, equality holds in the conditions of the theorem.
\end{proof}
\section{Simulation study details}\label{simulation details}
In this section, we expand on the simulation study of \Cref{simulation section}. We discuss the details of the simulation of the SPDE, compare the priors, and present some additional results.
\subsection{Sampling from the SPDE }
To simulate the Gaussian field resulting from \eqref{SPDE2D}, we use the \texttt{fmesher} package. This package uses a FEM method to approximate the precision matrix of $u$. That is, we approximate $u$ to be in the Hilbert space $H$ spanned by our finite element basis $\set{\phi_i}_{i=1}^n$ and impose
\begin{align}\label{FEM equation}
    \qty[\br{\phi_i, \Ll u}, i\in \{1,...,n\}]= \qty[\br{\phi_i, \kappa \dot{\Ww}}, i\in\{1,\ldots,n\}]
\end{align}
The condition that $u\in H$ leaves us with
\begin{align}\label{approx u}
    u(\bm{x})= \sum_{i=1}^n \phi_i(\bm{x}) u_i,
\end{align}
where one can show that the vector of weights $\bm{u}=(u_1,\ldots,u_n)$ is Gaussian. Let us write $\bm{Q}_u$ for its precision matrix, $\bm{u} \sim \Nn(\bm{0},\bm{Q}_u^{-1})$. Then, from
\eqref{approx u}, knowing $\bm{Q}_u$ determines  $\bm{u}$. The value of $\bm{Q}_u$ is determined in turn by \eqref{FEM equation}, as substituting in  \eqref{approx u} and applying integration by parts gives
\begin{align}\label{SPDE discrete}
    \bm{L} \bm{u}\sim \Nn(\bm{0},\bm{C}_\kappa),
\end{align}
where
\begin{align}\label{matrices}
    \bm{L}=\bm{C}_{\kappa }+\bm{G}_{\bm{H}},\quad [\bm{C}_\kappa]_{ij}=\br{\phi_i,\kappa ^2\phi_j}_{L^2(\Dd )}, \quad [\bm{G}_{\bm{H}}]_{ij}=\br{\nabla \phi_i, \bm{H}\nabla \phi_j}_{L^2(\Dd )} .
\end{align}
From \eqref{SPDE discrete} we deduce that the precision $Q$ of the weight vector $w$ is
\begin{align}\label{precision weights}
    \bm{Q}_u=(\bm{C}_\kappa +\bm{G}_{\bm{H}})^\dagger \bm{C}_\kappa^{-1} (\bm{C}_\kappa +\bm{G}_{\bm{H}})=\bm{C}_\kappa +2\bm{G}_{\bm{H}}+\bm{G}_{\bm{H}}\bm{C}_\kappa ^{-1} \bm{G}_{\bm{H}},
\end{align}
where we used that since $\bm{H}$ is symmetric, and we are considering real-valued functions, $\bm{C}_\kappa,\bm{G}_{\bm{H}}$ are symmetric.

The finite element basis is chosen so that $\bm{C}_\kappa, \bm{G}_{\bm{H}}$ are sparse. Since $\bm{C}_\kappa$ is not sparse, it is approximated by the mass lumped version
\begin{equation*}
    [\tl{\bm{C}}_\kappa]_{ij}:=\delta_{ij}\sum_{k=1}^n [\bm{C}_\kappa]_{ik}.
\end{equation*}
After this approximation we obtain a sparse precision $\tl{\bm{Q}}_u$ which allows us to sample efficiently from $\bm{u}$ (or rather, its approximation $\tl{u}\sim \Nn(\bm{0},\tl{\bm{Q}}_u)$) by using a Cholesky decomposition $\tl{\bm{Q}}_u=\bm{L}\bm{L}^T$ and solving $\bm{L}^Tz=u$ where $z\sim \Nn(\bm{0},\mathbf{I})$.

It remains to discuss how the integrals defining $\bm{C}_\kappa,\bm{G}_{\bm{H}}$ are calculated. Let $\Tt$ be the mesh of the domain. For each triangle $T\in \Tt$ we denote the average value of $\kappa , \bm{H}$ on the nodes of $T$ as $\kappa^2(T),\bm{H}(T)$ and approximate the integrals in \eqref{matrices} by
\begin{equation}\label{matrices calculation}
    [\bm{C}_\kappa]_{ij} \approx\sum_{T\in \Tt  }\kappa ^2(T)\int_{T}\phi_i \phi_j \,\mathrm{d}\bm{x}, \quad
    [\bm{G}_{\bm{H}}]_{ij}\approx\sum_{\substack{T\in \Tt  \\ k,l=1,2}}[\bm{H}(T)]_{kl}\int_{T}\partial_k\phi_i \partial_l\phi_j \,\mathrm{d}\bm{x}.
\end{equation}
Finally, we take $\phi_i$ to be piecewise linear, equal to $1$ on node $i$ and equal to $0$ on every other node. An explicit formula for the integrals in \eqref{matrices calculation} can be found in \citet[Appendix A2]{Lindgren2011AnEL}.
\subsection{Comparison of priors for the simulation study}\label{app:prior comparison}
For our simulation study in Section \ref{simulation section}, we compare four priors for the parameters $\kappa,\bm{v}$ of the SPDE \eqref{SPDE2D}.
\begin{enumerate}
  \item \label{option1} The PC priors in \eqref{pc1} where the anisotropy hyperparameters $\lambda_{\bm{\theta}} , \lambda_{\bm{v}}$ are chosen so that the anisotropy ratio $a=\exp(|\bm{v}|)$ and the correlation range $\rho = \sqrt{8}\kappa ^{-1} $ satisfy
     \begin{align}\label{quantiles simulation}
       \mathbb{P}[a> a_0]= 0.01, \quad \mathbb{P}[\rho < \rho_0]=0.01,
     \end{align}
 where we take $a_0=10, \rho_0=1$. This choice of hyperparameters corresponds to allowing with probability $0.01$ that the field is $10$ times more correlated in any given direction and, with the same probability, that the field has a correlation range smaller than $1$.  The values of $\lambda_{\bm{\theta}},\lambda_{\bm{v}}$ can then be calculated using \Cref{quantiles theorem}.
  \item\label{option2} Independent priors $\kappa \sim \textrm{Exp}(\lambda_\kappa )$ and $\bm{v} \sim \Nn(\bm{0},\sigma_{\bm{v}}^2 \bm{I}_2 ) $. Under these priors, $\kappa,\bm{v}$ have the same mode as the PC priors ($0$ in each case). Additionally, $\lambda_\kappa ,\sigma_{\bm{v}} ^2$ are chosen such that, under these priors, \eqref{quantiles simulation} also holds. We denote this prior by $\pi_{\rmm{EG}}$.
  \item Independent (improper) uniform priors for $\log(\kappa),\bm{v}$ with infinite support. We denote this prior by $\pi_{\rmm{U}}$.
  \item Independent linear transformations of beta priors on $\log(\kappa),v_1,v_2$ with shape parameters $1.1$ such that the correlation range is supported in $[\rho_0/w, wL]$ and $v_1,v_2$ are supported in $[-wa_0,w a_0]$. The shape parameter is chosen so that the distribution is approximately uniform while having a smooth density, which is relevant for the optimization. The purpose of $w>1$ is to extend the support of the parameters past $\rho_0,a_0$. The same value of $\rho_0=1$ is taken, $L=10$ is taken to be the length of $\Dd$, and $w$ is set to $20$. We denote this prior by $\pi_\beta$.
\end{enumerate}
The remaining hyperparameters $\lambda_{\sigma_u}, \lambda_{\bm{\varepsilon }}$ are chosen so that
\begin{align*}
  \mathbb{P}[\sigma_u>\sigma_{0}]=0.01, \quad\mathbb{P}[\sigma_{\bm{\varepsilon }}>\sigma_1]=0.01,
\end{align*}
where we take $\sigma_0=10,\sigma_1=1.5$.

To better understand the four priors $\pi_\rmm{PC},\pi_\rmm{EG},\pi_\rmm{U},\pi_\beta$ and their corresponding posteriors, in  \Cref{fig:prior_posterior_plots_log_kappa} and \Cref{fig:prior_posterior_plots_v}, we fix $\sigma_u, \sigma_\varepsilon $ and plot each of the marginal prior and posterior densities mentioned above (up to a multiplicative constant) in $\log(\kappa)$ and $v$ respectively, for a given observation $\bm{y}$. As we can see, $\pi_\rmm{PC}$ and $\pi_\rmm{EG}$ give similar prior and (as a result) posterior densities, to the point where it is impossible to distinguish them in the plots. The uniform and beta priors and posteriors are similar to each other, differing mainly in the size of their support. The beta priors and posteriors resemble a cutoff version of their uniform analogs. This indicates that choosing $\pi_\rmm{PC}$ and $\pi_\rmm{EG}$ as priors will give similar results. Likewise, $\pi_\rmm{U}, \pi_\beta $ will also give similar results to each other. This is borne out in the results.
\begin{figure}[H]
    \centering
    \includegraphics[width=\textwidth]{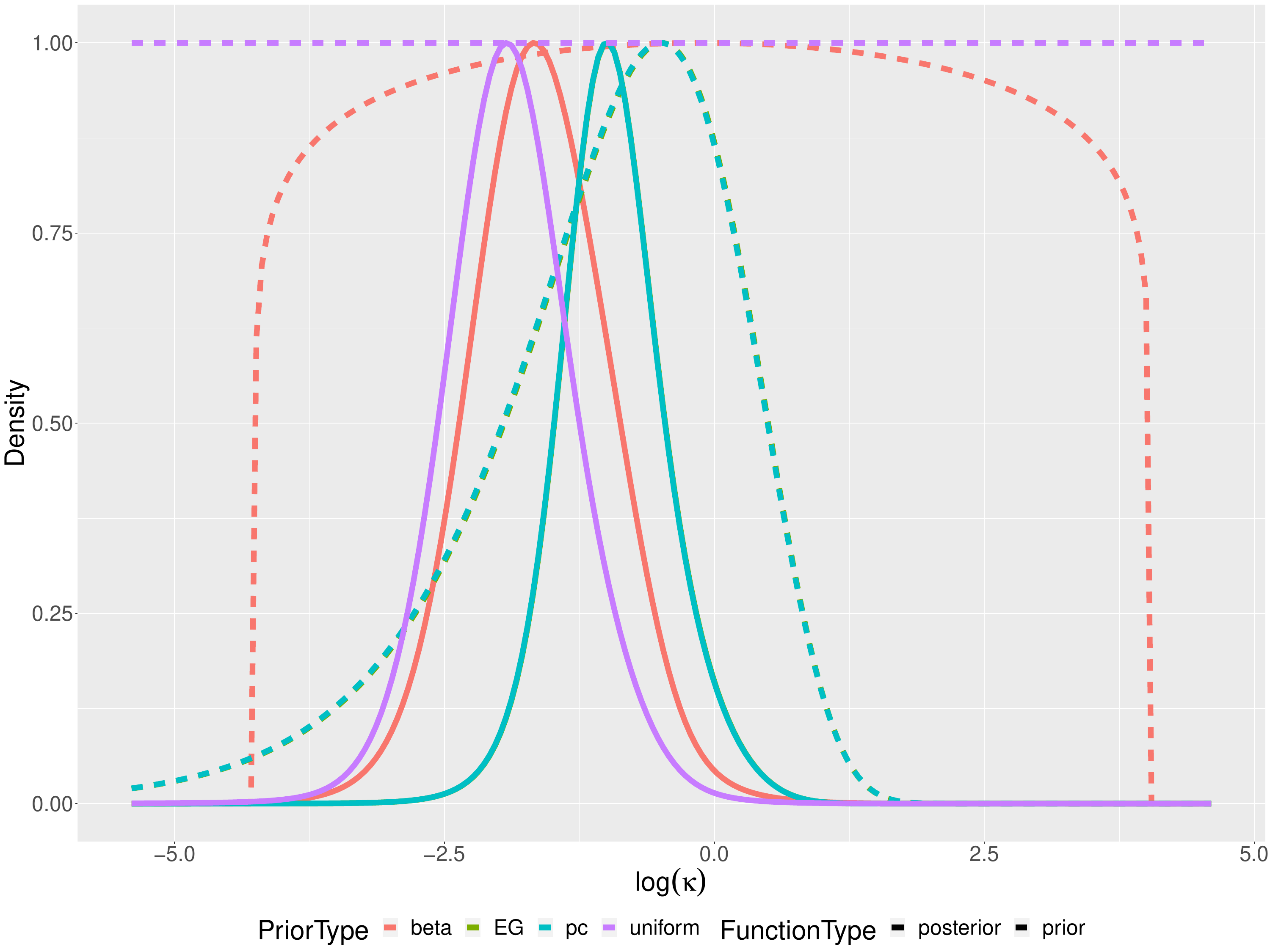}
    \caption{Renormalized marginal prior and posterior densities of $\log(\kappa)$ for each of the four priors $\pi_\rmm{PC},\pi_\rmm{EG},\pi_\rmm{U},\pi_\beta$ and for a given observation $\bm{y}$.}
    \label{fig:prior_posterior_plots_log_kappa}
\end{figure}
\begin{figure}[H]
    \centering
    \includegraphics[width=\textwidth]{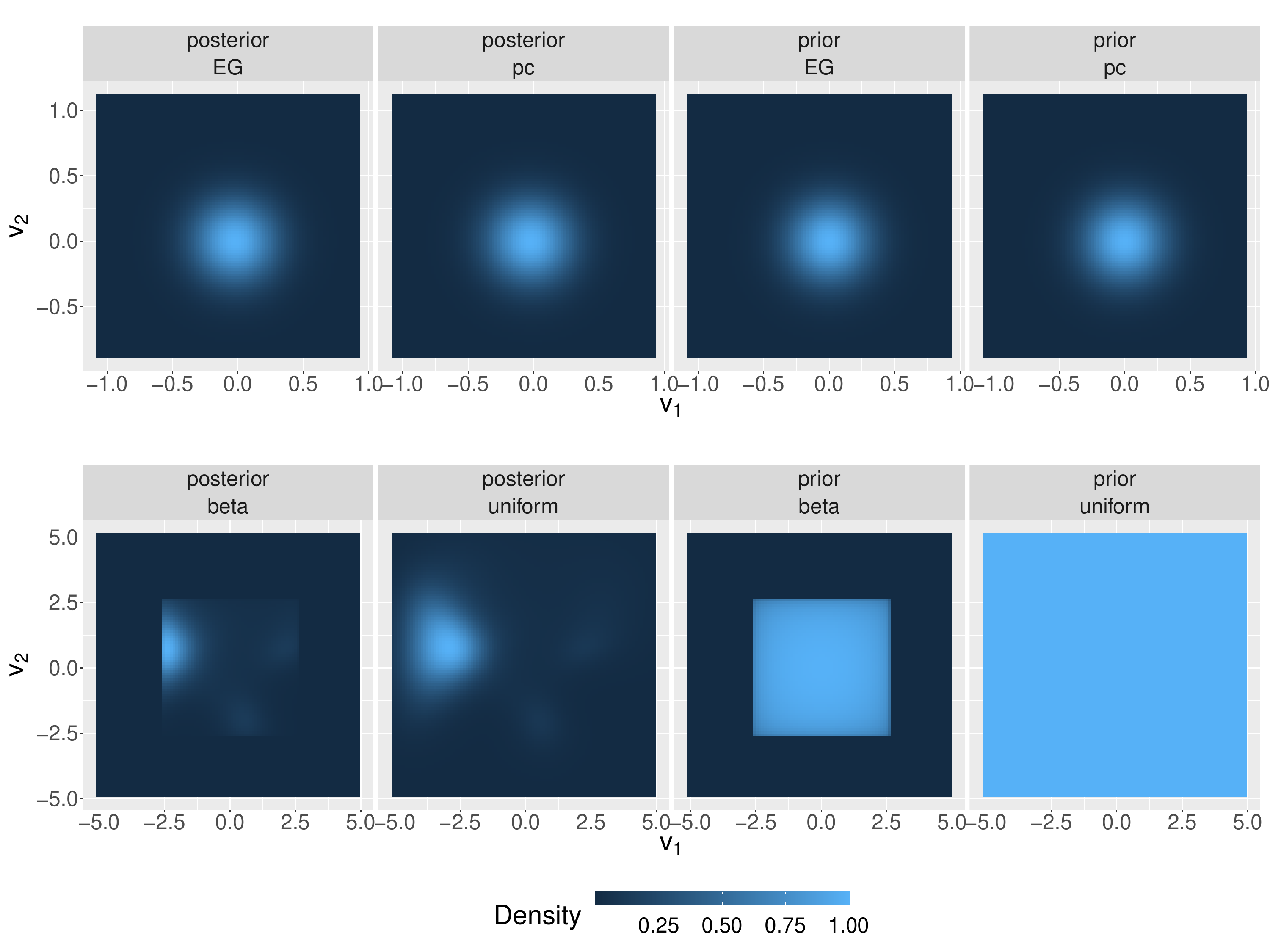}
    \caption{Marginal prior and posterior densities of $v$ for each of the four priors $\pi_\rmm{PC},\pi_\rmm{EG},\pi_\rmm{U},\pi_\beta$ and for a given observation $\bm{y}$}
    \label{fig:prior_posterior_plots_v}
\end{figure}

\subsection{Posterior sampling}\label{Posterior sampling appendix}
Returning to the simulation, our parameters are  $\bm{\theta} =(\log\kappa,\bm{v}, \log(\sigma_u),\log(\sigma_{\bm{\varepsilon}}))\in\mathbb{R}^5$. Bayes formula gives for any prior $\pi({\bm{\theta}})$ on $\bm{\theta}$ that
\begin{align}\label{posterior}
    \pi(\bm{\theta}  |\bm{y}) & =\frac{\pi(\bm{\theta} , \bm{y})}{\pi(\bm{y})}\frac{\pi\left(\bm{u} | \bm{\theta} ,\bm{y}\right)}{\pi\left(\bm{u} | \bm{\theta} ,\bm{y}\right)}=\frac{\pi(\bm{\theta} ) \pi(\bm{u} | \bm{\theta} ) \pi(\bm{y} | \bm{u},\bm{\theta})}{\pi(\bm{y}) \pi\left(\bm{u} | \bm{\theta}  ,\bm{y}\right)}.
\end{align}
We note that the right-hand side of \eqref{log posterior} involves $\bm{u}$. However, the resulting expression is independent of the value of $\bm{u}$ chosen.
As a result, the log posterior $\ell (\bm{\theta}  |\bm{y})$ is up to an additive constant given by
\begin{align}\label{log posterior}
    \tl{ \ell} (\bm{\theta}  |\bm{y}) = \ell (\bm{\theta}  )+\ell (\bm{u}|\bm{\theta}  )+\ell(\bm{y}|\bm{u},\bm{\theta})- \ell (\bm{u}|\bm{\theta}  ,\bm{y})
\end{align}
\begin{observation}
    The maximum of the right-hand side of \eqref{log posterior} is independent of $\bm{u}$. However, in the case where $\ell (\bm{u}|\bm{\theta},\bm{y})$ were not known, the value of $\bm{u}$ would affect the result. If $\bm{u}|\bm{\theta},\bm{y}$ were approximated to be Gaussian, it would make the most sense to take $\bm{u}$ as the mode $\bm{m_u}$ 	of this Gaussian approximation.
\end{observation}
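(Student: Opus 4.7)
The statement follows in one line by applying Bayes' rule to the conditional density of $\bm{u}$ given $(\bm{\theta},\bm{y})$. The plan is the following. I would start from the identity
\begin{equation*}
    \pi(\bm{u}| \bm{\theta},\bm{y}) = \frac{\pi(\bm{u}| \bm{\theta})\,\pi(\bm{y}| \bm{u},\bm{\theta})}{\pi(\bm{y}| \bm{\theta})},
\end{equation*}
which holds on the set where all densities are positive. Taking logarithms and rearranging gives
\begin{equation*}
    \ell(\bm{u}| \bm{\theta}) + \ell(\bm{y}| \bm{u},\bm{\theta}) - \ell(\bm{u}| \bm{\theta},\bm{y}) = \ell(\bm{y}| \bm{\theta}),
\end{equation*}
and the right-hand side is a function of $\bm{\theta}$ alone.

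Substituting this identity into the defining expression for $\tl{\ell}(\bm{\theta}| \bm{y})$ in \eqref{log posterior} collapses it to $\ell(\bm{\theta}) + \ell(\bm{y}| \bm{\theta})$, which is the log posterior $\ell(\bm{\theta}|\bm{y})$ up to the additive constant $-\ell(\bm{y})$. In particular, the value of $\tl{\ell}(\bm{\theta}| \bm{y})$ at every $\bm{\theta}$, and hence also its argmax, is independent of the choice of $\bm{u}$ used to evaluate the four summands on the right-hand side of \eqref{log posterior}.

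For the complementary remark — that if $\ell(\bm{u}| \bm{\theta},\bm{y})$ is unknown and replaced by an approximation $\log q(\bm{u}| \bm{\theta},\bm{y})$, then the $\bm{u}$-dependence reappears — I would track the residual
\begin{equation*}
    \tl{\ell}(\bm{\theta}| \bm{y}) - \ell(\bm{\theta}) - \ell(\bm{y}| \bm{\theta}) = \log\frac{\pi(\bm{u}| \bm{\theta},\bm{y})}{q(\bm{u}| \bm{\theta},\bm{y})},
\end{equation*}
obtained by the same rearrangement. When $q$ is a Laplace/Gaussian approximation of $\pi(\cdot|\bm{\theta},\bm{y})$ around its mode $\bm{m}_{\bm{u}}$, the numerator and denominator agree to leading order precisely at $\bm{u}=\bm{m}_{\bm{u}}$, so this choice minimizes (to second order) the error introduced by the approximation.

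There is no genuine obstacle here; the only content is to recognize that the combination $\ell(\bm{u}| \bm{\theta}) + \ell(\bm{y}| \bm{u},\bm{\theta}) - \ell(\bm{u}| \bm{\theta},\bm{y})$ appearing in \eqref{log posterior} is nothing but the log marginal likelihood $\ell(\bm{y}| \bm{\theta})$, in which the auxiliary latent $\bm{u}$ has already been integrated out — so pointwise evaluation at any admissible $\bm{u}$ returns the same number.
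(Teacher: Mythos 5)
Your argument is correct and is essentially the paper's own: the identity $\ell(\bm{u}|\bm{\theta})+\ell(\bm{y}|\bm{u},\bm{\theta})-\ell(\bm{u}|\bm{\theta},\bm{y})=\ell(\bm{y}|\bm{\theta})$ is exactly the content of the paper's equation \eqref{posterior}, obtained there by multiplying and dividing the joint by $\pi(\bm{u}|\bm{\theta},\bm{y})$, so the right-hand side of \eqref{log posterior} is $\bm{u}$-free pointwise, not merely at the argmax. Your remark on the Laplace-approximation residual likewise matches the intent of the second half of the observation, which the paper states without proof.
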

In our case, all terms in \eqref{log posterior} are known exactly. The log prior $\ell(\bm{\theta})$ is determined by \eqref{PC priors} and the remaining terms are given by
\begin{align*}
    \ell(\bm{u}|\bm{\theta}  )        & =\frac{1}{2}\qty(\log\norm{\bm{Q}_u}-\norm{\bm{u}-\bm{m}_u}_{\bm{Q}_u}^2-n\log(2\pi)) \notag                                                                                \\
    \ell(\bm{y}|\bm{u},\bm{\theta})   & =\frac{1}{2}\qty(\log\norm{\bm{Q}_{\bm{\varepsilon}}} -\norm{\bm{y}-\bm{A}{\bm{u}}}_{\bm{Q}_{\bm{\varepsilon}}}^2-m\log(2\pi))\notag                                        \\
    \ell(\bm{u}|\bm{\theta}  ,\bm{y}) & =\frac{1}{2}\qty(\log\norm{\bm{Q}_{\bm{u}|\bm{y},\bm{\theta}  }} -\norm{\bm{u}- \bm{m}_{\bm{u}|\bm{y},\bm{\theta }}}_{\bm{Q}_{\bm{u}|\bm{y},\bm{\theta}  }}^2-n\log(2\pi)),
\end{align*}
where we defined $\norm{\bm{x}}_{\bm{Q}}^2:= \bm{x}^T\bm{Q} \bm{x}$ and $m$ is the dimension of $\bm{y}$. The precision  $\bm{Q}_u$ of $\bm{u}$ is calculated by FEM and is given by  \eqref{precision weights} divided by $4\pi \sigma_u^2$. The remaining precision matrices are
\begin{equation}\label{posterior precision u}
    \bm{Q}_{\bm{u}|\bm{y},\bm{\theta}  }:= \bm{Q}_u+ \bm{A}^T \bm{Q}_\varepsilon \bm{A},\quad \bm{m}_{\bm{u}|\bm{y},\bm{\theta} }:=\bm{Q}_{\bm{u} | \bm{y},\bm{\theta}}^{-1}\left(\bm{Q}_u \bm{m}_u+\bm{A}^T \bm{Q}_{\bm{\varepsilon}} \bm{y}\right).
\end{equation}
Maximizing the expression in \eqref{log posterior} is made possible by the fact that the precision matrices are sparse. We obtain the MAP estimate
\begin{align}\label{MAP def}
    \wh{\bm{\theta}  }:= \arg\max_{\bm{\theta}  \in \R^3} \ell (\bm{\theta}  |\bm{y})=\arg\max_{\bm{\theta}  \in \R^3} \tl{\ell} (\bm{\theta}  |\bm{y}).
\end{align}
Using this, we do the following simulations. We fix as our domain the square $[-1,1]^2$ from which we sample uniformly and independently $m=15$ locations $\set{\bm{x}_j}_{j=1}^m$ and use a mesh with $n=2062$ degrees of freedom. Then, we do the following.

For $\pi_\rmm{sim} \in \{\pi_\rmm{PC}, \pi_\rmm{EG}, \pi_\rmm{U}, \pi_\beta\}$:
\begin{enumerate}
    \item For $j= 1,...,N=200$:
          \begin{enumerate}
              \item \label{generation point} We simulate  ${\bm{\theta }^{(j)}}^{\rmm{true}}$ from  $\pi_\rmm{sim}$, use the FEM to simulate $\bm{u}^{(j)}$ from \eqref{SPDE2D} and then simulate $\bm{y}^{(j)}$ from \eqref{obs}.
              \item \label{prior point}For prior $\pi_{\mathrm{est}} \in \{\pi_\rmm{PC}, \pi_\rmm{EG}, \pi_\rmm{U}, \pi_\beta\}$ on $\bm{\theta }$:
                    \begin{enumerate}
                        \item We calculate a maximum a posteriori estimate $\wh{\bm{\theta}}^{(j)}$ using \eqref{MAP def}.
                        \item We calculate $\abs{{\theta^{(j)}}^{\rmm{true}}_i -\wh{\theta}_i }$.
                        \item We calculate a Gaussian approximation to the posterior distribution. To do so, we  write
                              \begin{align}\label{gaussian approx}
                                  \bm{Z} \sim\Nn \qty(\wh{\bm{\theta }}^{(j)}, \bm{\Sigma }_{\wh{\theta }|\bm{y}}),
                              \end{align}
                              where the precision, covariance, and marginal standard deviations of $\bm{Z}$ are, respectively
                              \begin{equation*}
                                  \qty[\bm{M}_{\wh{\bm{\theta }}^{(j)}|\bm{y}^{(j)}}]_{ij}=-\qt{\frac{\partial^2\ell\qty(\wh{\bm{\theta}}|\bm{y}^{(j)})}{\partial\theta_i\partial\theta_j}},\quad
                                  \bm{\Sigma }_{\wh{\bm{\theta }}^{(j)}|\bm{y}^{(j)}}=\bm{M}_{\wh{\bm{\theta }}^{(j)}|\bm{y}^{(j)}}^{-1}.
                              \end{equation*}
                              We write $\pi_{\bm{Z}}(\bm{\theta })$ for the density of $\bm{Z}$.
                              If the posterior distribution is Gaussian, then $\pi_{\bm{Z}}=\pi (\bm{\theta }|\bm{y}^{(j)})$.
                        \item We approximate the posterior measure. This step is necessary as we only have access to the unnormalized $\tl{\ell }$ (see \eqref{log posterior}). Three options were considered: approximating the posterior by the Gaussian $Z$ with the same median and whose precision is the negative of the Hessian of the posterior at its median, using importance sampling with $Z$ as the proposal distribution, and using smoothed importance sampling with $Z$ as the proposal distribution\citep{vehtari2015pareto}. Of these three approximations, the best performing method was \emph{smoothed importance sampling}. This can be measured by comparing the frequency of times the true parameter is within the $0.95$ confidence intervals. As a result, smoothed importance sampling is the method used. 

                            For the importance sampling, we use $\bm{Z}$ as our proposal distribution. Let $\{\bm{\theta}^{(s)} \sim \bm{Z},s=1,..., S\}$ be i.i.d. We denote the normalized, unnormalized, and self-normalized importance ratios by
                              \begin{align*}
                                  r(\bm{\theta })= \frac{\pi(\bm{\theta }|\bm{y}^{(j)})}{\pi_{\vb Z}(\bm{\theta })}, \quad  \tl{r}(\bm{\theta })= \frac{\exp(\tl{\ell} (\bm{\theta }))}{\pi_{\vb Z}(\bm{\theta })}, \quad \overline{r}(\bm{\theta}^{(s)}):= \frac{\tl{r} (\bm{\theta}^{(s)})}{\sum_{s=1}^S \tl{r} (\bm{\theta}^{(s)})}.
                              \end{align*}
                              Since we do not have access to the normalizing constant of $\pi (\bm{\theta }|\bm{y}^{(j)})$, it is the last two ratios that we use. Using $\overline{r}(\bm{\theta}^{(s)})$ calculate the normalized smoothed weights $w(\bm{\theta}^{(s)})$ and approximate
                              \begin{align*}
                                  \pi (\bm{\theta }|\bm{y}^{(j)}) \approx \pi_{\rmm{IS}} \qty(\bm{\theta }|\bm{y}^{(j)}):=  \sum_{s=1}^S w\qty(\bm{\theta}^{(s)}) \delta_{\bm{\theta}^{(s)}}(\bm{\theta }) .
                              \end{align*}
                              In the case where Pareto smoothing is not used, we directly take $w(\bm{\theta}^{(s)})=\overline{r}(\bm{\theta}^{(s)})$.
                              With this framework, the expect value of a function $f(\bm{\theta })$ can be approximated as
                              \begin{align*}
                                  \mathbb{E}\qty[f(\bm{\theta })|\bm{y}^{(j)}] \approx \mathbb{E}_{\rmm{IS}}\qty[f(\bm{\theta })|\bm{y}^{(j)}]:= \sum_{s=1}^S w\qty(\bm{\theta}^{(s)}) f(\bm{\theta}^{(s)}).
                              \end{align*}
                              Additionally, we will be interested in the Kullback-Leibler divergence between the true posterior and the Gaussian approximation to the posterior. Here, $f(\bm{\theta })=\log(r\qty(\bm{\theta }|\bm{y}^{(j)}))$ is known up to a normalizing constant, but we can approximate it using the self-normalized importance ratios as follows
                              \begin{align*}
                                  \mathrm{KLD}\qty(\pi(\cdot|\bm{y}^{(j)})||\pi_{\bm{Z }}) & = \int_{\R^5 } \pi(\bm{\theta }|\bm{y}^{(j)}) \log({r} (\bm{\theta }))\d \bm{\theta } \sim \sum_{s=1}^S \overline{r}(\bm{\theta}^{(s)})  \log(S\overline{r} (\bm{\theta}^{(s)})) \\
                                                                                           & \sim \sum_{s=1}^{S} w(\bm{\theta}^{(s)}) \log(Sw (\bm{\theta}^{(s)})),
                              \end{align*}
                              where the first equality is the definition of the Kullback-Leibler divergence and the first and second approximations are the self-normalized importance sampling step ($r \sim S \overline{r} $ ) and smoothing step, respectively. This can be seen to converge to the true Kullback-Leibler divergence as $S\to \infty$ by using that $\tl{r}=\E_{Z}\qb{\tl{r} } r$ and the central limit theorem.
                        \item Using $\pi_{\rmm{IS}} (\bm{\theta }|\bm{y}^{(j)}) $ we approximate the mean, covariance, and KL divergence and build a $0.95$ credible interval $I$ for $\bm{\theta }$. We also check whether the true parameter ${\bm{\theta }^{(j)}}^\rmm{true} $ is in $I$ and compare these results against the Gaussian approximation to the posterior in \eqref{gaussian approx}.
                              \label{recovery point}
                        \item We calculate $a_i=\mathbb{P}_{\mathrm{est}}\qty[\theta _i\leq{\theta_i^{(j)}}^{\rmm{true}}|\bm{y}^{(j)}]$ for $i=1,\ldots,5$ both in the case where $\bm{\theta }$ is sampled from $\pi_{\rmm{IS}} (\bm{\theta }|\bm{y}^{(j)}) $ and from the Gaussian approximation to the posterior using prior $\pi_{\mathrm{est}}$. We then plot the empirical cumulative distribution function of $a_i$ and compare it to the uniform distribution. If the implementation is well calibrated, then $a_i\sim U(0,1)$ \citep{talts2018validating,modrak2023simulation}.\label{predictive point}
                    \end{enumerate}
          \end{enumerate}
\end{enumerate}
The above was attempted using the Gaussian distribution $Z$ to replace the posterior rather than employing importance sampling. The method that uses no smoothing in the importance sampling was also tested. However, both methods yielded worse results than Pareto smoothed importance sampling, and we only display this approximation in the results.
Using \Cref{recovery point}, we can check whether the posterior distribution for $\bm{\theta }$  is accurate, and using \Cref{predictive point}, we check whether the implementation is well-calibrated.

\begin{observation}
    Since the uniform priors give support to extreme values of $\kappa,\bm{v}$, sampling the true parameter $\bm{\theta}^{\rmm{true}}$ from the uniform prior often leads to singularities in the calculation of the importance samples. These cases made up a high percentage of the simulations (in one case, $90\%$), for which reason the improper uniform priors were not used to sample from $\bm{\theta}^{\rmm{true}}$ in the simulations below. Rather, the width of the beta priors was also reduced when simulating these parameters to avoid singularities.
\end{observation}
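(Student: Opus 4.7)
The plan is to justify this empirical observation by combining (i) a tail-mass argument for the sampling prior with (ii) a degeneracy analysis of the FEM-based log-posterior at extreme parameter values. Together these give both that extreme draws of $\bm{\theta}^{\rmm{true}}$ occur with non-negligible probability and that the importance-sampling machinery breaks down at precisely those draws.

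For (i), both $\pi_{\rmm{U}}$ (improper uniform on $\log\kappa\in\R$ and $\bm{v}\in\R^2$) and $\pi_\beta$ (beta priors with supports inflated by $w=20$) place substantial mass on $|\bm{v}|$ large and on $\kappa$ close to $0$. Under the interpretable parameterisation of \Cref{quantiles corollary}, the anisotropy ratio $a=\exp(|\bm{v}|)$ and correlation range $\rho=\sqrt{8}/\kappa$ are then sampled many orders of magnitude beyond the physical scale of the domain $\Dd=[0,10]^2$ and the $m=15$ observation geometry, so a non-vanishing fraction of $\bm{\theta}^{\rmm{true}}$ draws lie in the degenerate regime considered next.

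For (ii), recall the FEM precision of \eqref{precision weights},
\begin{equation*}
\bm{Q}_u=\tilde{\bm{C}}_\kappa+2\bm{G}_{\bm{H}}+\bm{G}_{\bm{H}}\tilde{\bm{C}}_\kappa^{-1}\bm{G}_{\bm{H}}.
\end{equation*}
As $\kappa\to 0$ the mass matrix $\tilde{\bm{C}}_\kappa$ vanishes and $\bm{Q}_u$ inherits the nontrivial kernel of $\bm{G}_{\bm{H}}$ (constants, under Neumann boundary conditions), so $\log\|\bm{Q}_u\|\to -\infty$. As $|\bm{v}|\to\infty$ the eigenvalues of $\bm{H}_{\bm{v}}$ are $e^{\pm|\bm{v}|}$, giving $\bm{G}_{\bm{H}}$ condition number of order $e^{2|\bm{v}|}$ and driving its smallest eigenvalue below floating-point precision. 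Both pathologies propagate through the log-determinants and quadratic forms of \eqref{log likelihoods} into $\tilde{\ell}$ of \eqref{log posterior}; since $\pi_{\bm{Z}}$ is a smooth strictly positive Gaussian centred at $\widehat{\bm{\theta}}$, the ratio $\tilde{r}(\bm{\theta})=\exp(\tilde{\ell}(\bm{\theta}))/\pi_{\bm{Z}}(\bm{\theta})$ inherits the $\pm\infty$/\texttt{NaN} outputs, yielding the singular importance samples described.

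The hard part is quantitative: converting the heuristic above into the stated frequency (``high percentage'', up to $90\%$) would require joint control of the tail overlap between the Gaussian proposal and the true posterior at extreme $\bm{\theta}^{\rmm{true}}$, together with an explicit rate of FEM degeneracy in $\kappa$ and $|\bm{v}|$. Since the observation is numerical in character, any sharper argument reduces to a numerical stability analysis rather than a closed-form proof; the remedy actually adopted — truncating to $\pi_\beta$ with moderate $w$, or dropping $\pi_{\rmm{U}}$ altogether when simulating $\bm{\theta}^{\rmm{true}}$ — simply restricts the draws to a regime where $\bm{Q}_u$ is well-conditioned and $\pi_{\bm{Z}}$ is a reasonable proposal, thereby bypassing the degeneracies identified in step (ii).
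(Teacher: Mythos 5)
The paper offers no proof of this Observation: it is an empirical report about the behaviour of the numerical experiments (the frequency of failed importance-sampling runs), stated to justify a design choice in the simulation study, and the quoted figure of $90\%$ is a measured outcome, not a derived one. Your proposal is therefore not comparable to a paper proof so much as a post-hoc rationalization, and to your credit you say as much in your final paragraph. As a heuristic it is consistent with the paper's own machinery: the ill-conditioning of $\bm{G}_{\bm{H}}$ when the eigenvalues $e^{\pm\norm{\bm{v}}}$ of $\bm{H}_{\bm{v}}$ spread over many orders of magnitude, and the degeneracy of the likelihood terms in \eqref{log likelihoods} as $\kappa \to 0$, are indeed the plausible sources of the reported singularities, and the tail-mass point explains why they occur for a non-negligible fraction of draws from $\pi_{\rmm{U}}$ and wide $\pi_\beta$.

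One technical step is shakier than you present it. From \eqref{precision weights}, $\bm{Q}_u=(\bm{C}_\kappa+\bm{G}_{\bm{H}})\bm{C}_\kappa^{-1}(\bm{C}_\kappa+\bm{G}_{\bm{H}})$, so as $\kappa\to 0$ the factor $\bm{C}_\kappa^{-1}$ blows up while $\bm{C}_\kappa+\bm{G}_{\bm{H}}$ tends to the singular matrix $\bm{G}_{\bm{H}}$; the eigenvalues of $\bm{Q}_u$ then split, with the one associated to the constant mode collapsing and the rest diverging, so the claim $\log\det\bm{Q}_u\to-\infty$ does not follow without tracking both effects (and the normalization by $4\pi\sigma_u^2$). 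What is defensible is only that the condition number of $\bm{Q}_u$ explodes and the log-determinant and quadratic forms become numerically meaningless, which is all the Observation needs. Beyond that, the quantitative content of the statement cannot be proved and must simply be observed, exactly as the paper does.
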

The results for the MAP estimate and the CI lengths were already presented for the anisotropy parameters $\log(\kappa),\bm{v}$ in \Cref{simulation section} \Cref{fig: MAP distances} and \Cref{fig: CI lengths}. Here, we begin by presenting these results for the remaining parameters $\log(\sigma_u),\log(\sigma_{\bm{\varepsilon}})$ in \Cref{fig: MAP distances sigma} and \Cref{fig: CI lengths sigma}.
\begin{figure}[H]
    \centering
    \includegraphics[width=\textwidth]{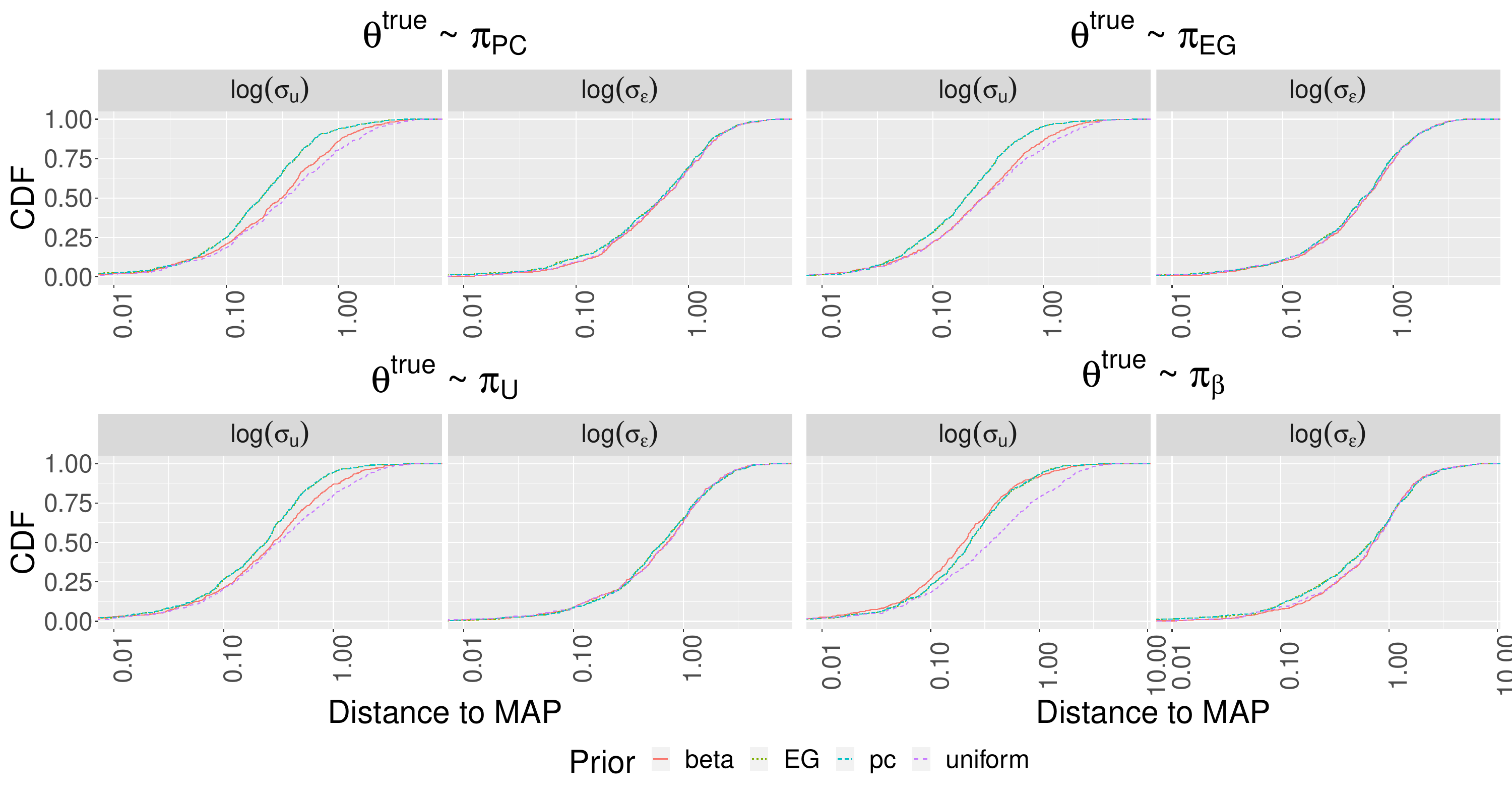}
    \caption{Empirical cumulative distribution function (eCDF) of the absolute distances between true parameter value ($\theta_i^{\text{true}}$) and their estimated values ($\widehat{\theta}_i$) for each of $\log(\sigma_u)$ and $\log(\sigma_\epsilon)$. In the plots, arranged from left to right and top to bottom, $\bm{\theta}^{\text{true}}$ is simulated from the four distributions—$\pi_{\text{PC}}$, $\pi_{\text{EG}}$, $\pi_{\text{U}}$, and $\pi_{\beta}$. Then, $\bm{y} = \bm{A} \bm{u} + \bm{\varepsilon}$ is observed with true parameter value $\bm{\theta}^{\text{true}}$. Finally, for each prior (red, green, teal, purple), the MAP estimate $\widehat{\bm{\theta}}$ is computed, and the eCDF over 600 simulations of the distances between $\bm{\theta}^{\text{true}}$ and $\widehat{\bm{\theta}}$ is plotted.}
    \label{fig: MAP distances sigma}
\end{figure}

\begin{figure}[H]
    \centering
    \includegraphics[width=\textwidth]{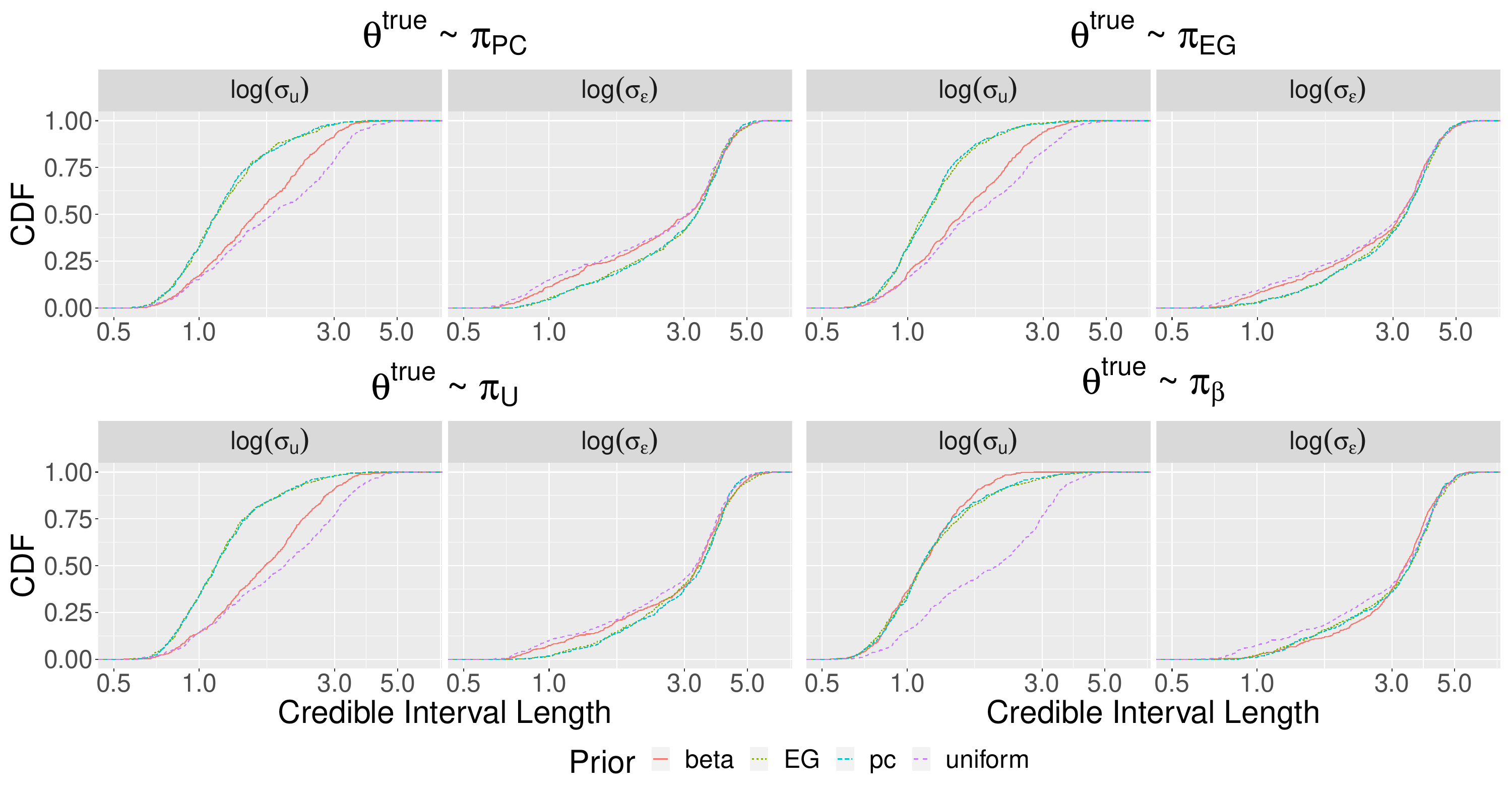}
    \caption{Empirical cumulative distribution function (eCDF) of the length of symmetric equal-tailed $0.95$ credible intervals (CIs) for the posterior of the parameters $\log(\sigma_u)$ and $\log(\sigma_\epsilon)$. In the plots, arranged from left to right and top to bottom, $\bm{\theta}^{\text{true}}$ is simulated from the four distributions—$\pi_{\text{PC}}$, $\pi_{\text{EG}}$, $\pi_{\text{U}}$, and $\pi_{\beta}$. Then, $\bm{y} = \bm{A} \bm{u} + \bm{\varepsilon}$ is observed with true parameter value $\bm{\theta}^{\text{true}}$.}
    \label{fig: CI lengths sigma}
\end{figure}

In \Cref{fig: within CI}, we show the frequency of the true parameter being in the $0.95$ credible interval. As can be seen in the graphs, the performance depends on the distribution from which $\bm{\theta }^{\mathrm{true}}$ is sampled. If $\bm{\theta }^{\mathrm{true}}$ is sampled from $\pi_{\mathrm{PC}}$ or $\pi_{\mathrm{EG}}$ then these two priors perform better, whereas otherwise $\pi_{\mathrm{U}},\pi_\beta $ perform better. We also observe that the confidence intervals for $\sigma_{\bm{\varepsilon}}$ are not wide enough. This may be due to the importance sampling.
The results for the remaining parameters $\log(\sigma_u),\log(\sigma_{\bm{\varepsilon}})$ are more similar, which is explained by the fact that the same priors for these parameters are used in all cases. This is a common theme in the results, and we will not comment on it further.
\begin{figure}[H]
    \centering
    \includegraphics[width=\textwidth]{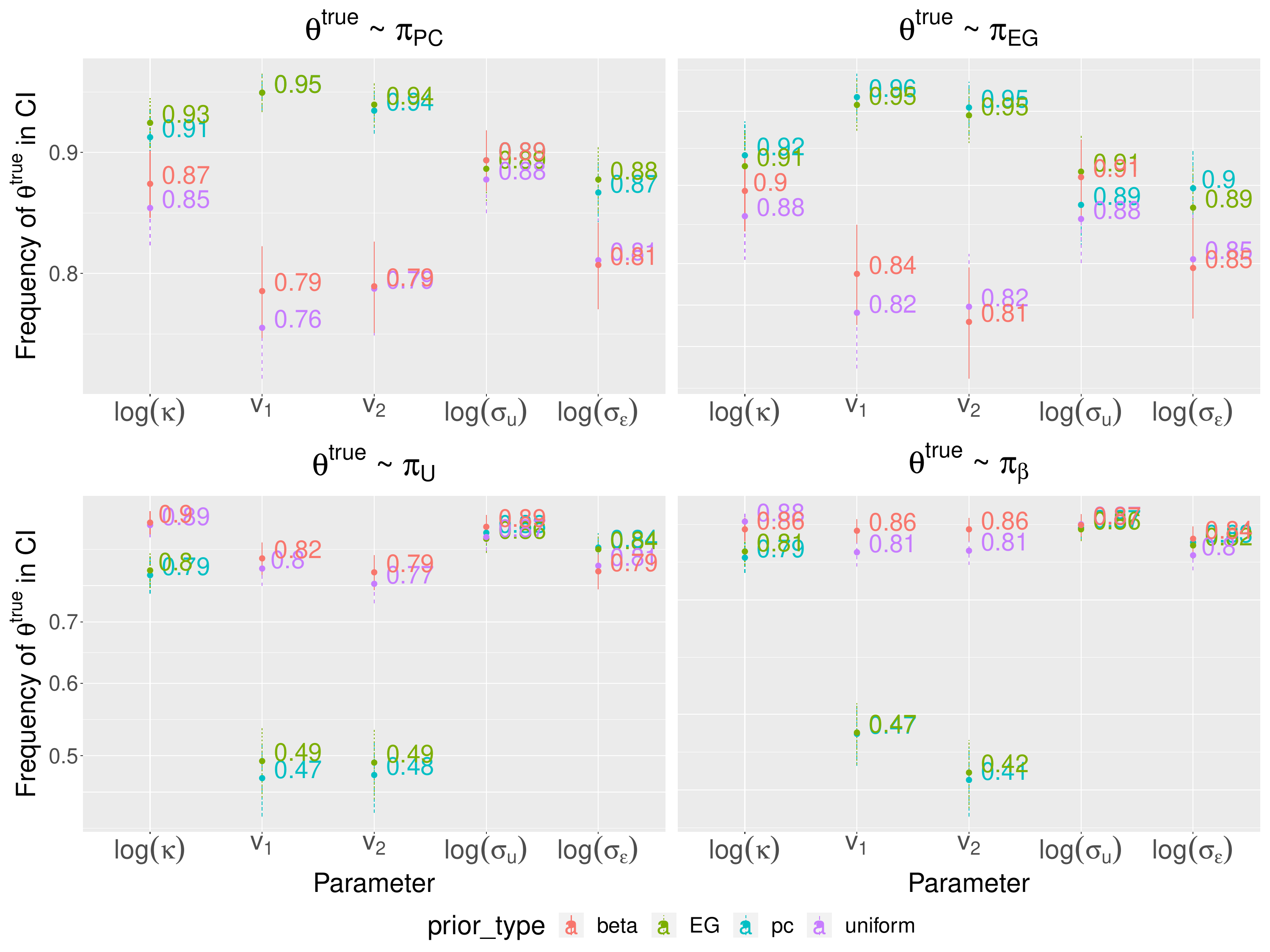}
    \caption{Frequency of true parameter being in the $0.95$ credible interval}
    \label{fig: within CI}
\end{figure}
In \Cref{fig: probabilities}, we show the eCDF of probabilities $a_i=\mathbb{P}[\theta_i\leq\theta_i^{\rmm{true}}|\bm{y}]$ for each of the five parameters. If the implementation is well calibrated, then $a_i$ should be uniform when $\pi_{\rmm{sim}}=\pi$. Once more, the performance is divided into two cases. If $\bm{\theta }^\rmm{true}$ is sampled from $\pi_{\rmm{PC}}$ or $\pi_{\rmm{EG}}$ then these two priors perform better, whereas otherwise $\pi_{\rmm{U}},\pi_\beta $ perform better.

\begin{figure}[H]
    \centering
    \includegraphics[width=\textwidth]{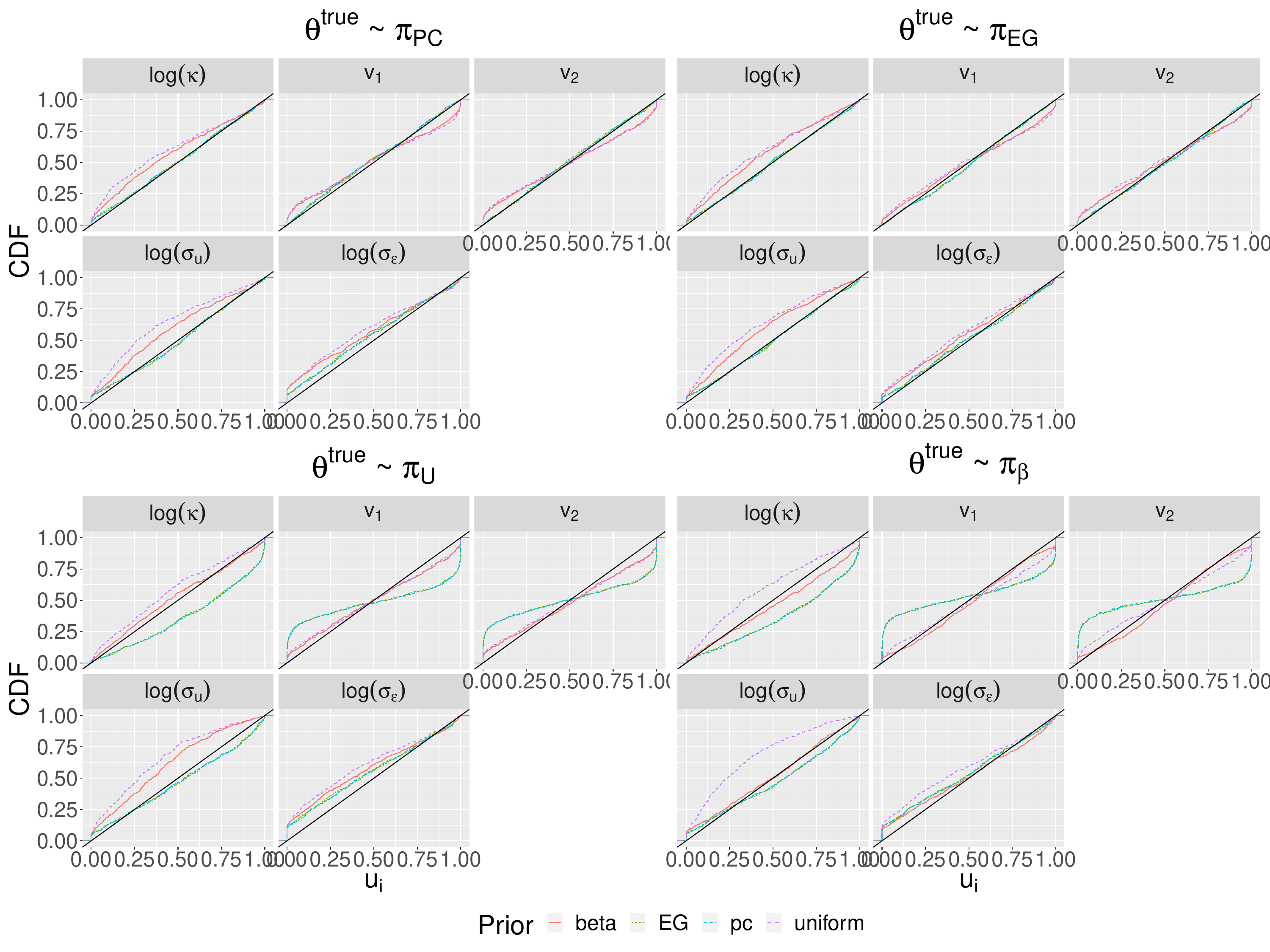}
    \caption{eCDF of probabilities $a_i=\mathbb{P}_{\mathrm{est}}[\theta_i\leq\theta_i^{\rmm{true}}|\bm{y}]$ where $\bm{\theta}^{\rm{true}}$ and $\bm{y}$ come from the true model. }
    \label{fig: probabilities}
\end{figure}

In  \Cref{tab:p_values_combined}, we show the $p$-values for the KS test between $a_i=\mathbb{P}_{\rm{est}}[\theta_i\leq\theta_i^{\rmm{true}}|\bm{y}]$ and the CDF of the uniform distribution.

\begin{table}[H]
    \centering
    \begin{minipage}{.49\textwidth}
        \centering
        \resizebox{\textwidth}{!}{%
            \begin{tabular}{lrrrrr}
                \hline
                Prior               & $\log(\kappa)$ & $v_1$    & $v_2$    & $\log(\sigma_u)$ & $\log(\sigma_{\epsilon})$ \\
                \hline
                $\pi_{\mathrm{PC}}$ & 3.72e-01       & 2.34e-01 & 6.60e-01 & 1.17e-01         & 6.240e-03                 \\
                $\pi_{\mathrm{EG}}$ & 5.55e-01       & 2.55e-01 & 8.37e-01 & 1.05e-01         & 2.720e-03                 \\
                $\pi_{\mathrm{U}}$  & 0.00e+00       & 9.42e-08 & 3.07e-06 & 0.00e+00         & 1.020e-12                 \\
                $\pi_{\beta}$       & 1.14e-08       & 1.44e-05 & 4.77e-05 & 1.71e-10         & 9.070e-10                 \\
                \hline
            \end{tabular}
        }
        \caption{$\theta^{\mathrm{true}} \sim \pi_{\mathrm{PC}}$}
        \label{tab:p_values_pc}
    \end{minipage}
    \hfill
    \begin{minipage}{.49\textwidth}
        \centering
        \resizebox{\textwidth}{!}{%
            \begin{tabular}{lrrrrr}
                \hline
                Prior               & $\log(\kappa)$ & $v_1$    & $v_2$    & $\log(\sigma_u)$ & $\log(\sigma_{\epsilon})$ \\
                \hline
                $\pi_{\mathrm{PC}}$ & 7.46e-01       & 1.96e-01 & 5.57e-01 & 1.47e-01         & 2.060e-01                 \\
                $\pi_{\mathrm{EG}}$ & 8.42e-01       & 2.05e-01 & 6.62e-01 & 3.17e-01         & 1.380e-01                 \\
                $\pi_{\mathrm{U}}$  & 3.77e-15       & 1.63e-05 & 6.85e-04 & 0.00e+00         & 1.100e-06                 \\
                $\pi_{\beta}$       & 2.25e-09       & 4.78e-04 & 1.60e-03 & 1.88e-12         & 8.440e-04                 \\
                \hline
            \end{tabular}
        }
        \caption{$\theta^{\mathrm{true}} \sim \pi_{\mathrm{EG}}$}
        \label{tab:p_values_eg}
    \end{minipage}
    \vfill
    \begin{minipage}{.49\textwidth}
        \centering
        \resizebox{\textwidth}{!}{%
            \begin{tabular}{lrrrrr}
                \hline
                Prior               & $\log(\kappa)$ & $v_1$    & $v_2$    & $\log(\sigma_u)$ & $\log(\sigma_{\epsilon})$ \\
                \hline
                $\pi_{\mathrm{PC}}$ & 0.00e+00       & 0.00e+00 & 0.00e+00 & 2.55e-04         & 1.73e-06                  \\
                $\pi_{\mathrm{EG}}$ & 0.00e+00       & 0.00e+00 & 0.00e+00 & 5.69e-05         & 3.35e-07                  \\
                $\pi_{\mathrm{U}}$  & 2.70e-09       & 1.51e-03 & 6.95e-06 & 0.00e+00         & 0.00e+00                  \\
                $\pi_{\beta}$       & 2.78e-02       & 1.25e-04 & 3.59e-05 & 0.00e+00         & 2.86e-12                  \\
                \hline
            \end{tabular}
        }
        \caption{$\theta^{\mathrm{true}} \sim \pi_{\mathrm{U}}$}
        \label{tab:p_values_u}
    \end{minipage}
    \hfill
    \begin{minipage}{.49\textwidth}
        \centering
        \resizebox{\textwidth}{!}{%
            \begin{tabular}{lrrrrr}
                \hline
                Prior               & $\log(\kappa)$ & $v_1$    & $v_2$    & $\log(\sigma_u)$ & $\log(\sigma_{\epsilon})$ \\
                \hline
                $\pi_{\mathrm{PC}}$ & 0.00e+00       & 0.00e+00 & 0.00e+00 & 7.26e-04         & 3.15e-06                  \\
                $\pi_{\mathrm{EG}}$ & 0.00e+00       & 0.00e+00 & 0.00e+00 & 1.54e-03         & 6.40e-07                  \\
                $\pi_{\mathrm{U}}$  & 7.88e-15       & 8.50e-06 & 3.35e-04 & 0.00e+00         & 3.73e-13                  \\
                $\pi_{\beta}$       & 1.06e-04       & 8.80e-03 & 2.57e-02 & 4.48e-03         & 5.06e-05                  \\
                \hline
            \end{tabular}
        }
        \caption{$\theta^{\mathrm{true}} \sim \pi_{\beta}$}
        \label{tab:p_values_beta}
    \end{minipage}
    \caption{KS statistic between $a_i = \mathbb{P}_{\rm{est}}[\theta_i \leq \theta_i^{\mathrm{true}} | \bm{y}]$ and the CDF of the uniform distribution}
    \label{tab:p_values_combined}
\end{table}

In \Cref{fig: KS bridge}, we show the empirical difference against $F(a_i)$ together with a 95\% confidence interval. As can be seen from the plots, the implementation is well-calibrated.
\begin{figure}[H]
    \centering
    \includegraphics[width=\textwidth]{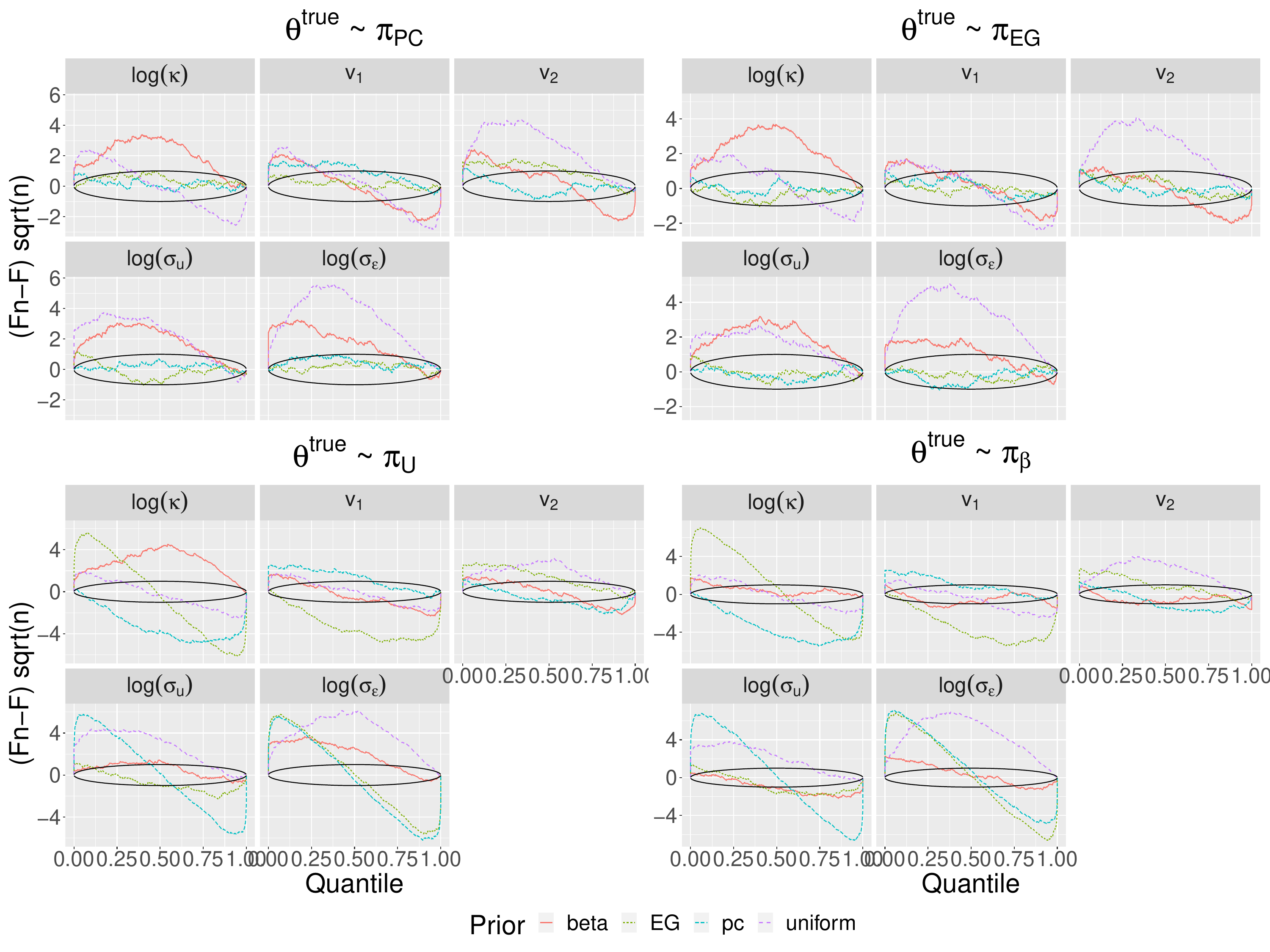}
    \caption{Plot of the empirical difference against $F(a_i)$ together with a 95\% confidence interval}
    \label{fig: KS bridge}
\end{figure}
In \Cref{fig: KL divergence}, we show the eCDF of the Kullback-Leibler divergence between the posterior and the Gaussian approximation to the posterior around the median. As can be seen from the plots, in all cases $\pi_\rmm{PC}(\cdot |\bm{y})$ and $\pi_\rmm{EG}(\cdot |\bm{y})$ are closer to their respective Gaussian approximations than $\pi_\rmm{U}(\cdot |\bm{y})$ and $\pi_\beta(\cdot |\bm{y})$. This is because the posteriors using $\pi_{\rmm{U}},\pi_\beta$ are very much not Gaussian.
\begin{figure}[H]
    \centering
    \includegraphics[width=\textwidth]{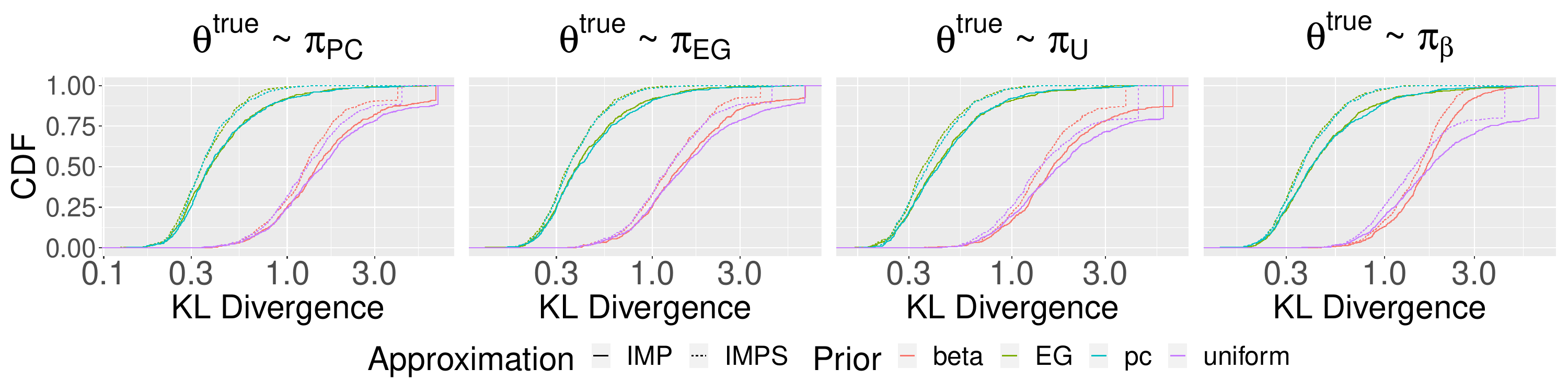}
    \caption{Kullback-Leibler divergence between the posterior and the Gaussian approximation to the posterior around the median.}
    \label{fig: KL divergence}
\end{figure}

To better compare PC and exponential-Gaussian priors, we also plot these separately. As can be seen, they behave quite similarly once more, showing that both formulations lead to practically identical penalization of complexity.
\begin{figure}[H]
    \centering
    \includegraphics[width=\textwidth]{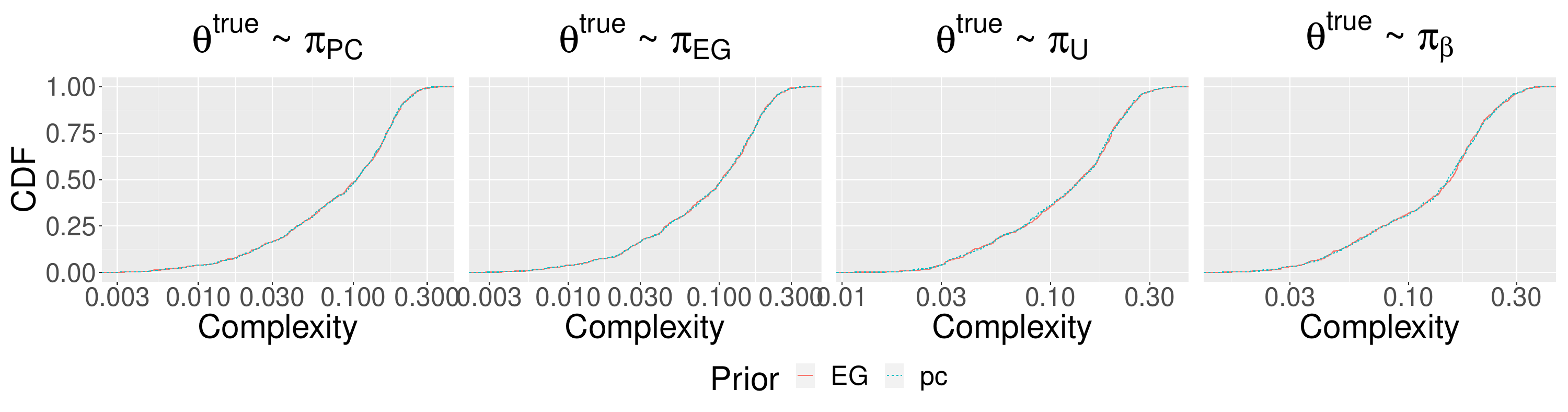}
    \caption{CDF of the posterior complexity $d(\kappa,\bm{v})$ for PC and not exponential-Gaussian priors}
    \label{fig: complexity12}
\end{figure}
\newpage
\section{Precipitation study details}\label{precipitation details appendix}
\subsection{Score definitions}\label{app:Score definitions}
The scores considered in \Cref{precipitation section} are
\begin{itemize}
  \item Given a distribution $\d F$ and an observation $x \in \R$ the \emph{squared error} (SE) is defined as
     \begin{align*}
       \mathrm{SE}(F,x) & := \left(x-\int_{-\infty}^\infty t\d F(t)\right)^2.
     \end{align*}
 The expression above defines a proper scoring rule (its value is minimized when $x$ follows the predictive distribution). However, it is not strictly proper (the minimum in $F$ is not unique).
  \item Given a CDF $F$ and an observation $x \in \R$, the \emph{continuous ranked probability score} (CRPS) is defined as \citep{gneiting2007strictly}
     \begin{align*}
       \rmm{CRPS}(F,x) & := \int_{\R} (F(t)-\bm{1}\{x \leq t\})^2 d t.
     \end{align*}
 The CRPS is a strictly proper scoring rule on distributions with finite expectations.
  \item Given a random variable $\bm{X}$ with mean $\bm{\mu }$ and precision $\bm{Q}$, the \emph{Dawid-Sebastiani score} (DSS) is defined as  \citep{dawid1999coherent}
     \begin{align*}
       \rmm{DSS}(\bm{X};\bm{x}) & := -\log(\norm{\bm{Q}} ) + (\bm{x} - \bm{\mu })^T\bm{Q}(\bm{x} - \bm{\mu }).
     \end{align*}
 The DSS is similar to a log-Gaussian density score and is a strictly proper scoring rule for Gaussian random variables. Still, it also defines a proper score for non-Gaussian random variables.
\end{itemize}

\subsection{Calculations}\label{calculations precipitation appendix}
In this section, we show how to calculate scores of \Cref{app:Score definitions}. For what follows, we first define the leave-one-out mean of $y_i$ having observed all locations except $y_i$
\begin{align}\label{posterior mean LOO prediction}
    m_{y_i|\bm{y}_{-i}} & :=  \int_{\R^5} \qty(  \int_{\R} y_i \pi (y_{i}|\bm{y_{-i}},\bm{\theta })\d y_{i}) \pi(\bm{\theta }| \bm{y}) \d \bm{\theta} = \int_{\R^5} m_{y_i|\bm{y}_{-i},\bm{\theta } }\pi (\bm{\theta }| \bm{y}) \d \bm{\theta }.
\end{align}
The expression of $m_{y_i|\bm{y}_{-i}}$ is approximately that of $ \E\qb{y_i|\bm{y}_{-i}}$, with the exception that the integral in \eqref{posterior mean LOO prediction} is against $\pi(\bm{\theta }| \bm{y})$ as opposed to $\pi(\bm{\theta }| \bm{y}_{-i})$. This is a common approach to calculating the LOO predictions as it avoids the need to approximate $\pi (\bm{\theta }| \bm{y}_{-i})$ for each $i$. The expression is justified by the fact that the posterior distribution of $\bm{\theta }$ does not depend much on any single observation. That is, $\pi (\bm{\theta }| \bm{y})\approx \pi (\bm{\theta }| \bm{y}_{-i})$. Similarly, we will need the posterior leave out one variance
\begin{align}\label{posterior variance LOO prediction}
    \sigma^2_{y_i|\bm{y}_{-i}} := \int_{\R^5} \sigma^2_{y_i|\bm{y}_{-i},\bm{\theta } }\pi (\bm{\theta }| \bm{y}) \d \bm{\theta },
\end{align}
where $\sigma^2_{y_i|\bm{y}_{-i},\bm{\theta }}$ is the posterior variance of $y_i$ having observed all locations except $y_i$. This will be used later in the computations.

Because of the independence of $\bm{u}, \bm{\beta}$ from $\bm{\theta }$  we have that
\begin{align*}
    \left. \bm{u}_{\bm{\beta }}\right| \bm{\theta } & \sim \Nn(\bm{0},\bm{Q}_{\bm{u}_{\bm{\beta }}}^{-1}), \quad \text{where} \quad  \bm{Q}_{\bm{u}_{\bm{\beta }}}:=\begin{bmatrix}
                                                                                                                                                                        \bm{Q}_{\bm{\beta }} & \bm{0}_{2 \times n} \\
                                                                                                                                                                        \\
                                                                                                                                                                        \bm{0}_{n \times 2}  & \bm{Q}_u
                                                                                                                                                                    \end{bmatrix}.
\end{align*}
We can then calculate $\wh{\bm{\theta }}$ to maximize the (unnormalized) log-posterior as in \eqref{log posterior}
\begin{align}\label{log posterior Norway}
    \tl{ \ell} (\bm{\theta}   |\bm{y}) = \ell (\bm{\theta}  )+\ell (\bm{u}_\beta |\bm{\theta}   )+\ell(\bm{y}|\bm{u}_\beta ,\bm{\theta} )- \ell (\bm{u}_\beta |\bm{\theta}  ,\bm{y}).
\end{align}
To calculate the RMSE, CRPS, and DSS, it is sufficient to calculate the posterior mean and variance of $y_i$ given $\bm{y}_{-i}$ and $\bm{\theta }$. We show how to do this efficiently.

Due to the independence of $\bm{u}$ and $\bm{\varepsilon }$ and of $\varepsilon _i$ and $\bm{\varepsilon }_{-i}$ knowing $\bm{\theta }$,  we deduce that $\bm{y}_{-i} = \bm{A}_{-i} \bm{u}+ \varepsilon_{-i}$ and $\varepsilon _i$ are independent knowing $\theta$. As a result,
\begin{aligneq}\label{y knowing traning}
    y_i | \bm{y}_{-i},\bm{\theta } & = (\bm{A}_i  \bm{u}) | \bm{y}_{-i},\bm{\theta } + \varepsilon _i | \bm{\theta } \sim \Nn (\bm{A}_i \bm{m}_{\bm{u}|\bm{y}_{-i},\bm{\theta }}, \bm{A}_i \bm{\Sigma }_{\bm{u}|\bm{y}_{-i},\bm{\theta }} \bm{A}_i^T + \sigma_{\bm{\varepsilon }}^2) \\
                                   & = \Nn (m_{y_i|\bm{y}_{-i},\bm{\theta }}, \sigma_{y_i|\bm{y}_{-i},\bm{\theta }}^2),
\end{aligneq}
where $\bm{A}_i \in \R^{1 \times N}$ is the $i$-th row of $\bm{A}$. Furthermore, we have that, see \eqref{posterior precision u}
\begin{aligneq}\label{posterior precision u training}
    \bm{Q}_{\bm{u}|\bm{y}_{-i}, \bm{\theta }} & = \bm{Q}_u + \sigma_\varepsilon ^{-2} \bm{A}_{-i}^T\bm{A}_{-i}= \bm{Q}_{\bm{u}|\bm{y}, \bm{\theta }} - \sigma_\varepsilon ^{-2} \bm{A}_{i}^T \bm{A}_i  \\
    \bm{m}_{\bm{u}|\bm{y}_{-i},\bm{\theta} }  & =\sigma_{\bm{\varepsilon }}^{-2} \bm{\Sigma }_{\bm{u} | \bm{y}_{-i},\bm{\theta}}\bm{A}_{-i}^T \bm{y}_{-i},
\end{aligneq}
where we wrote $\bm{A}_{-i} \in \R^{n-1 \times N}$ for the matrix $\bm{A}$ with the $i$-th row removed. To avoid calculating an inverse for each $i$, we can calculate  $\bm{\Sigma}_{u| \bm{y}, \bm{\theta }}$ once and then use the rank $1$ correction provided by the Sherman-Morrison formula to obtain
\begin{align}\label{sherman}
    \bm{\Sigma }_{\bm{u}|\bm{y}_{-i},\bm{\theta }} & = \bm{\Sigma }_{\bm{u}|\bm{y},\bm{\theta }} + \frac{\bm{\Sigma }_{\bm{u}|\bm{y},\bm{\theta }}\bm{A}_{i}^T\bm{A}_{i}
    \bm{\Sigma }_{\bm{u}|\bm{y},\bm{\theta }}}{\sigma_{\bm{\varepsilon }}^2 -\bm{A}_{i}\bm{\Sigma }_{\bm{u}|\bm{y},\bm{\theta }}\bm{A}_{i}^T}.
\end{align}

Using \eqref{sherman} in \eqref{y knowing traning} and writing $V_i :=\bm{A}_i \bm{\Sigma }_{\bm{u}|\bm{y},\bm{\theta }} \bm{A}_i^T$ and $q_{\bm{\varepsilon }}:= \sigma_{\bm{\varepsilon }}^{-2}$,
gives
\begin{align}\label{leave out variance}
    \sigma_{y_i|\bm{y}_{-i},\bm{\theta }}^2 & = V_i  + \frac{V_i^2}{\sigma_{\bm{\varepsilon }}^2 - V_i} + \sigma_{\bm{\varepsilon }}^2   = \frac{V_i}{1-q_{\bm{\varepsilon }}V_i}    +\sigma_{\bm{\varepsilon }}^2  =   \frac{\sigma_{\bm{\varepsilon }}^2 }{1-q_{\bm{\varepsilon }}V_i}.
\end{align}
Furthermore, for the mean, we have from \eqref{posterior precision u training} that
\begin{align}\label{start}
    m_{y_i|\bm{y}_{-i},\bm{\theta }} & =q_{\bm{\varepsilon }}  \bm{A}_i \bm{\Sigma }_{\bm{u} | \bm{y}_{-i},\bm{\theta}}\bm{A}_{-i}^T \bm{y}_{-i}= q_{\bm{\varepsilon }}  \bm{A}_i \bm{\Sigma }_{\bm{u} | \bm{y}_{-i},\bm{\theta}}\bm{A}^T \bm{y} - q_{\bm{\varepsilon }}  \bm{A}_i \bm{\Sigma }_{\bm{u} | \bm{y}_{-i},\bm{\theta}}\bm{A}_{i}^T {y}_i
\end{align}
Let us write
\begin{align*}
    \eta_i := \E\qb{\bm{A}_i \bm{u}| \bm{y},\bm{\theta }}= \bm{A}_i \bm{m}_{\bm{u}|\bm{y},\bm{\theta }}= q_{\bm{\varepsilon }}\bm{A}_i \bm{\Sigma }_{\bm{u}|\bm{y},\bm{\theta }}\bm{A}^T \bm{y},
\end{align*}
Then, using the rank $1$ correction in \eqref{sherman} we obtain for the first term in \eqref{start} that
\begin{aligneq}\label{start1}
    q_{\bm{\varepsilon }}\bm{A}_i \bm{\Sigma }_{\bm{u} | \bm{y}_{-i},\bm{\theta}}\bm{A}^T \bm{y} & =  q_{\bm{\varepsilon }}\bm{A}_i \bm{\Sigma }_{\bm{u} | \bm{y},\bm{\theta}}\bm{A}^T \bm{y} +q_{\bm{\varepsilon }} \bm{A}_i\frac{\bm{\Sigma }_{\bm{u}|\bm{y},\bm{\theta }}\bm{A}_{i}^T\bm{A}_{i}
    \bm{\Sigma }_{\bm{u}|\bm{y},\bm{\theta }}}{\sigma_{\bm{\varepsilon }}^2 -\bm{A}_{i}\bm{\Sigma }_{\bm{u}|\bm{y},\bm{\theta }}\bm{A}_{i}^T}\bm{A}^T\bm{y}                                                                                                                              \\& = \eta_i +  \frac{V_i}{\sigma_{\bm{\varepsilon }}^2 -V_i}\eta_i= \frac{\eta_i }{1-q_{\bm{\varepsilon }}V_i}.
\end{aligneq}
Whereas, for the second term in \eqref{start}, we have that
\begin{aligneq}\label{start2}
    q_{\bm{\varepsilon }}\bm{A}_i \bm{\Sigma }_{\bm{u} | \bm{y}_{-i},\bm{\theta}}\bm{A}_{i}^T {y}_i & =  q_{\bm{\varepsilon }}\bm{A}_i \bm{\Sigma }_{\bm{u} | \bm{y},\bm{\theta}}\bm{A}_{i}^T {y}_i +q_{\bm{\varepsilon }} \bm{A}_i\frac{\bm{\Sigma }_{\bm{u}|\bm{y},\bm{\theta }}\bm{A}_{i}^T\bm{A}_{i}
    \bm{\Sigma }_{\bm{u}|\bm{y},\bm{\theta }}}{\sigma_{\bm{\varepsilon }}^2 -\bm{A}_{i}\bm{\Sigma }_{\bm{u}|\bm{y},\bm{\theta }}\bm{A}_{i}^T}\bm{A}_{i}^T{y}_i                                                               \\& = q_{\bm{\varepsilon }} V_i y_i + q_{\bm{\varepsilon }} \frac{V_i^2}{\sigma_{\bm{\varepsilon }}^2 - V_i}y_i = q_{\bm{\varepsilon }}  V_i y_i\qty(1+ q_{\bm{\varepsilon }} \frac{V_i}{1-q_{\bm{\varepsilon }}V_i})= \frac{q_{\bm{\varepsilon }}  V_i y_i}{1-q_{\bm{\varepsilon }}V_i}
\end{aligneq}
Using \eqref{start1} and \eqref{start2} in \eqref{start} we obtain
\begin{align}\label{leave out mean}
    m_{y_i|\bm{y}_{-i},\bm{\theta }} & = \frac{\eta_i }{1-q_{\bm{\varepsilon }}V_i} - \frac{q_{\bm{\varepsilon }}  V_i y_i}{1-q_{\bm{\varepsilon }}V_i} = \frac{\eta_i - q_{\bm{\varepsilon }}  V_i y_i}{1-q_{\bm{\varepsilon }}V_i}= y_i + \frac{\eta_i-y_i}{1- q_{\bm{\varepsilon }}V_i}.
\end{align}

The term $V_i$ can be calculated efficiently using a Takahashi recursion on the Cholesky factor of the posterior precision $\bm{Q}_{\bm{u}|\bm{y,\theta }}$ without the need to calculate a dense matrix inverse, as implemented by $\texttt{inla.qinv}$. To calculate $\eta_i$, we use a matrix-vector solve.

Next, since the expression of $\pi (\bm{\theta }| \bm{y}_{i})$ is known up to a normalizing constant, we can use importance sampling and Riemann integration together with \eqref{leave out variance}, \eqref{leave out mean} to calculate \eqref{posterior mean LOO prediction}.
To estimate the CRPS, we use importance sampling and the expression for the posterior predictive distribution \eqref{posterior mean LOO prediction}. We have
\begin{align*}
    F_i(t) := \int_{\R^5} \qty(\int_{-\infty}^t \pi(y_i | \bm{y}_{-i}, \bm{\theta })\d y_i)p(\bm{\theta }| \bm{y})\d \bm{\theta } \approx  \sum_{j=1}^J   \Phi \qty(\frac{t-m_{y_i|\bm{y}_{-i},\bm{\theta }_j}}{\sigma_{y_i|\bm{y}_{-i},\bm{\theta_j }}}) w_j,
\end{align*}
where $\Phi$ is the cumulative distribution function of a standard Gaussian variable, and $\bm{\theta }_j,w_j$ are the importance samples and smoothed self-normalized weights.
Using this expression and the fact that there is an exact expression for the CRPS of a Gaussian mixture \citep{grimit2006continuous}, we obtain the CRPS.
\subsection{Spatial analysis of the scores}\label{spatial analysis precipitation appendix}
We hypothesize that some models may perform better than others in different spatial areas. To confirm this hypothesis, we begin by plotting the difference in the scores between the isotropic and anisotropic PC models. We plot the difference of the scores for the RMSE, CRPS, and DSS in \Cref{fig:diff_scores}.

\begin{figure}[H]
    \centering
    \begin{subfigure}[b]{0.32\linewidth}
        \centering
        \includegraphics[width=\linewidth]{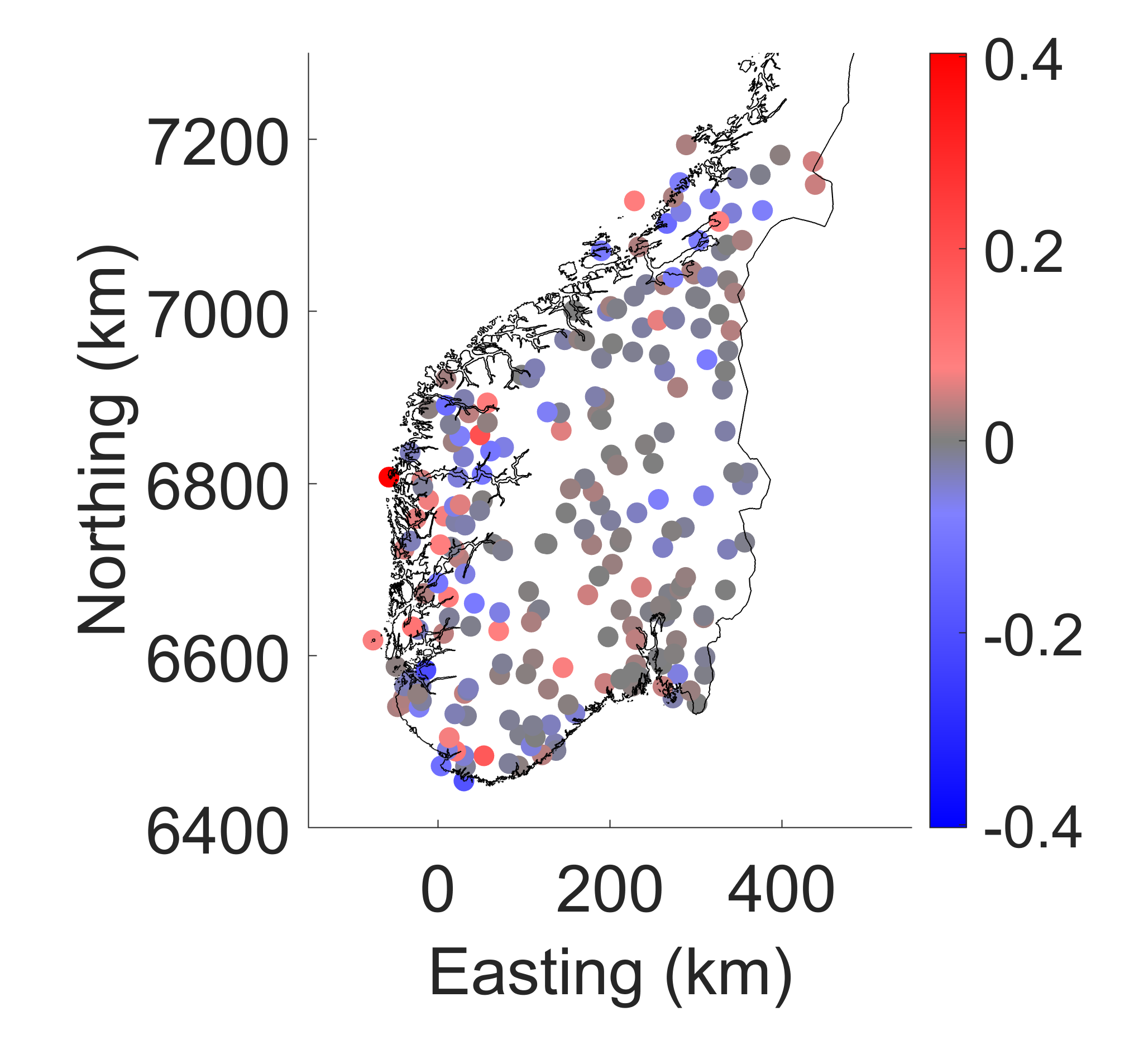}
        \caption{RMSE}
        \label{fig:rmse}
    \end{subfigure}
    \hfill
    \begin{subfigure}[b]{0.32\linewidth}
        \centering
        \includegraphics[width=\linewidth]{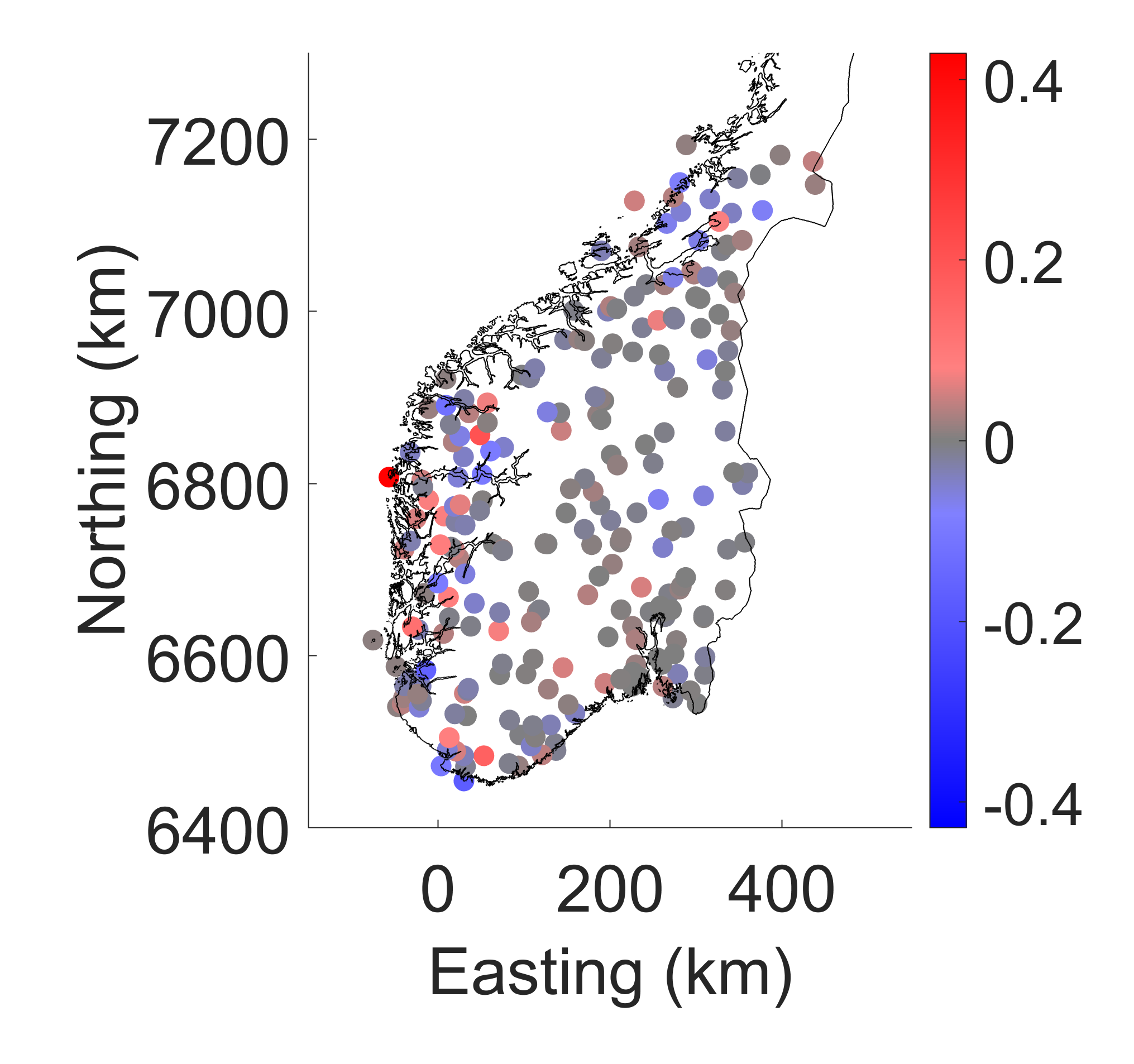}
        \caption{CRPS}
        \label{fig:crps}
    \end{subfigure}
    \hfill
    \begin{subfigure}[b]{0.32\linewidth}
        \centering
        \includegraphics[width=\linewidth]{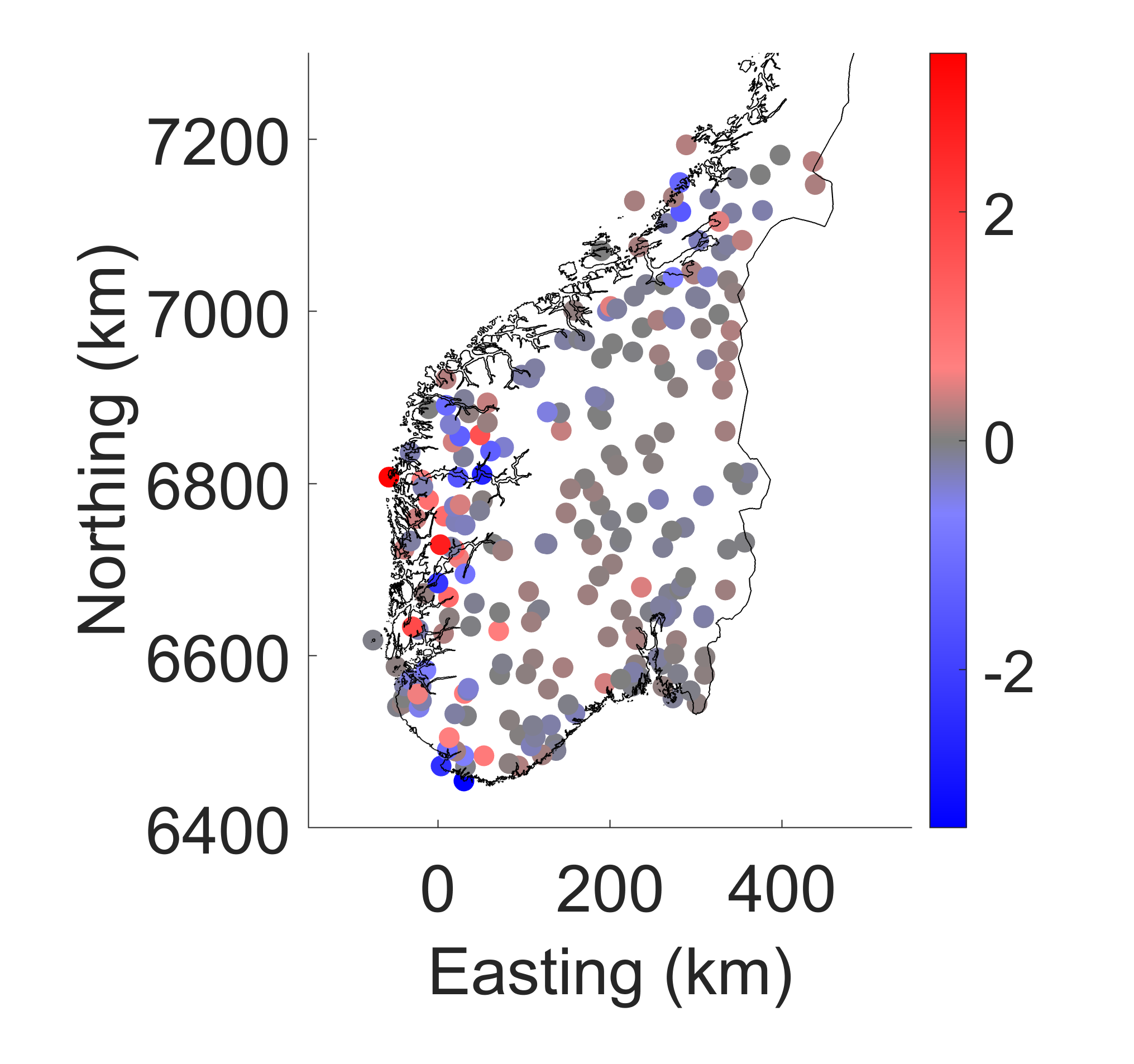}
        \caption{DSS}
        \label{fig:dss}
    \end{subfigure}
    \caption{Difference of scores between the isotropic and anisotropic PC models}
    \label{fig:diff_scores}
\end{figure}

We also do the same for the difference in the scores between the isotropic and anisotropic EG models in \Cref{fig:diff_scores2}. The results are very similar. We observe that the anisotropic models perform better towards the western coast and worse in the remaining region. This is to be expected, as the correlation structure differs between regions with high elevation and those with lower elevation, indicating that a non-stationary model may be more appropriate, as the anisotropy is spatially varying
\begin{figure}[H]
    \centering
    \begin{subfigure}[b]{0.32\linewidth}
        \centering
        \includegraphics[width=\linewidth]{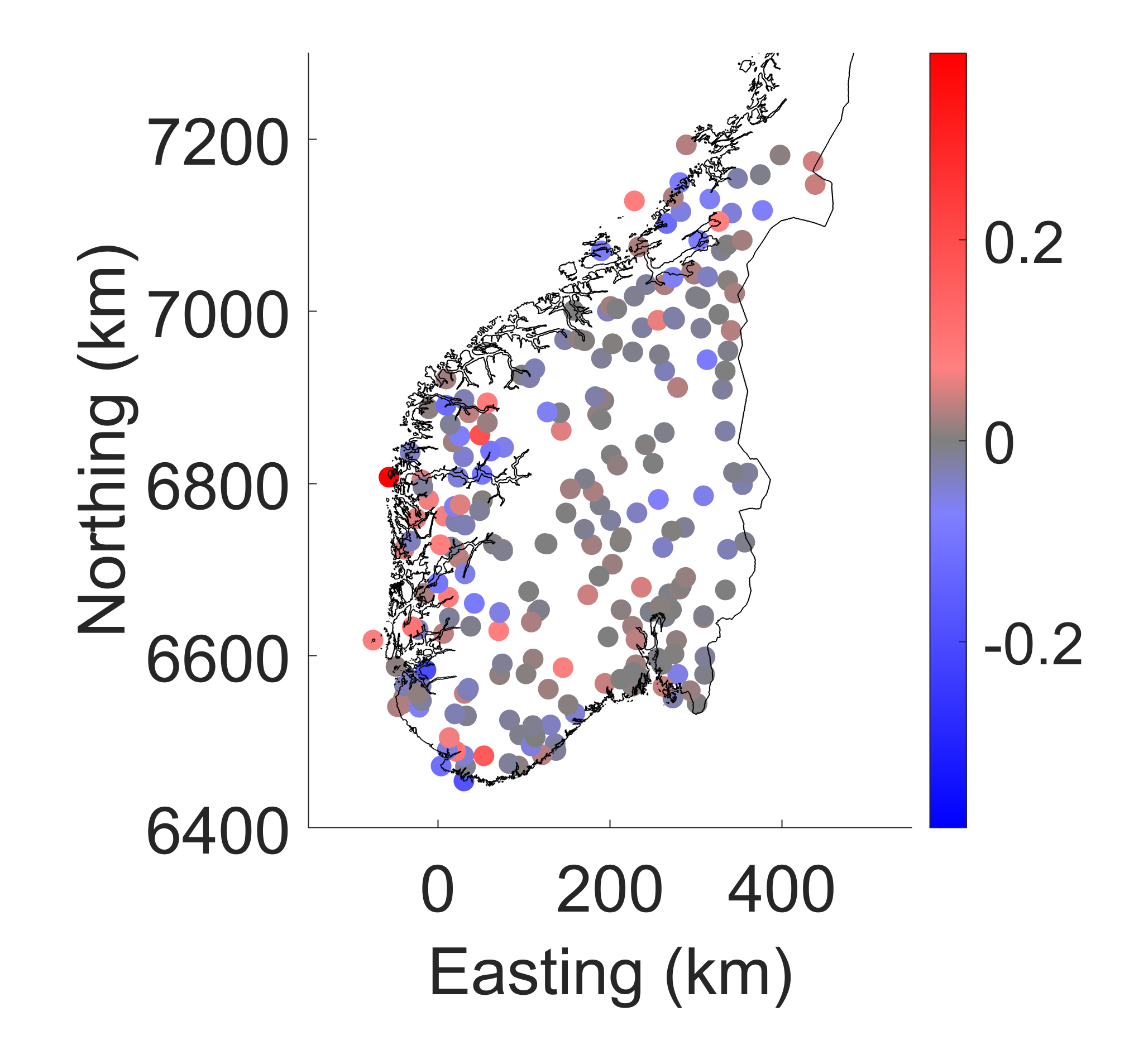}
        \caption{RMSE}
        \label{fig:rmse2}
    \end{subfigure}
    \begin{subfigure}[b]{0.32\linewidth}
        \centering
        \includegraphics[width=\linewidth]{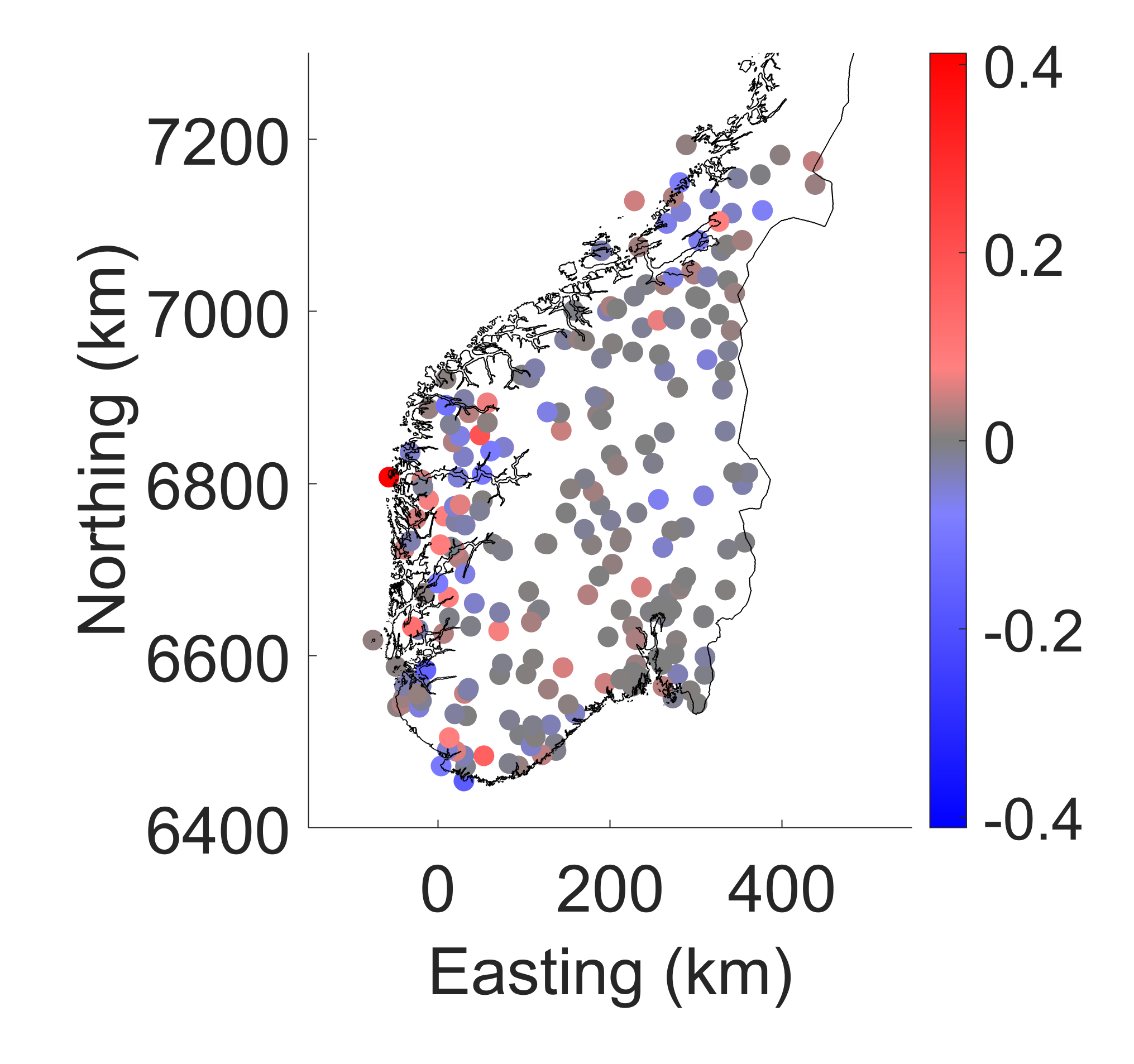}
        \caption{CRPS}
        \label{fig:crps2}
    \end{subfigure}
    \begin{subfigure}[b]{0.32\linewidth}
        \centering
        \includegraphics[width=\linewidth]{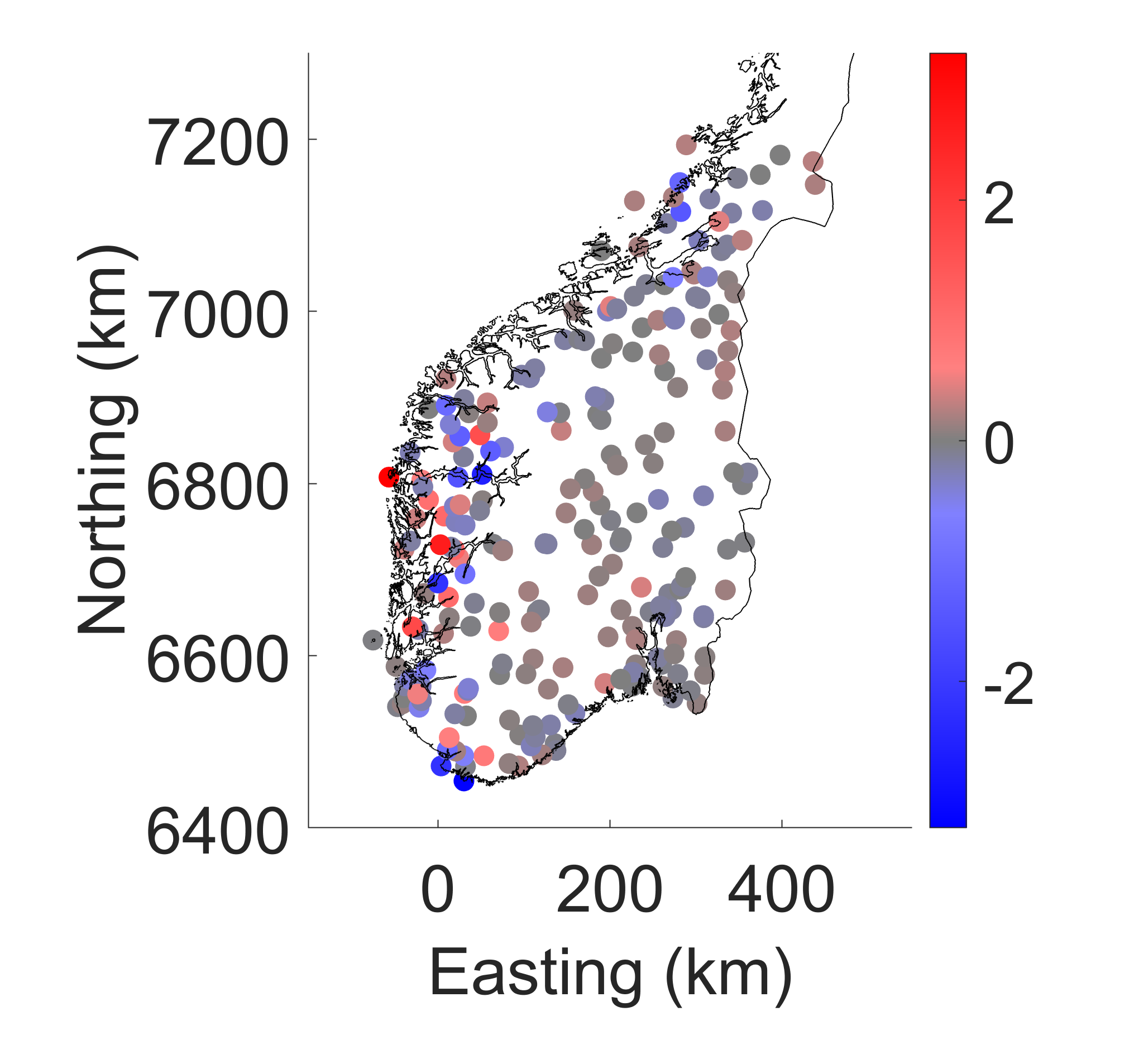}
        \caption{DSS}
        \label{fig:dss2}
    \end{subfigure}
    \caption{Difference of scores between the isotropic PC and anisotropic EG models}
    \label{fig:diff_scores2}
\end{figure}

\subsection{Precipitation simulation study}\label{simulation study precipitation appendix}
In this section, we repeat the structure of the simulations in \Cref{precipitation section}, but now, to investigate how the quality of the prediction changes with increased data, we use synthetic data. To do so, we simulate precipitation data on 4000 uniformly distributed locations in Norway from the model in \eqref{height model} with parameters
\begin{align*}
    \wh{\rho} & = 201, & \wh{\bm{v}} & = (-0.45,0.03), & \wh{\sigma}_u & =0.63, & \wh{\sigma}_{\bm{\varepsilon}} & = 0.14, & \beta_0 & = 0.96, & \beta_1 & =0.67  \\
    \wh{\rho} & = 132, & \wh{\bm{v}} & = (0.00,0.00),  & \wh{\sigma}_u & =0.65, & \wh{\sigma}_{\bm{\varepsilon}} & = 0.13, & \beta_0 & = 0.93, & \beta_1 & =0.66.
\end{align*}
These correspond to data generated from anisotropic and isotropic models, respectively, using the MAP obtained for the precipitation data in \Cref{results precip} with anisotropic and isotropic PC priors respectively. The data is shown in \Cref{fig: simulated data}.
\begin{figure}[H]
    \centering
    \begin{subfigure}[b]{0.45\linewidth}
        \centering
        \includegraphics[width=\linewidth]{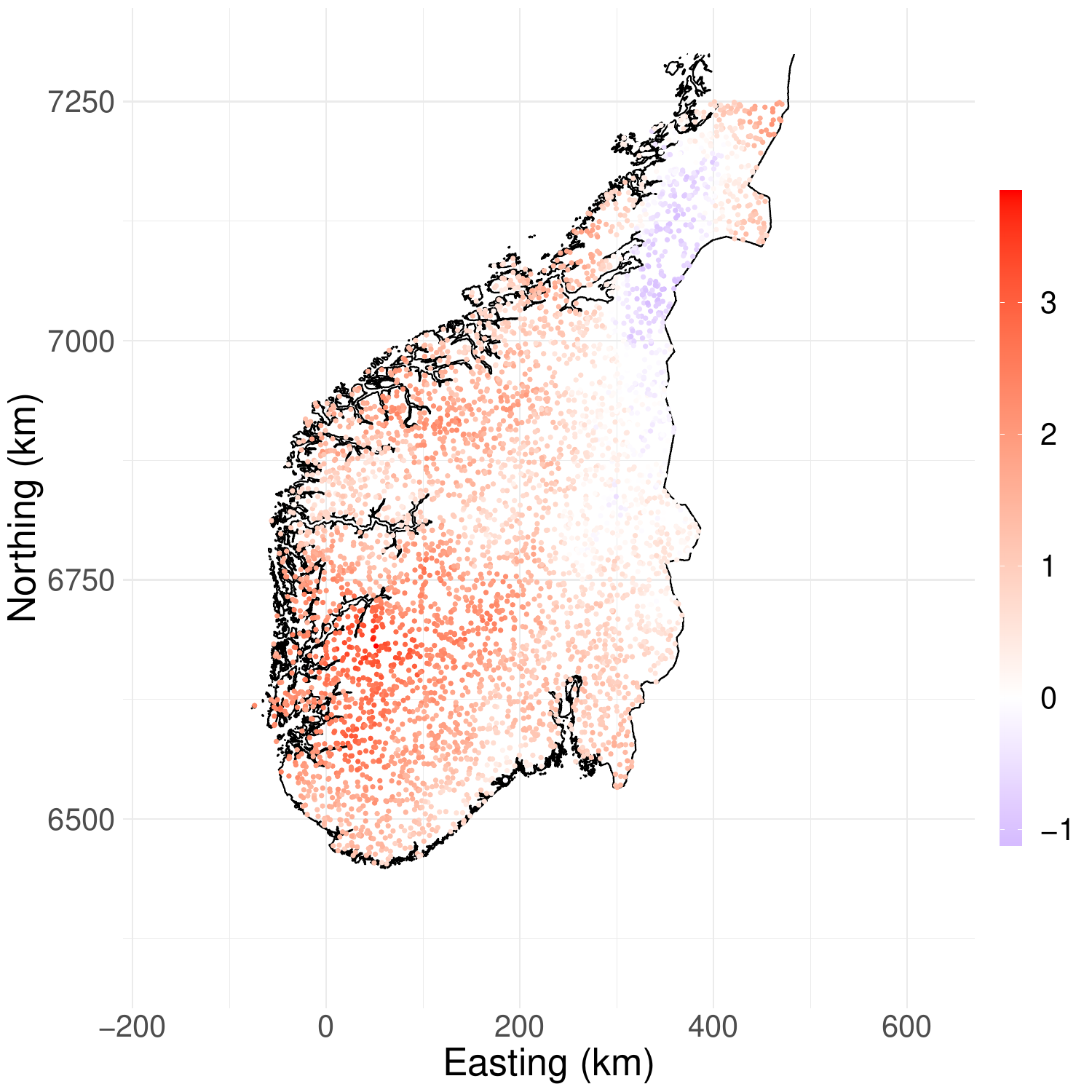}
        \caption{Anisotropic data}
    \end{subfigure}
    \begin{subfigure}[b]{0.45\linewidth}
        \centering
        \includegraphics[width=\linewidth]{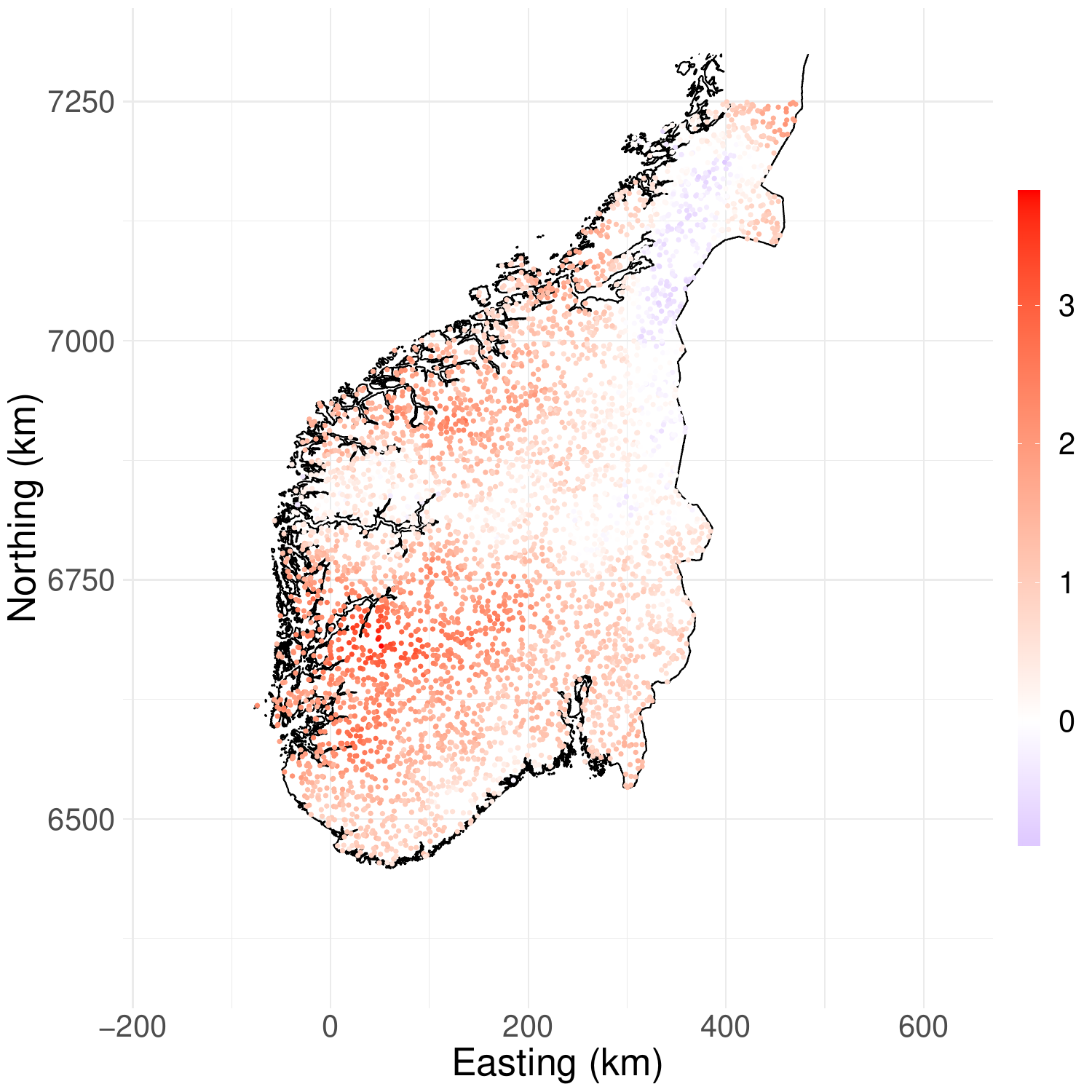}
        \caption{Isotropic data}
    \end{subfigure}
    \caption{Precipitation data simulated using model \eqref{height model} using an anisotropic and isotropic field $u$ respectively}
    \label{fig: simulated data}
\end{figure}

We then fit the model to the data and calculate the RMSE, CRPS, and DSS scores. We repeat this process for $n_y=25,50,100,150,200,400, 600,800,1000$ observations sampled uniformly from the simulated data. We repeat this process $100$ times for each $n_y$ and each prior. The results are shown in \Cref{fig: Simulation_images-LO_scores-pdf}. As can be seen from the plots, the anisotropic model outperforms the isotropic model for smaller datasets. However, as the number of observations increases, the scores become almost equal. The difference between the anisotropic PC and EG models is less pronounced than in the previous section.
\begin{figure}[H]
    \centering
    \begin{subfigure}[b]{0.48\linewidth}
        \centering
        \includegraphics[width=\linewidth]{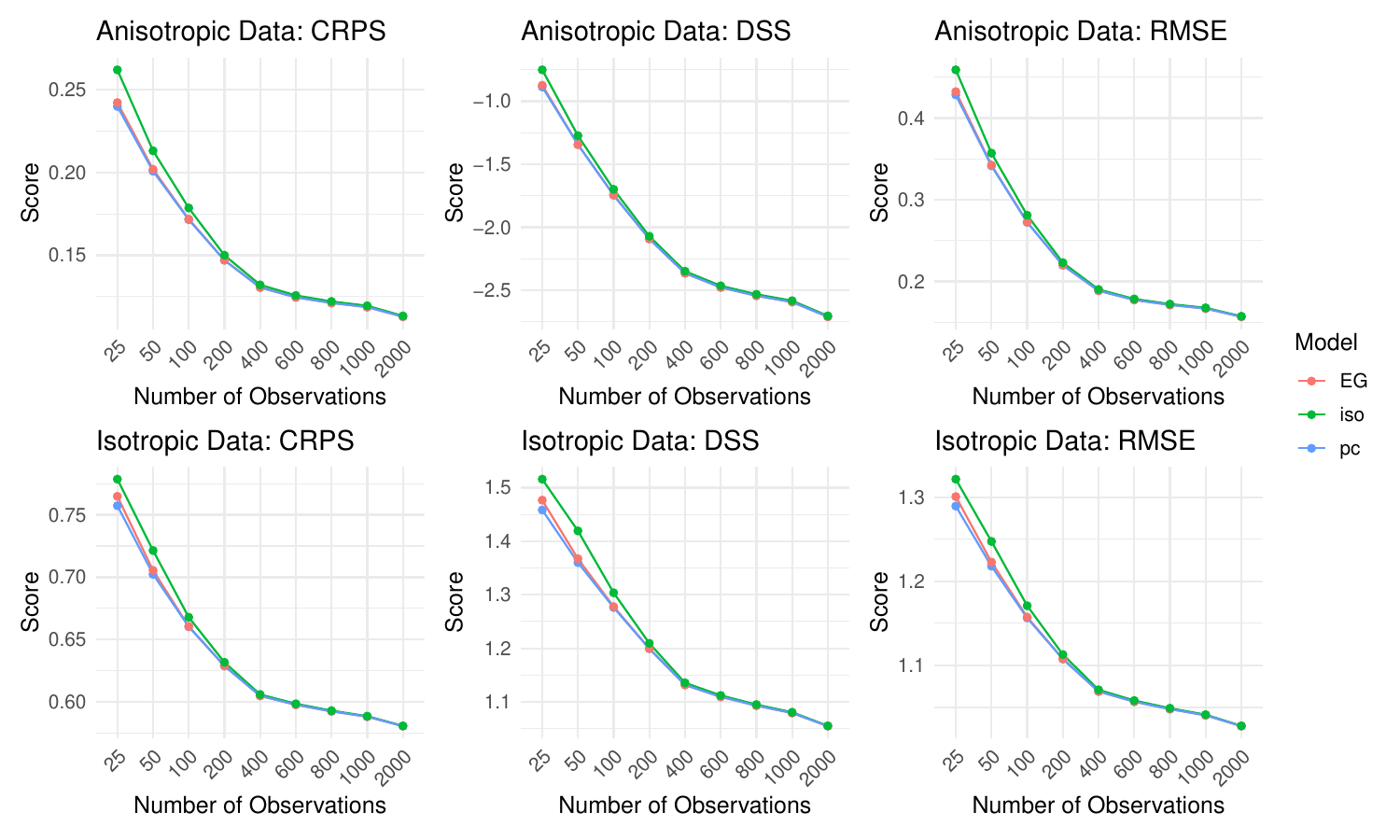}
        \caption{Scores}
        \label{fig:scores_combined}
    \end{subfigure}
    \hfill
    \begin{subfigure}[b]{0.48\linewidth}
        \centering
        \includegraphics[width=\linewidth]{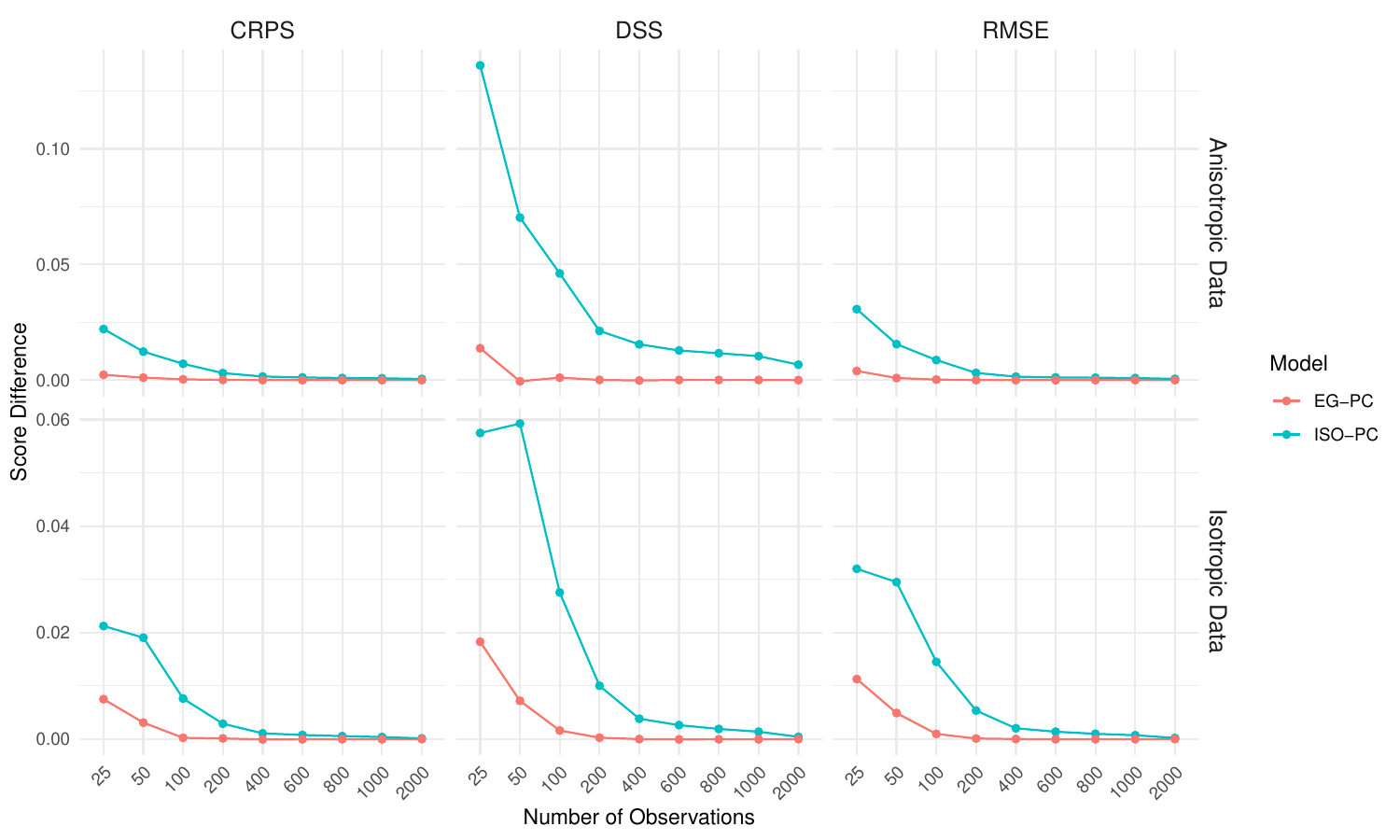}
        \caption{Difference of scores}
        \label{fig:scores_diff_combined}
    \end{subfigure}
    \caption{Comparison of the scores of the model for each prior for anisotropic and isotropic data, as the number of observations increases}
    \label{fig:diff_scores_sim}
\end{figure}

Finally, we also plot the interval score for the parameters $(\log(\kappa ), v_1, v_2, \log(\sigma _u))$. Given a credible interval $(L_F,U_F)$ with confidence level $\alpha$ , the interval score is defined by
\begin{align*}
    S_{\text {INT }}(F, y)=U_F-L_F+\frac{2}{\alpha}\left(L_F-y\right) \mathbb{I}\left(y<L_F\right)+\frac{2}{\alpha}\left(y-U_F\right) \mathbb{I}\left(y>U_F\right).
\end{align*}
The interval score is a proper scoring rule consistent for equal-tail error probability intervals: $S(F, G)$ is minimized for the narrowest $P I$ that has expected coverage $1-\alpha$. The results are shown in \Cref{fig: Simulation_images-CI_scores}. As can be seen from the plots, the interval score for the parameters $(\log(\kappa ),  \sigma _u)$ is generally better for the anisotropic model. This is especially pronounced for $\log(\kappa )$ when there is less data. The interval scores for $v_1$ and $v_2$ are not shown for the isotropic model, as these parameters are not estimated in the isotropic model.
\begin{figure}[H]
    \centering
    \includegraphics[width=\textwidth]{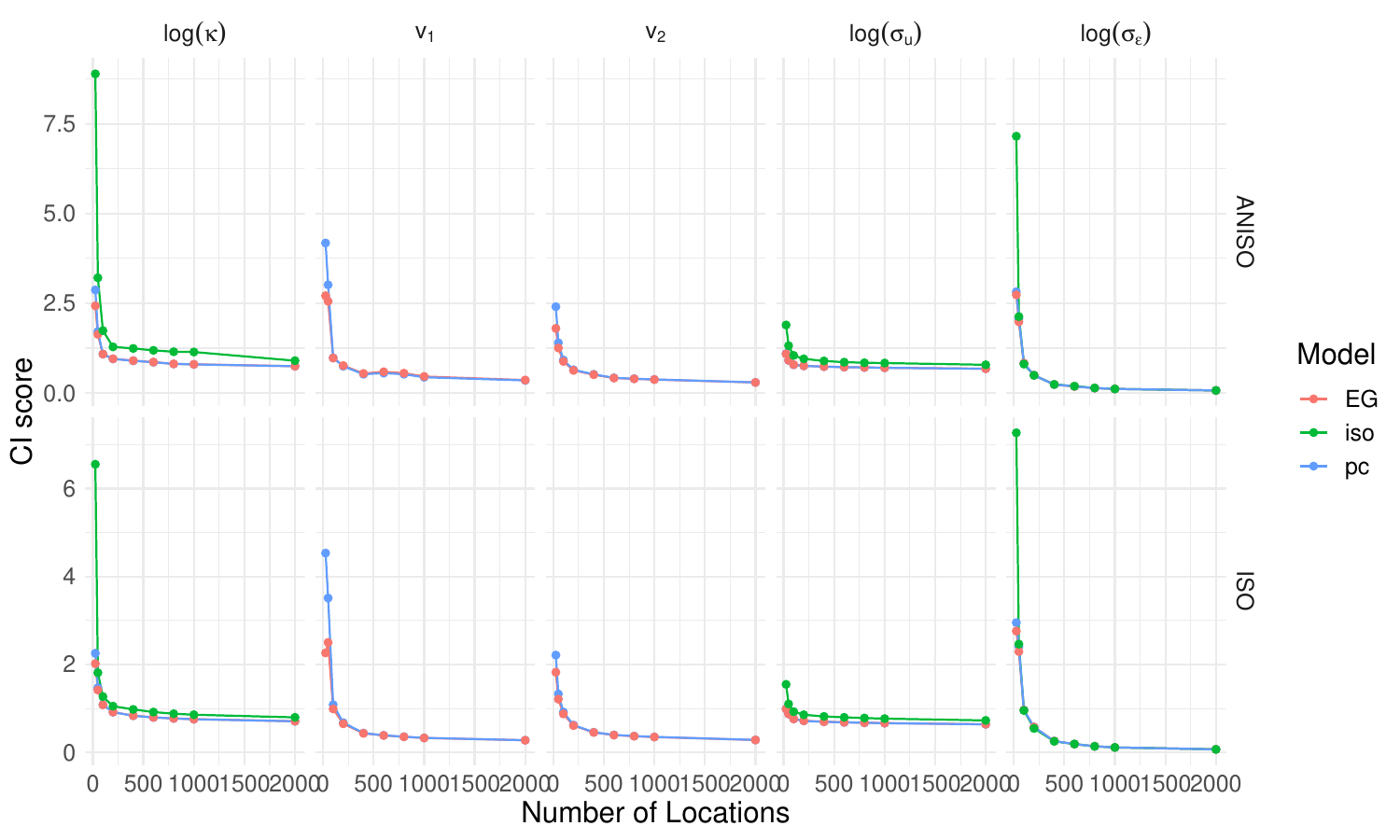}
    \caption{From top to bottom, anisotropic and isotropic data are observed. From left to right, the interval score for the parameters $(\log(\kappa ), v_1, v_2, \log(\sigma _u),\log(\sigma _\epsilon ))$ with a varying number of observations is shown.}
    \label{fig: Simulation_images-CI_scores}
\end{figure}

\section*{Code and Data Availability}
The code and data needed to reproduce all simulations and figures in this paper are publicly available at \url{https://github.com/LiamLlamazares/anisotropic-pc-priors-jasa} (release \texttt{v1.0-jasa}) and archived with a permanent DOI at \url{https://doi.org/10.5281/zenodo.19947749}. The repository pins the version of the \texttt{fmesher} package used in the manuscript; see the \texttt{README.md} for installation and reproduction instructions. The Norwegian precipitation dataset analysed in \Cref{precipitation section} is included in the repository.

\section*{Acknowledgments}
The authors thank the reviewers and the editor for their constructive feedback, which substantially improved the manuscript.

\section*{Funding}
Liam Llamazares-Elias is supported by The Maxwell Institute Graduate School in Modelling, Analysis, and Computation, a Centre for Doctoral Training funded by the EPSRC (grant EP/S023291/1), the Scottish Funding Council, Heriot-Watt University and the University of Edinburgh.

\section*{Disclosure Statement}
The authors report there are no competing interests to declare.

\bibliography{biblio.bib}

\end{document}